\newtheorem{theorem}{Theorem}
\newtheorem{definition}{Definition}
\newtheorem{lemma}{Lemma}
\newtheorem{example}{Example}
\long\def\@makecaption#1#2{\ifx\@captype\@IEEEtablestring%
\footnotesize\begin{center}{\normalfont\footnotesize #1}\\
{\normalfont\footnotesize\scshape #2}\end{center}%
\@IEEEtablecaptionsepspace
\else
\@IEEEfigurecaptionsepspace
\setbox\@tempboxa\hbox{\normalfont\footnotesize {#1.}~~ #2}%
\ifdim \wd\@tempboxa >\hsize%
\setbox\@tempboxa\hbox{\normalfont\footnotesize {#1.}~~ }%
\parbox[t]{\hsize}{\normalfont\footnotesize \noindent\unhbox\@tempboxa#2}%
\else
\hbox to\hsize{\normalfont\footnotesize\hfil\box\@tempboxa\hfil}\fi\fi}
\mathchardef\ordinarycolon\mathcode`\:
\newcommand*\wthelper[2]{%
        \hbox{\dimen@\accentfontxheight#1%
                \accentfontxheight#11.25\dimen@
                $\m@th#1\hat{#2}$%
                \accentfontxheight#1\dimen@
        }%
}
\newcommand*\accentfontxheight[1]{%
        \fontdimen5\ifx#1\displaystyle
                \textfont
        \else\ifx#1\textstyle
                \textfont
        \else\ifx#1\scriptstyle
                \scriptfont
        \else
                \scriptscriptfont
        \fi\fi\fi3
}
\begin{document}

\title{Absorbing Set Analysis and Design of LDPC Codes from Transversal Designs over the AWGN Channel}

\author{Alexander~Gruner,~\IEEEmembership{Student Member,~IEEE}, and Michael~Huber,~\IEEEmembership{Member,~IEEE} %
\thanks{The work of A.~Gruner was supported within the Promotionsverbund `Kombinatorische Strukturen' (LGFG Baden-W\"urttemberg).
The work of M.~Huber was supported by the Deutsche Forschungsgemeinschaft (DFG) via a Heisenberg grant (Hu954/4) and a Heinz Maier-Leibnitz Prize grant (Hu954/5).}
\thanks{The authors are with the Wilhelm Schickard Institute for Computer Science, Eberhard Karls Universit\"at T\"ubingen, Sand~13,
D-72076 T\"ubingen, Germany (corresponding author's e-mail address: alexander.gruner@uni-tuebingen.de).}}%

\maketitle

\begin{abstract}
In this paper we construct low-density parity-check (LDPC) codes from transversal designs with low error-floors over the additive white Gaussian noise (AWGN) channel. The constructed codes are based on transversal designs that arise from sets of mutually orthogonal Latin squares (MOLS) with cyclic structure. 
For lowering the error-floors, our approach is twofold: First, we give an exhaustive classification of so-called absorbing sets that may occur in the factor graphs of the given codes. These purely combinatorial substructures are known to be the main cause of decoding errors in the error-floor region over the AWGN channel by decoding with the standard sum-product algorithm (SPA). Second, based on this classification, we exploit the specific structure of the presented codes to eliminate the most harmful absorbing sets and derive powerful constraints for the proper choice of code parameters in order to obtain codes with an optimized error-floor performance.
\end{abstract}

\begin{IEEEkeywords}
Low-density parity-check (LDPC) code, additive white gaussian noise (AWGN) channel, absorbing set, transversal design (TD), mutually orthogonal latin squares (MOLS).
\end{IEEEkeywords}

\IEEEpeerreviewmaketitle

\section{Introduction}

\IEEEPARstart{L}{DPC} codes are linear block codes that potentially achieve the limit of Shannon's famous coding theorem and thus reveal excellent error-correcting properties under iterative decoding. An important and challenging task for designing LDPC codes is to lower the so-called \emph{error-floors}. This phenomenon is a significant flattening of the \emph{bit-error-rate} (BER) curve beyond a certain \emph{signal-to-noise-ratio} (SNR). It has been discovered that error-floors are caused by special substructures in the code's \emph{factor (or Tanner) graph} that act as internal states in which the iterative decoder can be trapped. Richardson \cite{Richardson2003} introduced the notion of \emph{trapping sets} to describe such internal states for iterative decoders. Depending on the channel and the iterative decoding algorithm, trapping sets have quite different characteristics. 
Over the \emph{binary erasure channel} (BEC), trapping sets have a purely combinatorial character and are known as \emph{stopping sets} (e.g. \cite{Di02,Ka03,SchwVar06}), which completely determine the decoding performance over this channel \cite{Di02}. For more complex non-erasure channels such as the AWGN channel, trapping sets have a more subtle nature and can not easily be described by a simple combinatorial notion such as stopping sets. However, a subclass of the occurring trapping sets over the AWGN channel with standard SPA decoding can be described by combinatorial objects called \emph{(fully) absorbing sets} which has been introduced in \cite{Zhang2006} as special subgraphs of a code's factor graph. It has been demonstrated by extensive hardware simulations \cite{Zhang2006,Zhang2009} that these entities are the main contributors to the error-floors over the AWGN channel under SPA decoding.
It is therefore an important step to identify the dominant absorbing sets that may occur in the factor graph of an LDPC code and to eliminate the harmful ones in order to improve the decoding performance in the error-floor region. 
Recent papers have proposed methods to characterize and improve the absorbing set spectrum of certain LDPC codes \cite{DolWang2010,Dolecek2010,Wang2013}. 
For instance, the work of Dolecek et al. \cite{Dolecek2010} provides an extensive analysis of the absorbing sets occurring in a family of array-based LDPC codes.
We also emphasize \cite{Wang2013} which presents a powerful approach to eliminate dominant absorbing sets in a wide class of circulant-based LDPC codes.

In the present paper, we exploit a special class of transversial designs that arise from cyclic-structured MOLS in order to design LDPC codes with low error-floors over the AWGN channel. These codes provide a simple setting to investigate and eliminate harmful absorbing sets in a closed algebraic form.
In \cite{GrunHub2013} we have demonstrated that this code family can also be utilized to generate LDPC codes with excellent decoding performances over the BEC by eliminating the smallest stopping sets. 
Moreover, we have shown that these codes possess quasi-cyclic structure and thus can be encoded with linear complexity via simple feedback shift registers \cite{TowWel67}. Notice that codes based on transversal designs has been first considered and investigated in \cite{JohnWell2004} and \cite{JohnsonDiss} in terms of \emph{partial geometries}.
It is worth noting here that absorbing sets are stable under bit-flipping decoding and thus also greatly contribute to the decoding failures over the binary symmetric channel (BSC). Therefore, our approach should also produce excellent LDPC codes over the BSC.

The paper is organized as follows: In Section~\ref{preliminaries}, we give a summarization of the theoretical concepts that are important for our purposes. 
In Section~\ref{section:classification}, we elaborate a classification of the smallest absorbing set candidates that typically occur in the factor graph of LDPC codes based on transversal designs. 
In Section~\ref{codes_on_simple_structured_MOLS}, we thoroughly describe the construction of LDPC codes that arise from sets of cyclic-structured MOLS and investigate the properties of these codes. 
Based on this class of codes and the absorbing set classification of Section~\ref{section:classification}, we develop a method to eliminate harmful absorbing sets in Section~\ref{section:elimination} and present the main results in Section~\ref{section:main_results}. In Section~\ref{simulations}, we demonstrate the strength of our elimination technique by extensive simulations and conclude the paper in Section~\ref{conclusion}.

\section{Preliminaries}\label{preliminaries}

\subsection{Latin Squares}

A \emph{Latin square} $L$ of order $n$ is an array of $n\times n$ cells, where each row and each column contains every symbol of an $n$-set $S$ exactly once \cite{crc_handbook}. Let $L[x,y]$ denote the symbol at row $x\in X$ and column $y\in Y$, where $X$ and $Y$ are $n$-sets indexing the rows and columns of $L$, respectively. Two Latin squares $L_1$ and $L_2$ of order $n$ are \emph{orthogonal}, if they share a common row and column set $X$ and $Y$, respectively, and if the ordered pairs $(L_1[x,y],L_2[x,y])$ are unique for all $(x,y) \in X \times Y$. In other words, there can not be two cell positions $[x_1,y_1]$ and $[x_2,y_2]$ such that $L_1[x_1,y_1]=L_1[x_2,y_2]$ and $L_2[x_1,y_1]=L_2[x_2,y_2]$. A set of Latin squares $L_1,...,L_m$ is called \emph{mutually orthogonal}, if for every $1\leq i<j\leq m$, $L_i$ and $L_j$ are orthogonal. These are also referred to as \emph{MOLS}, \emph{mutually orthogonal Latin squares}.

\subsection{Transversal Designs}\label{TDs}

A \emph{transversal design} TD$(k,n)$ of order (or group size) $n$ and block size $k$ is a triple $(\CMcal{P},\CMcal{G},\CMcal{B})$, where

\begin{enumerate}[label=(\arabic*),leftmargin=22pt]
	\item $\CMcal{P}$ is a set of $kn$ points.
	\item $\CMcal{G}$ is a partition of $\CMcal{P}$ into $k$ classes of size $n$, called \emph{groups}.
	\item $\CMcal{B}$ is a collection of $k$-subsets of $\CMcal{P}$, called \emph{blocks}.
	\item Every unordered pair of points from $\CMcal{P}$ is contained either in exactly one group or in exactly one block (cf.  \cite{crc_handbook}).
\end{enumerate}

It follows from (1)-(4) that any point of $\CMcal{P}$ occurs in exactly $n$ blocks and that $|\CMcal{B}|=n^2$. Furthermore, axiom (4) implies that every block of $\CMcal{B}$ consists of exactly one point per group.

\begin{theorem}\label{equiv_MOLS_TDs}
For $k\ge 3$, the existence of a set of $m:= k-2$ mutually orthogonal Latin squares (MOLS) of order $n$ is equivalent to the existence of a TD$(k,n)$ \cite{crc_handbook, BosShri90, Wilson74}.
\end{theorem}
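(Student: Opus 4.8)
The plan is to prove the equivalence by exhibiting two explicit, mutually inverse constructions, one turning a set of $m=k-2$ MOLS into a TD$(k,n)$ and the other recovering the MOLS from a TD$(k,n)$. Throughout I fix a common row index set $X$, a column index set $Y$, and a symbol set $S$, each of size $n$, and I realize the $k$ groups as tagged copies of these $n$-sets so that the point set $\CMcal{P}$ is a disjoint union of $k$ groups of size $n$, as demanded by axioms (1)--(2).

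For the forward direction, given MOLS $L_1,\dots,L_m$ I would take as groups one copy of $X$, one copy of $Y$, and one copy of $S$ for each square $L_i$, yielding exactly $2+m=k$ groups. For every cell position $(x,y)\in X\times Y$ I define the block
$$B_{x,y}=\{x,\,y,\,L_1[x,y],\,\dots,\,L_m[x,y]\},$$
taken in the respective tagged groups; this gives $n^2$ blocks each meeting every group in exactly one point, matching the counts recorded after the definition. The work is then to verify axiom (4): two points in a common group are never jointly in a block (a block meets each group once), while two points from distinct groups must lie in a unique common block. The row--column, row--symbol, and column--symbol cases follow directly from the Latin-square property (each symbol occurs once per row and once per column), and the symbol--symbol case --- a point of $S$ coming from $L_i$ together with a point of $S$ coming from $L_j$, $i\neq j$ --- is exactly where orthogonality enters: the map $(x,y)\mapsto(L_i[x,y],L_j[x,y])$ is injective, hence a bijection onto $S\times S$, so the prescribed pair of symbols is realized in precisely one cell, that is, in one block.

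For the reverse direction, given a TD$(k,n)$ I would single out two groups as the row and column groups and treat the remaining $m=k-2$ groups as symbol groups, identifying each group with an $n$-set. Since every pair of points from different groups lies in a unique block, each $(x,y)$ in the row--column product determines a unique block $B_{x,y}$, and I set $L_i[x,y]$ equal to the point of $B_{x,y}$ in the $i$-th symbol group. To see that each $L_i$ is Latin I would argue by contradiction: a repeated symbol in a fixed row (or column) would force two distinct blocks to share two points, violating axiom (4). Orthogonality of $L_i$ and $L_j$ follows by the same mechanism --- a repeated ordered pair of symbols would place two points, one from each symbol group, into two distinct blocks, again contradicting axiom (4).

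I expect the main obstacle to be the careful accounting in the symbol--symbol case, where the combinatorial content of axiom (4) has to be matched precisely against the definition of orthogonality in both directions; the remaining cases are essentially bookkeeping once the tagged-group setup is fixed. A secondary point requiring attention is the hypothesis $k\ge 3$, which guarantees $m=k-2\ge 1$, so that at least one symbol group --- equivalently, one Latin square --- is present and the correspondence is non-vacuous.
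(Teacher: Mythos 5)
Your proposal is correct and uses essentially the same construction as the paper's (outlined) proof: the groups are tagged copies of the row, column, and symbol sets, and the blocks are indexed by the cells $(x,y)$ via $\{x,y,L_1[x,y],\dots,L_m[x,y]\}$, with the reverse direction recovering each $L_i$ from the unique block through a given row--column pair. The paper deliberately only sketches this known correspondence, whereas you supply the full verification of axiom (4) (including the symbol--symbol case via orthogonality) --- the details are right, but the approach is the same.
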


\begin{proof}
We will outline the proof of this known result, since it is important for the understanding of our paper. 
Let $L_1,\hdots,L_m$ be $m$ MOLS with symbol sets $S_1,\hdots,S_m$, and with common row and column sets $X$ and $Y$, respectively. We may assume that the sets $X,Y,S_1,\hdots,S_m$ are pairwise disjoint, which can easily be achieved by renaming the elements. Then we obtain a TD$(k,n)$ with points $\CMcal{P}=\{X \cup Y \cup S_1 \cup \hdots \cup S_m\}$, groups $\CMcal{G}=\{X,Y,S_1,\hdots,S_m\}$ and blocks $\CMcal{B}=\{\{x, y, L_1[x,y], \hdots, L_{m}[x,y]\}: (x,y)\in X\times Y\}$. This process can be reversed to recover a set of $m=k-2$ MOLS from a TD$(k,n)$ for $k\geq 3$.
\end{proof}

A transversal design, denoted by $\CMcal{D}$, can be described by a binary $|\CMcal{P}|\times|\CMcal{B}|$ \emph{incidence matrix} $\CMcal{N}(\CMcal{D})$ with rows indexed by the points of $\CMcal{P}$, columns indexed by the blocks of $\CMcal{B}$, and
\[\CMcal{N}(\CMcal{D})_{ij}=\begin{cases}
  1,  & \text{if the $i$-th point is in the $j$-th block}\\
  0, & \text{otherwise}.
\end{cases}\]

\begin{example}
Fig.~\ref{fig:MOLS} depicts the incidence matrix of the transversal design TD$(4,5)$ which is equivalent to the orthogonal Latin squares given in the same figure by using the correspondence detailed in the proof of Theorem~\ref{equiv_MOLS_TDs}.
\end{example}

\begin{figure}[t!]
\renewcommand{\arraystretch}{1.2}
\subfloat{
		\scalebox{0.97}{$
		\begin{array}[b]{|ccccc|}
		\hline
		0&1&2&3&4\\
		1&2&3&4&0\\
		2&3&4&0&1\\
		3&4&0&1&2\\
		4&0&1&2&3\\
		\hline
		\end{array}
		$}
}
\subfloat{
		\scalebox{0.97}{$
		\begin{array}[b]{|ccccc|}
		\hline
		0&1&2&3&4\\
		2&3&4&0&1\\
		4&0&1&2&3\\
		1&2&3&4&0\\
		3&4&0&1&2\\
		\hline
		\end{array}
		$}
}
\subfloat{
	\includegraphics[scale = 0.54]{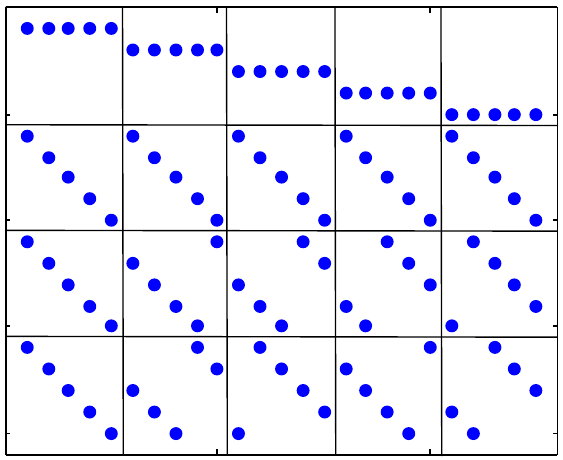}
}
\caption{Orthogonal Latin squares of order 5 and the incidence matrix of the resulting TD$(4,5)$. The black dots represent 1-entries in the incidence matrix.}
\label{fig:MOLS}
\end{figure}

\subsection{LDPC Codes from Transversal Designs}\label{codes_from_TDs}

Let $\CMcal{D}$ denote a TD$(k,n)$ with points $\CMcal{P}$ and blocks $\CMcal{B}$. 
The incidence matrix $\CMcal{N}(\CMcal{D})$ can directly be used as the parity-check matrix $H$ of a \textit{TD LDPC code}, such that the $|\CMcal{P}|=kn$ points correspond to the parity-check equations (rows of $H$) and the $|\CMcal{B}|=n^2$ blocks correspond to the code bits (columns of $H$). The resulting TD LDPC code has block length $N=n^2$, rate $R\geq(n-k)/n$, and a parity-check matrix of column weight $k$ and row weight $n$. The column weight $k$ corresponds to the block size of $\CMcal{D}$ and the row weight $n$ arises from the fact that every point is incident to exactly $n$ blocks. The associated factor (or Tanner) graph of $H$ is free of 4-cycles and has girth $g=6$ (e.g.~\cite{JohnWell2004}).
Note that these codes were first considered in \cite{JohnWell2004} as a subclass of codes from partial geometries.

\subsection{Set Systems}\label{set_systems}

A \emph{set system} $(\mathcal{P},\mathcal{B})$ consists of a point set $\mathcal{P}$ and a block set $\mathcal{B}$ which is a family of subsets of $\mathcal{P}$. For a consistent representation, we only allow set systems where $\mathcal{P}=\bigcup \mathcal{B}$ such that each point must be contained in at least one block. The \emph{size} of $(\mathcal{P},\mathcal{B})$ is given by the number of blocks of $\mathcal{B}$. The \emph{degree} of a point is the number of blocks containing the point and by $O(\mathcal{P})$ we mean the subset of points of $\mathcal{P}$ having odd degree.
Two set systems $(\mathcal{P},\mathcal{B})$ and $(\mathcal{P}',\mathcal{B}')$ are \emph{isomorphic}, if there exists a bijection between $\mathcal{P}$ and $\mathcal{P}'$ that maps $\mathcal{B}$ to $\mathcal{B}'$.
Given a transversal design $\CMcal{D}$ with points $\CMcal{P}$ and blocks $\CMcal{B}$, a set system $(\mathcal{P},\mathcal{B})$ is called a \emph{configuration} of $\CMcal{D}$ if $\mathcal{P}\subseteq\CMcal{P}$ and $\mathcal{B}\subseteq \CMcal{B}$.\footnote{The notion of \emph{configurations} in designs is frequently used in the literature about combinatorial designs and codes from designs (e.g. \cite{Ka03}).} 
Every set system can be equivalently described by a matrix which allows a convenient representation of the system. More precisely, a set system is equivalent to a $|\mathcal{P}|\times|\mathcal{B}|$ matrix where the entry at $(i,j)$ is one if the $i$-th point of $\mathcal{P}$ is in the $j$-th block of $\mathcal{B}$, else zero.

\subsection{New Concept: $t$-Colourings of Set Systems}

Let $Q:=\{1,\hdots,t\}$ be a set of $t$ colours. 
A \emph{\mbox{$t$-colouring}} of a set system $(\mathcal{P},\mathcal{B})$ is a mapping $\varphi: \mathcal{P}\to Q$ such that all points of a block of $\mathcal{B}$ are coloured with a different colour of $Q$. 
For each $c\in Q$, the set $\varphi^{-1}(c)=\{x\in \mathcal{P}:\varphi(x)=c\}$ is called a \emph{colour class}.
For any $t$-colouring $\varphi$, we use the ordered set $\hat{\varphi}=(\varphi^{-1}(1), \varphi^{-1}(2),\hdots, \varphi^{-1}(t))$ as a short notation of this colouring.
If a set system has at least one $t$-colouring, we call it \emph{$t$-colourable}. Clearly, there can be different $t$-colourings of a set system.
Let $(\mathcal{P},\mathcal{B})$ and $(\mathcal{P}',\mathcal{B}')$ be two set systems with $t$-colourings $\varphi$ and $\varphi'$, respectively. Then, $(\mathcal{P},\mathcal{B},\varphi)$ and $(\mathcal{P}',\mathcal{B}',\varphi')$ are isomorphic, if there are bijections $\sigma: \mathcal{P}\to \mathcal{P}'$ and $\pi: Q\to Q$ such that $\sigma(\mathcal{B})=\mathcal{B}'$ with $\sigma(\mathcal{B})=\big\{\{\sigma(x):x\in B\} : B\in\mathcal{B}\big\}$ and 
$\pi(\varphi(x))=\varphi'(\sigma(x))$.

\begin{figure}[t!]
\setlength{\arraycolsep}{5pt}
\renewcommand{\arraystretch}{1.2}
\centering
\small
\centerline{
\subfloat{
	\scalebox{1}{$
	\begin{array}[b]{c|cccc|}
	\cline{2-5}
	1 & 1 & 1 & . & .\\
	2 & 1 & . & 1 & .\\
	3 & 1 & . & . & 1\\
	4 & 1 & . & . & .\\
	5 & . & 1 & 1 & .\\
	6 & . & 1 & . & 1\\
	7 & . & 1 & . & .\\
	8 & . & . & 1 & 1\\
	9 & . & . & 1 & .\\
	10 & . & . & . & 1\\
	\cline{2-5}
	\end{array}
	$}
}
\subfloat{
	\scalebox{1}{$
	\begin{array}[b]{c|cccc|}
	\cline{2-5}
	1 & 1 & 1 & . & .\\
	8 & . & . & 1 & 1\\
	\hdashline[1pt/2pt]
	2 & 1 & . & 1 & .\\
	6 & . & 1 & . & 1\\
	\hdashline[1pt/2pt]
	3 & 1 & . & . & 1\\
	7 & . & 1 & . & .\\
	9 & . & . & 1 & .\\
	\hdashline[1pt/2pt]
	4 & 1 & . & . & .\\
	5 & . & 1 & 1 & .\\
	10 & . & . & . & 1\\
	\cline{2-5}
	\end{array}
	$}
}
\subfloat{
	\scalebox{1}{$
	\begin{array}[b]{c|cccc|}
	\cline{2-5}
	1 & 1 & 1 & . & .\\
	8 & . & . & 1 & 1\\
	\hdashline[1pt/2pt]
	2 & 1 & . & 1 & .\\
	6 & . & 1 & . & 1\\
	\hdashline[1pt/2pt]
	3 & 1 & . & . & 1\\
	5 & . & 1 & 1 & .\\
	\hdashline[1pt/2pt]
	4 & 1 & . & . & .\\
	7 & . & 1 & . & .\\
	9 & . & . & 1 & .\\
	10 & . & . & . & 1\\
	\cline{2-5}
	\end{array}
	$}
}
}
\caption{Matrix representation of a set system $\mathcal{S}$ with points $\mathcal{P}=\{1,\hdots,10\}$ (left) and the two non-isomorphic 4-colourings of $\mathcal{S}$ grouped by their colour classes. The dots represent zero-entries.}
\label{fig:set_system}
\end{figure}

\begin{example}\label{example:colouring_set_system}
Let $\mathcal{S}=(\mathcal{P},\mathcal{B})$ be a set system with point set $\mathcal{P}=\{1,\hdots,10\}$ and block set $\mathcal{B}=\big\{\{1,2,3,4\}, \{1,5,6,7\},\allowbreak \{2,5,8,9\},\allowbreak \{3,6,8,10\}\big\}$ as depicted in Fig.~\ref{fig:set_system}, and let $Q=\{1,2,3,4\}$ be a set of four colours. 
Then, there are two non-isomorphic 4-colourings of $\mathcal{S}$,
\[
\varphi_1(x)=
\begin{cases}
1,\ x=1,8\\
2,\ x=2,6\\
3,\ x=3,7,9\\
4,\  x=4,5,10,
\end{cases}
\varphi_2(x)=
\begin{cases}
1,\  x=1,8\\
2,\  x=2,6\\
3,\  x=3,5\\
4,\  x=4,7,9,10.
\end{cases}
\]
\end{example}
The colour classes are given by $\varphi_1^{-1}(1)=\{1,8\},\ \varphi_1^{-1}(2)=\{2,6\},\ \varphi_1^{-1}(3)=\{3,7,9\}$ and $\varphi_1^{-1}(4)=\{4,5,10\}$ for the first colouring $\varphi_1$, and $\varphi_2^{-1}(1)=\{1,8\},\ \varphi_2^{-1}(2)=\{2,6\},\ \varphi_2^{-1}(3)=\{3,5\}$ and $\varphi_2^{-1}(4)=\{4,7,9,10\}$ for the second colouring $\varphi_2$.
It can easily be seen that the points of any block of $\mathcal{B}$ are contained in different colour classes and thus are coloured uniquely within a block. 
The short notations of the colourings are $\hat{\varphi}_1=(\{1,8\},\{2,6\},\{3,7,9\},\{4,5,10\})$ and $\hat{\varphi}_2=(\{1,8\},\{2,6\},\{3,5\},\{4,7,9,10\})$, respectively.

\subsection{Factor Graph of LDPC Codes}\label{factor_graph}

The parity-check matrix $H$ of an LDPC code can be equivalently represented by its \emph{factor graph} $G_H$. More precisely, let $G_H=(V,F,E)$ denote a bipartite graph, where

\begin{itemize}[leftmargin=15pt,labelindent=10pt]
\item the nodes $V$ are associated with the rows of $H$,
\item the nodes $F$ are associated with the columns of $H$,
\item the edge set $E$ is given by the structure of $H$. In particular, an undirected edge $e(i,j)\in E$ exists iff $H_{ij}=1$.
\end{itemize}
The elements of $V$ are called \emph{bit nodes}, since they correspond to the code bits, and the elements of $F$ are called \emph{check nodes}, since they correspond to the parity-check equations of the code.

\subsection{Absorbing Sets}

Let $G_H=(V,F,E)$ be the factor graph of a given LDPC code with parity-check matrix $H$.
For any subset $X \subseteq V$, let $N_F(X)$ be the set of neighboring check nodes of $X$ in $F$, i.e., $N_F(X):= \{y\in F: \exists x\in X,\ \allowbreak e(x,y)\in E \}$. We further subdivide $N_F(X)$ into the sets of check nodes with even and odd degrees, denoted by $\CMcal{E}(X)$ and $O(X)$, respectively. Furthermore, we denote $N_E(X)$ as the set of adjacent edges of $X$ in $E$, i.e., $N_E(X):= \{e(x,y)\in E: x\in X\}$.

\begin{definition}\label{def:absorbing_set}
An \emph{$(a,b)$ absorbing set} of $G_H$, denoted by $\CMcal{A}$, is a subset $\CMcal{A}\subseteq V$ with $|\CMcal{A}|=a$ and $O(\CMcal{A})=b$, such that each node of $\CMcal{A}$ has strictly fewer neighbors in $O(\CMcal{A})$ than in $\CMcal{E}(\CMcal{A})$ \cite{Zhang2006}. We say that $a$ is the \emph{size} of the absorbing set and $b$ is the \emph{syndrome}. 
An $(a,b)$ absorbing set is an \emph{$(a,b)$ fully absorbing set} if, in addition, all bit nodes in $V\setminus \CMcal{A}$ have strictly fewer neighbors in $O(\CMcal{A})$ than in $F\setminus O(\CMcal{A})$.
An absorbing set $\CMcal{A}$ is $\emph{elementary}$, if all check nodes of $N_F(\CMcal{A})$ have degree of at most 2.
The $\emph{induced subgraph}$ of $\CMcal{A}$, denoted by $\CMcal{I}(\CMcal{A})$, is a bipartite subgraph of $G_H$ consisting of the bit nodes $\CMcal{A}$, the neighboring check nodes $N_F(\CMcal{A})$ and the adjacent edges $N_E(\CMcal{A})$.
\end{definition}

\begin{figure}[!t]
\centering
\begin{tikzpicture}[xscale=1.0,yscale=1.0]

\tikzstyle{bit_node}=[circle,draw=black,fill=black,minimum size=6pt]
\tikzstyle{check_node_filled}=[rectangle,draw=black,fill=black,minimum size=8pt]
\tikzstyle{check_node_empty}=[rectangle,draw=black,fill=white,minimum size=8pt]

\node [bit_node] (b1) at (0,0) {};
\node [bit_node] (b2) at ($(b1)+(2,0)$) {};
\node [bit_node] (b3) at ($(b2)+(2,0)$) {};
\node [bit_node] (b4) at ($(b3)+(2,0)$) {};

\def\cnodeoffset{0.9}
\node [check_node_empty] (c1) at (-1.1,-1.8) {};
\node [check_node_empty] (c2) at  ($(c1)+(\cnodeoffset,0)$) {};
\node [check_node_empty] (c3) at  ($(c2)+(\cnodeoffset,0)$) {};
\node [check_node_empty] (c4) at  ($(c3)+(\cnodeoffset,0)$) {};
\node [check_node_empty] (c5) at  ($(c4)+(\cnodeoffset,0)$) {};
\node [check_node_empty] (c6) at  ($(c5)+(\cnodeoffset,0)$) {};
\node [check_node_filled] (c7) at  ($(c6)+(\cnodeoffset,0)$) {};
\node [check_node_filled] (c8) at  ($(c7)+(\cnodeoffset,0)$) {};
\node [check_node_filled] (c9) at  ($(c8)+(\cnodeoffset,0)$) {};
\node [check_node_filled] (c10) at  ($(c9)+(\cnodeoffset,0)$) {};

\draw (b1) edge (c1);
\draw (b1) edge (c2);
\draw (b1) edge (c3);
\draw (b1) edge (c7);

\draw (b2) edge (c1);
\draw (b2) edge (c4);
\draw (b2) edge (c5);
\draw (b2) edge (c8);

\draw (b3) edge (c2);
\draw (b3) edge (c4);
\draw (b3) edge (c6);
\draw (b3) edge (c9);

\draw (b4) edge (c3);
\draw (b4) edge (c5);
\draw (b4) edge (c6);
\draw (b4) edge (c10);

\draw (3,0) node[rectangle, rounded corners=0.4cm, label=above:$\CMcal{A}$, minimum height=0.8cm, minimum width=6.6cm, draw, dashed] {};
\draw (1.15,-1.8) node[rectangle, rounded corners=0.4cm, label=below:$\CMcal{E}(\CMcal{A})$, minimum height=0.8cm, minimum width=5.2cm, draw, dashed] {};
\draw (5.65,-1.8) node[rectangle, rounded corners=0.4cm, label=below:$O(\CMcal{A})$, minimum height=0.8cm, minimum width=3.4cm, draw, dashed] {};

\end{tikzpicture}
\caption{Geometric representation of a $(4,4)$ absorbing set.}
\label{fig:4_4_absorbing_set}
\end{figure}

\begin{example}
An example of a $(4,4)$ absorbing set is shown in Fig.~\ref{fig:4_4_absorbing_set}, where the bit nodes $\CMcal{A}$ are represented by black circles. The neighboring check  nodes of odd degree $O(\CMcal{A})$ are visualized by black squares and the neighboring check nodes with even degree $\CMcal{E}(\CMcal{A})$ are visualized by white squares. Observe that all bit nodes of $\CMcal{A}$ are connected with strictly more even-degree check nodes than odd-degree check nodes and thus fulfill the definition of an absorbing set. 
\end{example}

\section{Classification of Absorbing Set Candidates \\ in LDPC Codes from Transversal Designs}
\label{section:classification}

Absorbing sets are known to be the main cause of decoding errors in the error-floor region of LDPC codes over the AGWN channel under SPA decoding. Therefore, it is an important step to identify and categorize the most harmful absorbing sets of an LDPC code.\footnote{By an absorbing set of an LDPC code we actually mean an absorbing set that occurs in the code's factor graph. For simplicity, we prefer this short notation in the remainder of the paper.} In this section, we give an exhaustive classification of small absorbing set candidates that may occur in a TD LDPC code of column weight 3 and 4.

Absorbing sets are graph-based objects that has been established in the field of coding theory. However, they have a purely combinatorial nature and thus can be conveniently described from a combinatorial viewpoint. 
More precisely, let $\CMcal{D}$ be a transversal design of block size $k$ with point set $\CMcal{P}$ and block set $\CMcal{B}$ and let $\CMcal{C}(\CMcal{D})$ be the corresponding TD LDPC code of column weight $k$ based on $\CMcal{D}$.

\begin{lemma}\label{lemma:set_system_representation}
An absorbing set $\CMcal{A}$ of $\CMcal{C}(\CMcal{D})$ can be equivalently described by a set system $(\mathcal{P},\mathcal{B})$ which is a configuration of $\CMcal{D}$, i.e., $\mathcal{P}\subseteq \CMcal{P}$ and $\mathcal{B}\subseteq \CMcal{B}$. Then, $(\mathcal{P},\mathcal{B})$ is called the \emph{set system representation} of $\CMcal{A}$.
\end{lemma}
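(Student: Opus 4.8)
The plan is to make the design-to-code dictionary from Section~\ref{codes_from_TDs} explicit and then read the configuration off the factor-graph neighbourhood of $\CMcal{A}$. Recall that in $\CMcal{C}(\CMcal{D})$ the bit nodes $V$ are in bijection with the blocks $\CMcal{B}$ of $\CMcal{D}$ (the columns of $H=\CMcal{N}(\CMcal{D})$), the check nodes $F$ are in bijection with the points $\CMcal{P}$ (the rows of $H$), and an edge $e(x,y)\in E$ exists precisely when the point corresponding to $y$ lies in the block corresponding to $x$. First I would fix these bijections and use the bit-node-to-block map to transport $\CMcal{A}\subseteq V$ to a set of blocks $\mathcal{B}\subseteq\CMcal{B}$.

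Next I would define the point set of the candidate configuration as $\mathcal{P}:=\bigcup\mathcal{B}$, i.e.\ the points that occur in at least one block of $\mathcal{B}$. Under the dictionary, $\mathcal{P}$ corresponds exactly to the neighbouring check nodes $N_F(\CMcal{A})$, since a point lies in some block of $\mathcal{B}$ iff the associated check node is adjacent to some bit node of $\CMcal{A}$. This choice of $\mathcal{P}$ is forced by the set-system convention $\mathcal{P}=\bigcup\mathcal{B}$ from Section~\ref{set_systems}, and it automatically satisfies that convention, so $(\mathcal{P},\mathcal{B})$ is a legitimate set system; because $\mathcal{P}\subseteq\CMcal{P}$ and $\mathcal{B}\subseteq\CMcal{B}$, it is a configuration of $\CMcal{D}$.

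It then remains to argue that this description is faithful, i.e.\ that $\CMcal{A}$ and $(\mathcal{P},\mathcal{B})$ carry the same information. Here I would show that the induced subgraph $\CMcal{I}(\CMcal{A})$, consisting of $\CMcal{A}$, $N_F(\CMcal{A})$ and $N_E(\CMcal{A})$, is isomorphic as a bipartite graph to the incidence structure of $(\mathcal{P},\mathcal{B})$: the two graphs have the same vertex sets and the same edges under the bijections above. In particular the degree of each check node in $\CMcal{I}(\CMcal{A})$ equals the degree of the corresponding point in $(\mathcal{P},\mathcal{B})$, so the odd-degree check nodes $O(\CMcal{A})$ correspond exactly to the odd-degree points $O(\mathcal{P})$ and the even-degree check nodes $\CMcal{E}(\CMcal{A})$ to the even-degree points. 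Conversely, applying the inverse bijection to the blocks of $\mathcal{B}$ recovers $\CMcal{A}$, so the assignment $\CMcal{A}\mapsto(\mathcal{P},\mathcal{B})$ is a faithful correspondence.

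Since the construction is essentially a translation between two equivalent matrix encodings, I do not expect a genuinely hard step; the content is the verification of faithfulness. The one point needing care is the set-system convention $\mathcal{P}=\bigcup\mathcal{B}$: one must define $\mathcal{P}$ as the covered points rather than as an arbitrary superset, and check that no isolated points arise — which holds because every check node in $N_F(\CMcal{A})$ is by definition adjacent to a bit node of $\CMcal{A}$. I would close by recording, for later use in this section, that the defining inequality of an absorbing set (each bit node has strictly fewer neighbours in $O(\CMcal{A})$ than in $\CMcal{E}(\CMcal{A})$) translates into the purely combinatorial statement that every block of $\mathcal{B}$ contains strictly fewer points of $O(\mathcal{P})$ than points of even degree.
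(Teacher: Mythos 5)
Your proposal is correct and follows essentially the same route as the paper: the paper's proof is a one-line observation that taking the blocks corresponding to the bit nodes of $\CMcal{I}(\CMcal{A})$ and the points corresponding to its check nodes $N_F(\CMcal{A})$ yields a configuration of $\CMcal{D}$, which is exactly your construction with the faithfulness and degree-correspondence details written out. Your closing remark about the translation of the absorbing-set inequality is what the paper records separately as constraint (D) of Lemma~\ref{lemma:combinatorial_constraints_of_absorbing_sets}, so nothing is amiss.
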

\begin{proof}
By taking the points and blocks of $\CMcal{D}$ that correspond to the bit nodes and check nodes of $\CMcal{I}(\CMcal{A})$, respectively, we obtain a set system that is clearly a configuration of $\CMcal{D}$.
\end{proof}

\begin{lemma}\label{lemma:induced_colouring}
Any configuration $(\mathcal{P},\mathcal{B})$ of $\CMcal{D}$ has a $k$-colouring that is induced by the groups of $\CMcal{D}$.
\end{lemma}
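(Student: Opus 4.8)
The plan is to write down one canonical colouring coming directly from the group partition of $\CMcal{D}$ and then to check that it satisfies the colouring axiom. Denote the $k$ groups of the transversal design by $G_1,\dots,G_k$, so that $\CMcal{G}=\{G_1,\dots,G_k\}$ partitions the full point set $\CMcal{P}$. Since the configuration satisfies $\mathcal{P}\subseteq\CMcal{P}$, every point $x\in\mathcal{P}$ lies in exactly one group, and I would define $\varphi: \mathcal{P}\to\{1,\dots,k\}$ by setting $\varphi(x)=i$ whenever $x\in G_i$. Well-definedness is exactly the statement that the groups are pairwise disjoint and cover $\CMcal{P}$, and the map clearly uses at most $k$ colours, matching the block size.

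It then remains to verify that $\varphi$ is a legitimate $k$-colouring, i.e.\ that any two distinct points lying in a common block of $\mathcal{B}$ receive different colours. Here I would invoke the defining property of transversal designs recorded after axioms (1)-(4), namely that every block of $\CMcal{B}$ consists of exactly one point per group. Since each block of the configuration lies in $\mathcal{B}\subseteq\CMcal{B}$ and is therefore contained in a block of the transversal design, two distinct points of such a block necessarily belong to distinct groups and hence are assigned distinct colours by $\varphi$. This establishes the colouring condition, and the resulting short notation $\hat{\varphi}=(\varphi^{-1}(1),\dots,\varphi^{-1}(k))$ has colour classes that are precisely the nonempty intersections $G_i\cap\mathcal{P}$; this is the $k$-colouring induced by the groups.

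The argument is essentially immediate, so I do not anticipate a genuine obstacle. The only point deserving care is the well-definedness of $\varphi$, which rests on the fact that $\CMcal{G}$ is a true partition of $\CMcal{P}$ (so that no point can receive two colours) rather than merely a covering; everything else follows at once from the one-point-per-group property inherited from the transversal design. I would also note in passing that, because full TD blocks meet every group exactly once, on any block that survives intact the colouring uses all $k$ colours, which foreshadows the tight link between group-induced colourings and block structure that is exploited in the subsequent classification.
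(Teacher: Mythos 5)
Your proposal is correct and follows essentially the same argument as the paper: colour each point by the unique group of $\CMcal{D}$ containing it, and use the one-point-per-group property of blocks to verify the colouring condition. Your version merely spells out the well-definedness and verification steps in more detail than the paper's one-line proof.
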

\begin{proof}
Recall that $\CMcal{D}$ has $k$ groups and that every block consists of exactly one point per group. By colouring all points of $\mathcal{P}$ that lie in the same group of $\CMcal{D}$ with the same colour and each group with a different colour, we obtain an $k$-colouring of $(\mathcal{P},\mathcal{B})$ which is induced by $\CMcal{D}$. 
\end{proof}

Let $\CMcal{A}$ be an absorbing set with set system representation $(\mathcal{P}^*,\mathcal{B}^*)$ and a $k$-colouring $\varphi^*$ induced by the groups of $\CMcal{D}$. 
By considering any set system $(\mathcal{P},\mathcal{B})$, we say that $\CMcal{A}$ is of \emph{type} $(\mathcal{P},\mathcal{B})$, if the set system is isomorphic to $(\mathcal{P}^*,\mathcal{B}^*)$. Moreover, by considering any $k$-colouring $\varphi$ of $(\mathcal{P},\mathcal{B})$, we say that $\CMcal{A}$ is of \emph{type} $(\mathcal{P},\mathcal{B},\varphi)$ if the triple is isomorphic to $(\mathcal{P}^*,\mathcal{B}^*,\varphi^*)$. Obviously, $(\mathcal{P},\mathcal{B})$ and $(\mathcal{P},\mathcal{B},\varphi)$ may represent a large number of isomorphic absorbing sets of an LDPC code.
Also notice that an absorbing set of type $(\mathcal{P},\mathcal{B},\varphi)$ is automatically an absorbing set of type $(\mathcal{P},\mathcal{B})$ but not vice versa.

\begin{lemma}\label{lemma:combinatorial_constraints_of_absorbing_sets}
By considering an absorbing set of type $(\mathcal{P},\mathcal{B})$ that occurs in any TD LDPC code of column weight $k$, the following combinatorial constraints must be valid:
\begin{enumerate}[label=(\Alph*),leftmargin=22pt]
	\item Each block of $\mathcal{B}$ contains exactly $k$ points.
	\item Any two blocks of $\mathcal{B}$ share at most one point of $\mathcal{P}$.
	\item $(\mathcal{P},\mathcal{B})$ is $k$-colourable.
	\item For every $B\in \mathcal{B}$, only a minority of the points of $B$ have odd degree, i.e., $|B\cap O(\mathcal{P})|\leq \lfloor \frac{k-1}{2}\rfloor$.
\end{enumerate}
\end{lemma}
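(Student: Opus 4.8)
The plan is to verify each of the four combinatorial constraints (A)--(D) directly from the definitions, translating the graph-theoretic conditions on the absorbing set $\CMcal{A}$ into properties of its set system representation $(\mathcal{P},\mathcal{B})$ via Lemma~\ref{lemma:set_system_representation}. Since an absorbing set of type $(\mathcal{P},\mathcal{B})$ is by definition isomorphic to a genuine set system representation $(\mathcal{P}^*,\mathcal{B}^*)$ arising as a configuration of $\CMcal{D}$, it suffices to establish each property for the configuration $(\mathcal{P}^*,\mathcal{B}^*)$ and note that all four conditions are isomorphism-invariant.

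For constraint (A), I would invoke the last remark of Section~\ref{TDs}: axiom~(4) of a TD$(k,n)$ forces every block to consist of exactly one point per group, so every block of $\CMcal{D}$ has exactly $k$ points. Under the correspondence of Lemma~\ref{lemma:set_system_representation}, a block $B\in\mathcal{B}^*$ corresponds to a check node in $\CMcal{I}(\CMcal{A})$, but the points it contains in the configuration are only those lying in $\mathcal{P}^*=\mathcal{A}$; so I need to argue that in fact every block of the configuration inherits all $k$ of its original points. The cleanest route is to recall that in a column-weight-$k$ code each bit node has degree $k$, so each retained block meets $\CMcal{A}$ in the full set of $k$ incident points precisely when the configuration is taken as the induced subgraph $\CMcal{I}(\CMcal{A})$; I would make this explicit. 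Constraint (B) is immediate from Section~\ref{codes_from_TDs}: the factor graph of a TD LDPC code is free of $4$-cycles, equivalently any two points lie in at most one common block (axiom~(4)), hence any two blocks share at most one point; this descends to any configuration. Constraint (C) follows directly from Lemma~\ref{lemma:induced_colouring}, which already furnishes a $k$-colouring of any configuration of $\CMcal{D}$ induced by the groups.

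The substantive step is constraint (D), where the absorbing-set condition must be used rather than just the TD axioms. Here I would fix a block $B\in\mathcal{B}$, corresponding to a check node $c\in N_F(\CMcal{A})$, and translate the inequality $|B\cap O(\mathcal{P})|\le\lfloor\frac{k-1}{2}\rfloor$ into a statement about the neighbors of the bit nodes incident to $c$. The key observation is that a point of $B$ has odd degree in the configuration exactly when the corresponding check node is an odd-degree neighbor of the relevant bit node, and the defining property of an absorbing set is that each bit node in $\CMcal{A}$ has strictly fewer neighbors in $O(\CMcal{A})$ than in $\CMcal{E}(\CMcal{A})$. I expect this is the main obstacle: one must carefully match the combinatorial notion of point-degree parity within a block against the graph-theoretic odd/even neighbor counts at each bit node, and then aggregate a per-bit-node inequality into a per-block bound. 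Concretely, with column weight $k$ each bit node has $k$ check neighbors, so ``strictly fewer odd than even neighbors'' means at most $\lfloor\frac{k-1}{2}\rfloor$ odd neighbors; since the points of $B\cap O(\mathcal{P})$ are precisely the odd-degree points among the $k$ points of $B$, and odd point-degree corresponds to an odd-degree check node under the duality, the bound $|B\cap O(\mathcal{P})|\le\lfloor\frac{k-1}{2}\rfloor$ should drop out once the correspondence is pinned down. I would devote most of the written proof to making this dictionary precise, and dispatch (A)--(C) briefly.
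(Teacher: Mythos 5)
Your overall approach coincides with the paper's (whose proof is a single sentence per item: (A) and (B) from the transversal-design axioms, (C) from Lemma~\ref{lemma:induced_colouring}, and (D) as the direct combinatorial translation of the absorbing-set condition), and your fleshed-out argument for (D) is exactly the intended one: each bit node has all $k$ of its check neighbours inside $N_F(\CMcal{A})$, so ``strictly fewer neighbours in $O(\CMcal{A})$ than in $\CMcal{E}(\CMcal{A})$'' is equivalent to $|B\cap O(\mathcal{P})|<k/2$, i.e.\ $|B\cap O(\mathcal{P})|\leq\lfloor\frac{k-1}{2}\rfloor$.

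There is, however, one concrete slip you should fix: in your paragraph on (A) you state the dictionary backwards. In a TD LDPC code the \emph{points} of $\CMcal{D}$ index the rows of $H$ (check nodes) and the \emph{blocks} index the columns (bit nodes), so in the set system representation of Lemma~\ref{lemma:set_system_representation} a block $B\in\mathcal{B}^*$ corresponds to a \emph{bit} node of $\CMcal{A}$ and the points $\mathcal{P}^*$ correspond to the check nodes $N_F(\CMcal{A})$ --- not, as you write, $\mathcal{P}^*=\CMcal{A}$ with blocks as check nodes. Under your stated identification, (A) would actually fail: a check node in $O(\CMcal{A})$ typically meets $\CMcal{A}$ in a single bit node, so the corresponding ``block'' would have far fewer than $k$ ``points''. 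The repair you propose (column weight $k$ forces each retained block to keep all $k$ incident points, since $\CMcal{I}(\CMcal{A})$ retains \emph{all} neighbours of every bit node in $\CMcal{A}$) is precisely the right argument, but it only makes sense under the correct orientation of the correspondence --- which is also the one you implicitly (and correctly) use later in your treatment of (D), where odd point-degree is matched with odd check-node degree. Straightening out this dictionary once at the start would make the whole proof consistent; with that correction the proposal matches the paper's argument.
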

\begin{proof}
Since $(\mathcal{P},\mathcal{B})$ corresponds to a configuration of $\CMcal{D}$, the constraints (A) and (B) directly follow from the axioms of a transversal design and (C) follows from Lemma~\ref{lemma:induced_colouring}.
The constraint (D) is the combinatorial counterpart to the postulation of absorbing sets that each bit node of $\CMcal{A}$ has strictly fewer neighbors in $O(\CMcal{A})$ than in $\CMcal{E}(\CMcal{A})$.
\end{proof}

\begin{definition}
An \emph{absorbing set candidate} is a set system that satisfies the constraints (A)-(D) of Lemma~\ref{lemma:combinatorial_constraints_of_absorbing_sets}.
\end{definition}

Absorbing set candidates can be considered as combinatorial patterns for absorbing sets independent of any code. By contrast, absorbing sets are inextricably linked with a concrete code since they are defined as subgraphs of the code's factor graph.

\begin{theorem}\label{theorem:combinatorial_absorbing_constraints}
	Let $\CMcal{A}$ be an absorbing set of type $(\mathcal{P},\mathcal{B},\varphi)$. 
	For the case of $k=3$ or $4$, the absorbing set $\CMcal{A}$ is fully if and only if the points of $O(\mathcal{P})$ are coloured with the same colour by $\varphi$, i.e., $|\{\varphi(x):x\in O(\mathcal{P})\}|\leq 1$.
\end{theorem}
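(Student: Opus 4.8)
The plan is to translate the graph-theoretic notion of ``fully absorbing'' into a purely combinatorial statement about the configuration $(\mathcal{P},\mathcal{B})$ and its odd-degree point set $O(\mathcal{P})$, and then to exploit the fact that for $k\in\{3,4\}$ one has $\lfloor\frac{k-1}{2}\rfloor=1$. Concretely, under the correspondence of Lemma~\ref{lemma:set_system_representation} a bit node of $G_H$ is a block of $\CMcal{D}$, a check node is a point, and the degree of a check node $y\in N_F(\CMcal{A})$ equals the number of blocks of $\mathcal{B}$ through the corresponding point; hence $O(\CMcal{A})$ is exactly the odd-degree point set $O(\mathcal{P})$. I would first record that a point of $\CMcal{P}\setminus\mathcal{P}$ has degree $0$, hence is even and contributes to $F\setminus O(\CMcal{A})$, so that the number of $O(\CMcal{A})$-neighbours of any block $B'$ of $\CMcal{D}$ is precisely $|B'\cap O(\mathcal{P})|$. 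Combining constraint (D) for the blocks in $\mathcal{B}$ with the additional ``fully'' clause for the blocks in $V\setminus\CMcal{A}$, the key reformulation is: for $k=3,4$, the set $\CMcal{A}$ is fully absorbing if and only if every block $B'\in\CMcal{B}$ of $\CMcal{D}$ satisfies $|B'\cap O(\mathcal{P})|\le 1$.

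With this reformulation in hand, both implications follow from the group structure of $\CMcal{D}$ together with axiom (4). For the backward direction, suppose all points of $O(\mathcal{P})$ carry the same colour under $\varphi$; since $\varphi$ is, up to the isomorphism defining the type, the colouring induced by the groups, this means all odd-degree points lie in a single group $G$. As every block of $\CMcal{D}$ meets each group in exactly one point, every block meets $G$, and a fortiori $O(\mathcal{P})$, in at most one point, so the reformulated condition holds and $\CMcal{A}$ is fully. For the forward direction I would argue by contraposition: if $O(\mathcal{P})$ contained two points $x_1,x_2$ of different colours, i.e.\ in different groups, then axiom (4) forces a unique block $B'\in\CMcal{B}$ through both, whence $|B'\cap O(\mathcal{P})|\ge 2$, contradicting the reformulated ``fully'' condition. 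The empty case $O(\mathcal{P})=\emptyset$ is handled trivially, since then $|\{\varphi(x):x\in O(\mathcal{P})\}|=0\le 1$.

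The step I expect to require the most care is the reformulation in the first paragraph, namely verifying that the ``fully'' clause for bit nodes outside $\CMcal{A}$ reduces cleanly to the single inequality $|B'\cap O(\mathcal{P})|\le 1$ for all blocks; one must keep track of the points of $B'$ lying outside $\mathcal{P}$ (degree $0$, hence even) and confirm they never tip a block onto the odd side. The genuinely load-bearing observation is the collapse $\lfloor\frac{k-1}{2}\rfloor=1$ for $k=3,4$: it is exactly this that lets a two-element obstruction inside $O(\mathcal{P})$ be detected by a single block via axiom (4), and it also explains why the clean ``single colour'' characterisation is special to $k\le 4$ --- for $k\ge 5$ the threshold is at least $2$, so odd-degree points may legitimately be spread across several groups without destroying the fully absorbing property.
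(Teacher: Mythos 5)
Your proposal is correct and follows essentially the same route as the paper: both directions rest on the transversal-design axioms (one point per group in every block, and axiom (4) forcing a block through any two points from distinct groups) together with the collapse $\lfloor\frac{k-1}{2}\rfloor=1$ for $k=3,4$. Your explicit reformulation of ``fully'' as $|B'\cap O(\mathcal{P})|\le 1$ for every block of $\CMcal{D}$ merely makes precise what the paper's proof uses implicitly, and your bookkeeping of degree-$0$ points outside $\mathcal{P}$ is a correct (and welcome) clarification rather than a deviation.
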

\begin{proof}
We assume w.l.o.g. that $(\mathcal{P},\mathcal{B})$ is the set system representation of $\CMcal{A}$ with induced colouring $\varphi$.
If all points of $O(\mathcal{P})$ are coloured with the same colour by $\varphi$, then all points of odd degree belong to the same group of $\CMcal{D}$.
We also know that every block of $\CMcal{D}$ has exactly one point per group and thus every block of $\CMcal{B}\setminus\mathcal{B}$ intersects with at most one point of $O(\mathcal{P})$. Hence, every block of $\CMcal{B}\setminus\mathcal{B}$ has strictly fewer points in $O(\mathcal{P})$ than in $\CMcal{P}\setminus O(\mathcal{P})$ such that $\CMcal{A}$ must be fully.

Conversely, if we assume that there are two points of $O(\mathcal{P})$ with different colours (which means that they are contained in different groups of $\CMcal{D}$), there must be a block in $\CMcal{B}$ that contains both points (cf. (4) of Subsection~\ref{TDs}). Since $k=3$ or $4$, this block must be in $\CMcal{B}\setminus\mathcal{B}$, otherwise it contradicts the definition of an absorbing set. Hence, this block contradicts the definition of a fully absorbing set.
\end{proof}

\begin{figure*}[t!]
\vspace{1cm}
\captionsetup[subfigure]{labelformat=empty,margin={9pt,0pt},font={scriptsize}}
\newcommand{\lborder}[1]{\multicolumn{1}{;{1pt/2pt}c}{#1}}
\newcommand{\rborder}[1]{\multicolumn{1}{c;{1pt/2pt}}{#1}}
\newcommand{\lrborder}[1]{\multicolumn{1}{;{1pt/2pt}c;{1pt/2pt}}{#1}}
\setlength{\arraycolsep}{4pt}
\renewcommand{\arraystretch}{1}
\def\scaleboxes {0.7}
\centering
\small
\subfloat[(3,3)]{
\scalebox{\scaleboxes}{$
\begin{array}[b]{c|ccc|}
\cline{2-4}
1 & 1 & 1 & .\\
2 & 1 & . & 1\\
3 & 1 & . & .\\
4 & . & 1 & 1\\
5 & . & 1 & .\\
6 & . & . & 1\\
\cline{2-4}
\end{array}
$}
}
\subfloat[(4,0)]{
\scalebox{\scaleboxes}{$
\begin{array}[b]{c|cccc|}
\cline{2-5}
1 & 1 & 1 & . & .\\
2 & 1 & . & 1 & .\\
3 & 1 & . & . & 1\\
4 & . & 1 & 1 & .\\
5 & . & 1 & . & 1\\
6 & . & . & 1 & 1\\
\cline{2-5}
\end{array}
$}
}
\subfloat[(4,2)]{
\scalebox{\scaleboxes}{$
\begin{array}[b]{c|cccc|}
\cline{2-5}
1 & 1 & 1 & . & .\\
2 & 1 & . & 1 & .\\
3 & 1 & . & . & 1\\
4 & . & 1 & 1 & .\\
5 & . & 1 & . & 1\\
6 & . & . & 1 & .\\
7 & . & . & . & 1\\
\cline{2-5}
\end{array}
$}
}
\subfloat[(4,4)]{
\scalebox{\scaleboxes}{$
\begin{array}[b]{c|cccc|}
\cline{2-5}
1 & 1 & 1 & . & .\\
2 & 1 & . & 1 & .\\
3 & 1 & . & . & .\\
4 & . & 1 & . & 1\\
5 & . & 1 & . & .\\
6 & . & . & 1 & 1\\
7 & . & . & 1 & .\\
8 & . & . & . & 1\\
\cline{2-5}
\end{array}
$}
}
\subfloat[(5,3)\{1\}]{
\scalebox{\scaleboxes}{$
\begin{array}[b]{c|ccccc|}
\cline{2-6}
1 & 1 & 1 & . & . & .\\
2 & 1 & . & 1 & . & .\\
3 & 1 & . & . & 1 & .\\
4 & . & 1 & 1 & . & .\\
5 & . & 1 & . & . & 1\\
6 & . & . & 1 & . & .\\
7 & . & . & . & 1 & 1\\
8 & . & . & . & 1 & .\\
9 & . & . & . & . & 1\\
\cline{2-6}
\end{array}
$}
}
\subfloat[(5,3)\{2\}]{
\scalebox{\scaleboxes}{$
\begin{array}[b]{c|ccccc|}
\cline{2-6}
1 & 1 & 1 & . & . & .\\
2 & 1 & . & 1 & . & .\\
3 & 1 & . & . & 1 & .\\
4 & . & 1 & . & . & 1\\
5 & . & 1 & . & . & .\\
6 & . & . & 1 & . & 1\\
7 & . & . & 1 & . & .\\
8 & . & . & . & 1 & 1\\
9 & . & . & . & 1 & .\\
\cline{2-6}
\end{array}
$}
}
\subfloat[(5,5)]{
\scalebox{\scaleboxes}{$
\begin{array}[b]{c|ccccc|}
\cline{2-6}
1 & 1 & 1 & . & . & .\\
2 & 1 & . & 1 & . & .\\
3 & 1 & . & . & . & .\\
4 & . & 1 & . & 1 & .\\
5 & . & 1 & . & . & .\\
6 & . & . & 1 & . & 1\\
7 & . & . & 1 & . & .\\
8 & . & . & . & 1 & 1\\
9 & . & . & . & 1 & .\\
10 & . & . & . & . & 1\\
\cline{2-6}
\end{array}
$}
}
\subfloat[(6,0)\{1\}]{
\scalebox{\scaleboxes}{$
\begin{array}[b]{c|cccccc|}
\cline{2-7}
1 & 1 & 1 & . & . & . & .\\
2 & 1 & . & 1 & . & . & .\\
3 & 1 & . & . & 1 & . & .\\
4 & . & 1 & 1 & . & . & .\\
5 & . & 1 & . & . & 1 & .\\
6 & . & . & 1 & . & . & 1\\
7 & . & . & . & 1 & 1 & .\\
8 & . & . & . & 1 & . & 1\\
9 & . & . & . & . & 1 & 1\\
\cline{2-7}
\end{array}
$}
}
\subfloat[(6,0)\{2\}]{
\scalebox{\scaleboxes}{$
\begin{array}[b]{c|cccccc|}
\cline{2-7}
1 & 1 & 1 & . & . & . & .\\
2 & 1 & . & 1 & . & . & .\\
3 & 1 & . & . & 1 & . & .\\
4 & . & 1 & . & . & 1 & .\\
5 & . & 1 & . & . & . & 1\\
6 & . & . & 1 & . & 1 & .\\
7 & . & . & 1 & . & . & 1\\
8 & . & . & . & 1 & 1 & .\\
9 & . & . & . & 1 & . & 1\\
\cline{2-7}
\end{array}
$}
}

\subfloat[(6,2)\{1\}]{
\scalebox{\scaleboxes}{$
\begin{array}[b]{c|cccccc|}
\cline{2-7}
1 & 1 & 1 & 1 & . & . & .\\
2 & 1 & . & . & 1 & . & .\\
3 & 1 & . & . & . & 1 & .\\
4 & . & 1 & . & 1 & . & .\\
5 & . & 1 & . & . & . & 1\\
6 & . & . & 1 & . & 1 & .\\
7 & . & . & 1 & . & . & 1\\
8 & . & . & . & 1 & 1 & 1\\
\cline{2-7}
\end{array}
$}
}
\subfloat[(6,2)\{2\}]{
\scalebox{\scaleboxes}{$
\begin{array}[b]{c|cccccc|}
\cline{2-7}
1 & 1 & 1 & 1 & . & . & .\\
2 & 1 & . & . & 1 & . & .\\
3 & 1 & . & . & . & 1 & .\\
4 & . & 1 & . & 1 & . & .\\
5 & . & 1 & . & . & . & 1\\
6 & . & . & 1 & . & 1 & .\\
7 & . & . & 1 & . & . & 1\\
8 & . & . & . & 1 & 1 & .\\
9 & . & . & . & . & . & 1\\
\cline{2-7}
\end{array}
$}
}
\subfloat[(6,2)\{3\}]{
\scalebox{\scaleboxes}{$
\begin{array}[b]{c|cccccc|}
\cline{2-7}
1 & 1 & 1 & . & . & . & .\\
2 & 1 & . & 1 & . & . & .\\
3 & 1 & . & . & 1 & . & .\\
4 & . & 1 & 1 & . & . & .\\
5 & . & 1 & . & 1 & . & .\\
6 & . & . & 1 & . & 1 & .\\
7 & . & . & . & 1 & . & 1\\
8 & . & . & . & . & 1 & 1\\
9 & . & . & . & . & 1 & .\\
10 & . & . & . & . & . & 1\\
\cline{2-7}
\end{array}
$}
}
\subfloat[(6,2)\{4\}]{
\scalebox{\scaleboxes}{$
\begin{array}[b]{c|cccccc|}
\cline{2-7}
1 & 1 & 1 & . & . & . & .\\
2 & 1 & . & 1 & . & . & .\\
3 & 1 & . & . & 1 & . & .\\
4 & . & 1 & 1 & . & . & .\\
5 & . & 1 & . & . & 1 & .\\
6 & . & . & 1 & . & . & 1\\
7 & . & . & . & 1 & 1 & .\\
8 & . & . & . & 1 & . & 1\\
9 & . & . & . & . & 1 & .\\
10 & . & . & . & . & . & 1\\
\cline{2-7}
\end{array}
$}
}
\subfloat[(6,2)\{5\}]{
\scalebox{\scaleboxes}{$
\begin{array}[b]{c|cccccc|}
\cline{2-7}
1 & 1 & 1 & . & . & . & .\\
2 & 1 & . & 1 & . & . & .\\
3 & 1 & . & . & 1 & . & .\\
4 & . & 1 & 1 & . & . & .\\
5 & . & 1 & . & . & 1 & .\\
6 & . & . & 1 & . & . & .\\
7 & . & . & . & 1 & 1 & .\\
8 & . & . & . & 1 & . & 1\\
9 & . & . & . & . & 1 & 1\\
10 & . & . & . & . & . & 1\\
\cline{2-7}
\end{array}
$}
}
\subfloat[(6,2)\{6\}]{
\scalebox{\scaleboxes}{$
\begin{array}[b]{c|cccccc|}
\cline{2-7}
1 & 1 & 1 & . & . & . & .\\
2 & 1 & . & 1 & . & . & .\\
3 & 1 & . & . & 1 & . & .\\
4 & . & 1 & . & . & 1 & .\\
5 & . & 1 & . & . & . & 1\\
6 & . & . & 1 & . & 1 & .\\
7 & . & . & 1 & . & . & 1\\
8 & . & . & . & 1 & 1 & .\\
9 & . & . & . & 1 & . & .\\
10 & . & . & . & . & . & 1\\
\cline{2-7}
\end{array}
$}
}
\subfloat[(6,4)\{1\}]{
\scalebox{\scaleboxes}{$
\begin{array}[b]{c|cccccc|}
\cline{2-7}
1 & 1 & 1 & 1 & 1 & . & .\\
2 & 1 & . & . & . & 1 & .\\
3 & 1 & . & . & . & . & .\\
4 & . & 1 & . & . & 1 & .\\
5 & . & 1 & . & . & . & .\\
6 & . & . & 1 & . & . & 1\\
7 & . & . & 1 & . & . & .\\
8 & . & . & . & 1 & . & 1\\
9 & . & . & . & 1 & . & .\\
10 & . & . & . & . & 1 & 1\\
\cline{2-7}
\end{array}
$}
}

\subfloat[(6,4)\{2\}]{
\scalebox{\scaleboxes}{$
\begin{array}[b]{c|cccccc|}
\cline{2-7}
1 & 1 & 1 & 1 & . & . & .\\
2 & 1 & . & . & 1 & . & .\\
3 & 1 & . & . & . & 1 & .\\
4 & . & 1 & . & 1 & . & .\\
5 & . & 1 & . & . & . & 1\\
6 & . & . & 1 & . & 1 & .\\
7 & . & . & 1 & . & . & 1\\
8 & . & . & . & 1 & . & .\\
9 & . & . & . & . & 1 & .\\
10 & . & . & . & . & . & 1\\
\cline{2-7}
\end{array}
$}
}
\subfloat[(6,4)\{3\}]{
\scalebox{\scaleboxes}{$
\begin{array}[b]{c|cccccc|}
\cline{2-7}
1 & 1 & 1 & . & . & . & .\\
2 & 1 & . & 1 & . & . & .\\
3 & 1 & . & . & 1 & . & .\\
4 & . & 1 & 1 & . & . & .\\
5 & . & 1 & . & . & 1 & .\\
6 & . & . & 1 & . & . & .\\
7 & . & . & . & 1 & . & 1\\
8 & . & . & . & 1 & . & .\\
9 & . & . & . & . & 1 & 1\\
10 & . & . & . & . & 1 & .\\
11 & . & . & . & . & . & 1\\
\cline{2-7}
\end{array}
$}
}
\subfloat[(6,4)\{4\}]{
\scalebox{\scaleboxes}{$
\begin{array}[b]{c|cccccc|}
\cline{2-7}
1 & 1 & 1 & . & . & . & .\\
2 & 1 & . & 1 & . & . & .\\
3 & 1 & . & . & 1 & . & .\\
4 & . & 1 & 1 & . & . & .\\
5 & . & 1 & . & . & . & .\\
6 & . & . & 1 & . & . & .\\
7 & . & . & . & 1 & 1 & .\\
8 & . & . & . & 1 & . & 1\\
9 & . & . & . & . & 1 & 1\\
10 & . & . & . & . & 1 & .\\
11 & . & . & . & . & . & 1\\
\cline{2-7}
\end{array}
$}
}
\subfloat[(6,4)\{5\}]{
\scalebox{\scaleboxes}{$
\begin{array}[b]{c|cccccc|}
\cline{2-7}
1 & 1 & 1 & . & . & . & .\\
2 & 1 & . & 1 & . & . & .\\
3 & 1 & . & . & 1 & . & .\\
4 & . & 1 & . & . & 1 & .\\
5 & . & 1 & . & . & . & 1\\
6 & . & . & 1 & . & 1 & .\\
7 & . & . & 1 & . & . & .\\
8 & . & . & . & 1 & . & 1\\
9 & . & . & . & 1 & . & .\\
10 & . & . & . & . & 1 & .\\
11 & . & . & . & . & . & 1\\
\cline{2-7}
\end{array}
$}
}
\subfloat[(6,4)\{6\}]{
\scalebox{\scaleboxes}{$
\begin{array}[b]{c|cccccc|}
\cline{2-7}
1 & 1 & 1 & . & . & . & .\\
2 & 1 & . & 1 & . & . & .\\
3 & 1 & . & . & 1 & . & .\\
4 & . & 1 & . & . & 1 & .\\
5 & . & 1 & . & . & . & .\\
6 & . & . & 1 & . & 1 & .\\
7 & . & . & 1 & . & . & .\\
8 & . & . & . & 1 & . & 1\\
9 & . & . & . & 1 & . & .\\
10 & . & . & . & . & 1 & 1\\
11 & . & . & . & . & . & 1\\
\cline{2-7}
\end{array}
$}
}
\subfloat[(6,6)\{1\}]{
\scalebox{\scaleboxes}{$
\begin{array}[b]{c|cccccc|}
\cline{2-7}
1 & 1 & 1 & 1 & 1 & . & .\\
2 & 1 & . & . & . & 1 & .\\
3 & 1 & . & . & . & . & .\\
4 & . & 1 & . & . & 1 & .\\
5 & . & 1 & . & . & . & .\\
6 & . & . & 1 & . & . & 1\\
7 & . & . & 1 & . & . & .\\
8 & . & . & . & 1 & . & 1\\
9 & . & . & . & 1 & . & .\\
10 & . & . & . & . & 1 & .\\
11 & . & . & . & . & . & 1\\
\cline{2-7}
\end{array}
$}
}
\subfloat[(6,6)\{2\}]{
\scalebox{\scaleboxes}{$
\begin{array}[b]{c|cccccc|}
\cline{2-7}
1 & 1 & 1 & . & . & . & .\\
2 & 1 & . & 1 & . & . & .\\
3 & 1 & . & . & . & . & .\\
4 & . & 1 & . & 1 & . & .\\
5 & . & 1 & . & . & . & .\\
6 & . & . & 1 & . & 1 & .\\
7 & . & . & 1 & . & . & .\\
8 & . & . & . & 1 & . & 1\\
9 & . & . & . & 1 & . & .\\
10 & . & . & . & . & 1 & 1\\
11 & . & . & . & . & 1 & .\\
12 & . & . & . & . & . & 1\\
\cline{2-7}
\end{array}
$}
}
\caption{Matrix representation of $(a,b)$ absorbing set candidates with $a\leq 6$. All depicted candidates are 3-colourable. If there are multiple candidates of the same size $(a,b)$, we use a postfix $\{i\}$ to determine an order. The dots represent zero-entries.}
\vspace{1cm}
\label{fig:absorbing_set_classification_k3}
\end{figure*}

\begin{figure*}[t!]
\captionsetup[subfigure]{labelformat=empty,margin={9pt,0pt},font={scriptsize}}
\newcommand{\lborder}[1]{\multicolumn{1}{;{1pt/2pt}c}{#1}}
\newcommand{\rborder}[1]{\multicolumn{1}{c;{1pt/2pt}}{#1}}
\newcommand{\lrborder}[1]{\multicolumn{1}{;{1pt/2pt}c;{1pt/2pt}}{#1}}
\setlength{\arraycolsep}{4pt}
\renewcommand{\arraystretch}{1}
\def\scaleboxes {0.7}
\centering
\small
\subfloat[(4,4)]{
	\scalebox{\scaleboxes}{$
	\begin{array}[b]{c|cccc|}
	\cline{2-5}
	1 & 1 & 1 & . & .\\
	2 & 1 & . & 1 & .\\
	3 & 1 & . & . & 1\\
	4 & 1 & . & . & .\\
	5 & . & 1 & 1 & .\\
	6 & . & 1 & . & 1\\
	7 & . & 1 & . & .\\
	8 & . & . & 1 & 1\\
	9 & . & . & 1 & .\\
	10 & . & . & . & 1\\
	\cline{2-5}
	\end{array}
	$}
}
\subfloat[(5,4)]{
	\scalebox{\scaleboxes}{$
	\begin{array}[b]{c|ccccc|}
	\cline{2-6}
	1 & 1 & 1 & . & . & .\\
	2 & 1 & . & 1 & . & .\\
	3 & 1 & . & . & 1 & .\\
	4 & 1 & . & . & . & 1\\
	5 & . & 1 & 1 & . & .\\
	6 & . & 1 & . & 1 & .\\
	7 & . & 1 & . & . & .\\
	8 & . & . & 1 & . & 1\\
	9 & . & . & 1 & . & .\\
	10 & . & . & . & 1 & 1\\
	11 & . & . & . & 1 & .\\
	12 & . & . & . & . & 1\\
  	\cline{2-6}
	\end{array}
	$}
}
\subfloat[(6,0)$\succ$(5,4)]{
	\scalebox{\scaleboxes}{$
	\begin{array}[b]{c|ccccc;{1pt/2pt}c|}
	\cline{2-7}
	1 & 1 & 1 & . & . & . & .\\
	2 & 1 & . & 1 & . & . & .\\
	3 & 1 & . & . & 1 & . & .\\
	4 & 1 & . & . & . & 1 & .\\
	5 & . & 1 & 1 & . & . & .\\
	6 & . & 1 & . & 1 & . & .\\
	7 & . & 1 & . & . & . & 1\\
	8 & . & . & 1 & . & 1 & .\\
	9 & . & . & 1 & . & . & 1\\
	10 & . & . & . & 1 & 1 & .\\
	11 & . & . & . & 1 & . & 1\\
	12 & . & . & . & . & 1 & 1\\
	\cline{2-7}
	\end{array}
	$}
}
\subfloat[(6,2)\{1\}$\succ$(4,4)]{
	\scalebox{\scaleboxes}{$
	\begin{array}[b]{c|cccccc|}
	\cline{2-7}
	1 & 1 & \rborder{1} & 1 & . & . & .\\
	\cdashline{5-6}[1pt/2pt]
	2 & 1 & \rborder{.} & . & \lborder{1} & \rborder{.} & .\\
	3 & 1 & \rborder{.} & . & \lborder{.} & \rborder{1} & .\\
	4 & 1 & \rborder{.} & . & \lborder{.} & \rborder{.} & 1\\
	5 & . & \rborder{1} & . & \lborder{1} & \rborder{.} & .\\
	6 & . & \rborder{1} & . & \lborder{.} & \rborder{1} & .\\
	7 & . & \rborder{1} & . & \lborder{.} & \rborder{.} & 1\\
	\cdashline{2-3}[1pt/2pt]
	8 & . & . & 1 & \lborder{1} & \rborder{.} & .\\
	9 & . & . & 1 & \lborder{.} & \rborder{1} & .\\
	10 & . & . & 1 & \lborder{.} & \rborder{.} & 1\\
	11 & . & . & . & \lborder{1} & \rborder{1} & 1\\
	\cline{2-7}
	\end{array}
	$}
}
\subfloat[(6,2)\{2\}$\succ$(4,4)]{
	\scalebox{\scaleboxes}{$
	\begin{array}[b]{c|cccccc|}
	\cline{2-7}
	1 & 1 & \rborder{1} & 1 & . & . & .\\
	\cdashline{5-6}[1pt/2pt]
	2 & 1 & \rborder{.} & . & \lborder{1} & \rborder{.} & .\\
	3 & 1 & \rborder{.} & . & \lborder{.} & \rborder{1} & .\\
	4 & 1 & \rborder{.} & . & \lborder{.} & \rborder{.} & 1\\
	5 & . & \rborder{1} & . & \lborder{1} & \rborder{.} & .\\
	6 & . & \rborder{1} & . & \lborder{.} & \rborder{1} & .\\
	7 & . & \rborder{1} & . & \lborder{.} & \rborder{.} & 1\\
	\cdashline{2-3}[1pt/2pt]
	8 & . & . & 1 & \lborder{1} & \rborder{.} & .\\
	9 & . & . & 1 & \lborder{.} & \rborder{1} & .\\
	10 & . & . & 1 & \lborder{.} & \rborder{.} & 1\\
	11 & . & . & . & \lborder{1} & \rborder{1} & .\\
	\cdashline{5-6}[1pt/2pt]
	12 & . & . & . & . & . & 1\\
	\cline{2-7}
	\end{array}
	$}
}
\subfloat[(6,2)\{3\}$\succ$(4,4)]{
	\scalebox{\scaleboxes}{$
	\begin{array}[b]{c|cccccc|}
	\cline{2-7}
	1 & 1 & 1 & . & \rborder{.} & . & .\\
	2 & 1 & . & 1 & \rborder{.} & . & .\\
	3 & 1 & . & . & \rborder{1} & . & .\\
	4 & 1 & . & . & \rborder{.} & 1 & .\\
	5 & . & 1 & 1 & \rborder{.} & . & .\\
	6 & . & 1 & . & \rborder{1} & . & .\\
	7 & . & 1 & . & \rborder{.} & 1 & .\\
	8 & . & . & 1 & \rborder{1} & . & .\\
	9 & . & . & 1 & \rborder{.} & . & 1\\
	10 & . & . & . & \rborder{1} & . & 1\\
	\cdashline{2-5}[1pt/2pt]
	11 & . & . & . & . & 1 & 1\\
	12 & . & . & . & . & 1 & .\\
	13 & . & . & . & . & . & 1\\
	\cline{2-7}
	\end{array}
	$}
}
\subfloat[(6,2)\{4\}$\succ$(5,4)]{
	\scalebox{\scaleboxes}{$
	\begin{array}[b]{c|cccccc|}
	\cline{2-7}
	1 & 1 & 1 & . & . & \rborder{.} & .\\
	2 & 1 & . & 1 & . & \rborder{.} & .\\
	3 & 1 & . & . & 1 & \rborder{.} & .\\
	4 & 1 & . & . & . & \rborder{1} & .\\
	5 & . & 1 & 1 & . & \rborder{.} & .\\
	6 & . & 1 & . & 1 & \rborder{.} & .\\
	7 & . & 1 & . & . & \rborder{.} & 1\\
	8 & . & . & 1 & . & \rborder{1} & .\\
	9 & . & . & 1 & . & \rborder{.} & 1\\
	10 & . & . & . & 1 & \rborder{1} & .\\
	11 & . & . & . & 1 & \rborder{.} & 1\\
	12 & . & . & . & . & \rborder{1} & .\\
	\cdashline{2-6}[1pt/2pt]
	13 & . & . & . & . & . & 1\\
	\cline{2-7}
	\end{array}
	$}
}

\subfloat[(6,4)\{1\}]{
	\scalebox{\scaleboxes}{$
	\begin{array}[b]{c|cccccc|}
	\cline{2-7}
	1 & 1 & 1 & 1 & . & . & .\\
	2 & 1 & . & . & 1 & . & .\\
	3 & 1 & . & . & . & 1 & .\\
	4 & 1 & . & . & . & . & 1\\
	5 & . & 1 & . & 1 & . & .\\
	6 & . & 1 & . & . & 1 & .\\
	7 & . & 1 & . & . & . & 1\\
	8 & . & . & 1 & 1 & . & .\\
	9 & . & . & 1 & . & 1 & .\\
	10 & . & . & 1 & . & . & 1\\
	11 & . & . & . & 1 & . & .\\
	12 & . & . & . & . & 1 & .\\
	13 & . & . & . & . & . & 1\\
	\cline{2-7}
	\end{array}
	$}
}
\subfloat[(6,4)\{2\}]{
	\scalebox{\scaleboxes}{$
	\begin{array}[b]{c|cccccc|}
	\cline{2-7}
	1 & 1 & 1 & . & . & . & .\\
	2 & 1 & . & 1 & . & . & .\\
	3 & 1 & . & . & 1 & . & .\\
	4 & 1 & . & . & . & 1 & .\\
	5 & . & 1 & 1 & . & . & .\\
	6 & . & 1 & . & 1 & . & .\\
	7 & . & 1 & . & . & . & 1\\
	8 & . & . & 1 & . & 1 & .\\
	9 & . & . & 1 & . & . & .\\
	10 & . & . & . & 1 & . & 1\\
	11 & . & . & . & 1 & . & .\\
	12 & . & . & . & . & 1 & 1\\
	13 & . & . & . & . & 1 & .\\
	14 & . & . & . & . & . & 1\\
	\cline{2-7}
	\end{array}
	$}
}
\subfloat[(6,4)\{3\}]{
	\scalebox{\scaleboxes}{$
	\begin{array}[b]{c|cccccc|}
	\cline{2-7}
	1 & 1 & 1 & . & . & . & .\\
	2 & 1 & . & 1 & . & . & .\\
	3 & 1 & . & . & 1 & . & .\\
	4 & 1 & . & . & . & . & .\\
	5 & . & 1 & 1 & . & . & .\\
	6 & . & 1 & . & 1 & . & .\\
	7 & . & 1 & . & . & . & .\\
	8 & . & . & 1 & . & 1 & .\\
	9 & . & . & 1 & . & . & 1\\
	10 & . & . & . & 1 & 1 & .\\
	11 & . & . & . & 1 & . & 1\\
	12 & . & . & . & . & 1 & 1\\
	13 & . & . & . & . & 1 & .\\
	14 & . & . & . & . & . & 1\\
	\cline{2-7}
	\end{array}
	$}
}
\subfloat[(6,4)\{4\}]{
	\scalebox{\scaleboxes}{$
	\begin{array}[b]{c|cccccc|}
	\cline{2-7}
	1 & 1 & 1 & . & . & . & .\\
	2 & 1 & . & 1 & . & . & .\\
	3 & 1 & . & . & 1 & . & .\\
	4 & 1 & . & . & . & 1 & .\\
	5 & . & 1 & 1 & . & . & .\\
	6 & . & 1 & . & 1 & . & .\\
	7 & . & 1 & . & . & 1 & .\\
	8 & . & . & 1 & . & . & 1\\
	9 & . & . & 1 & . & . & .\\
	10 & . & . & . & 1 & . & 1\\
	11 & . & . & . & 1 & . & .\\
	12 & . & . & . & . & 1 & 1\\
	13 & . & . & . & . & 1 & .\\
	14 & . & . & . & . & . & 1\\
	\cline{2-7}
	\end{array}
	$}
}
\subfloat[(6,6)\{1\}]{
	\scalebox{\scaleboxes}{$
	\begin{array}[b]{c|cccccc|}
	\cline{2-7}
	1 & 1 & 1 & . & . & . & .\\
	2 & 1 & . & 1 & . & . & .\\
	3 & 1 & . & . & 1 & . & .\\
	4 & 1 & . & . & . & . & .\\
	5 & . & 1 & 1 & . & . & .\\
	6 & . & 1 & . & . & 1 & .\\
	7 & . & 1 & . & . & . & .\\
	8 & . & . & 1 & . & . & 1\\
	9 & . & . & 1 & . & . & .\\
	10 & . & . & . & 1 & 1 & .\\
	11 & . & . & . & 1 & . & 1\\
	12 & . & . & . & 1 & . & .\\
	13 & . & . & . & . & 1 & 1\\
	14 & . & . & . & . & 1 & .\\
	15 & . & . & . & . & . & 1\\
	\cline{2-7}
	\end{array}
	$}
}
\subfloat[(6,6)\{2\}]{
	\scalebox{\scaleboxes}{$
	\begin{array}[b]{c|cccccc|}
	\cline{2-7}
	1 & 1 & 1 & . & . & . & .\\
	2 & 1 & . & 1 & . & . & .\\
	3 & 1 & . & . & 1 & . & .\\
	4 & 1 & . & . & . & . & .\\
	5 & . & 1 & . & . & 1 & .\\
	6 & . & 1 & . & . & . & 1\\
	7 & . & 1 & . & . & . & .\\
	8 & . & . & 1 & . & 1 & .\\
	9 & . & . & 1 & . & . & 1\\
	10 & . & . & 1 & . & . & .\\
	11 & . & . & . & 1 & 1 & .\\
	12 & . & . & . & 1 & . & 1\\
	13 & . & . & . & 1 & . & .\\
	14 & . & . & . & . & 1 & .\\
	15 & . & . & . & . & . & 1\\
	\cline{2-7}
	\end{array}
	$}
}
\caption{Matrix representation of $(a,b)$ absorbing set candidates with $a\leq 6$ that may represent an absorbing set in a TD LDPC code of column weight $4$. All depicted candidates are 4-colourable. If there are multiple candidates of the same size $(a,b)$, we use a postfix $\{i\}$ to determine an order. The symbol ``$\succ$'' means that the left-hand candidate is an extension of the right-hand candidate. The dots represent zero-entries.}
\vspace{1cm}
\label{fig:absorbing_set_classification}
\end{figure*}

\subsection{Classification Process}

In order to classificate the absorbing set candidates in TD LDPC codes based on transversal designs with block size $k$, we have written a program that outputs an exhaustive list of non-isomorphic set systems of block size $k$ that satisfy the combinatorial constraints (A)-(D) of Lemma~\ref{lemma:combinatorial_constraints_of_absorbing_sets} (up to a given size of $t$ blocks).

\emph{Approach:}
By starting with an empty set system, we successively extend the current system by a further block (in all possible ways) in compliance with some combinatorial rules that are necessary to build up an absorbing set candidate. If an extension fulfills all constraints of Lemma~\ref{lemma:combinatorial_constraints_of_absorbing_sets}, we add it to the list of absorbing set candidates. We continue until a maximum number of $t$ blocks is reached.

\noindent\rule[2pt]{254pt}{0.5pt}\\
\noindent $\mathcal{O}$ = \textbf{Classification}$(k,t)$\\
\noindent\rule[4pt]{254pt}{0.5pt}

\noindent\textsc{Input}: 
\begin{itemize}[leftmargin=18pt]
\item $k$: desired block size of the absorbing set candidates
\item $t$: maximum number of blocks
\end{itemize}

\noindent\textsc{Output}: 
\begin{itemize}[leftmargin=18pt]
\item $\mathcal{O}$: 
The output list $\mathcal{O}$ fills up with all non-isomorphic absorbing set candidates of block size $k$ and at most $t$ blocks that may occur in any TD LDPC code of column weight $k$. The presence or absence of these absorbing sets finally depends on the specific structure of a concrete TD LDPC code.
\end{itemize}

\noindent\textsc{Notations and Invariants:}
\begin{itemize}[leftmargin=18pt]
\item We denote the current set system by $\mathcal{S}$.
\item A set system $(\mathcal{P}',\mathcal{B}')$ is called an \emph{extension} of $\mathcal{S}=(\mathcal{P},\mathcal{B})$, if $\mathcal{P}\subseteq \mathcal{P}'$ and  $\mathcal{B}\subseteq \mathcal{B}'$. 
\item Let $\mathcal{E}_\varpi$, $1\leq \varpi \leq t$, be $t$ global lists of non-isomorphic extensions of size $\varpi$ that have already been processed. Note that the set systems collected in $\mathcal{E}_\varpi$ are not necessarily absorbing set candidates. 
\end{itemize}

\noindent\textsc{Algorithm:}
\begin{enumerate}[label=(\arabic*),leftmargin=18pt]

\item \emph{Initialization}: We start with an empty set system $\mathcal{S}$. Define
$\mathcal{E}_\varpi=\varnothing$ for $1\leq \varpi \leq t$ and $\mathcal{O}=\varnothing$.

\item \emph{Find extensions of $\mathcal{S}$}:
We extend $\mathcal{S}$ by a further block of size $k$ in all possible ways. Let $\mathcal{S}_1, \mathcal{S}_2,\hdots, \mathcal{S}_\mu$ be the non-isomorphic extensions of $\mathcal{S}$ with the following restrictions:
\begin{enumerate}[label=\alph*),leftmargin=15pt]
\item The constraints (A)-(C) of Lemma~\ref{lemma:combinatorial_constraints_of_absorbing_sets} must be valid.
\item $\mathcal{S}$ must be connected, i.e, for any two distinct subsets of blocks (called components) there must be at least one point that is contained in both components. 
\end{enumerate}
Note that any violation of a) is irreparable by extending the set system such that the concerned extensions can be discarded. 
By contrast, the constraint b) is reparable by adding some new blocks properly such that the isolated components get connected. Nonetheless, we may discard unconnected set systems at this early stage since all extensions of these systems will be found by extending the blocks in a different order.
All constraints therefore reduce the processing complexity.

\item \emph{For each extension $\mathcal{S}_i$ do}:
\begin{enumerate}[label=\alph*),leftmargin=15pt,topsep=0pt]
\item Let $\varpi$ be the number of blocks of $\mathcal{S}_i$. If $\mathcal{S}_i$ is isomorphic to any set system of $\mathcal{E}_\varpi$, we discard $\mathcal{S}_i$ and continue with the next extension, else we add $\mathcal{S}_i$ to $\mathcal{E}_\varpi$.
\item We add $\mathcal{S}_i$ to the output $\mathcal{O}$, if
$\mathcal{S}_i$ satisfies constraint (D) of Lemma~\ref{lemma:combinatorial_constraints_of_absorbing_sets}.
\item If $\varpi=t$, we stop processing this extension, else we apply steps (2)-(3) recursively to $\mathcal{S}:= \mathcal{S}_i$.
\end{enumerate}
Note that the violation of constraint (D) is reparable by adding some new blocks properly, such that the extensions must be processed recursively even if they violate (D).

\end{enumerate}

\subsection{Classification for Column Weight $k=3$}

Fig.~\ref{fig:absorbing_set_classification_k3} shows the matrix representations of the $(a,b)$ absorbing set candidates which has been obtained by the procedure
\[
\mathcal{O} = \text{\textbf{Classification}} (3,6).
\]

\subsection{Classification for Column Weight $k=4$}

Fig.~\ref{fig:absorbing_set_classification} shows the matrix representations of the $(a,b)$ absorbing set candidates which has been obtained by the procedure
\[
\mathcal{O} = \text{\textbf{Classification}} (4,6).
\]

\section{TD LDPC Codes Based on MOLS\\ with Cyclic Structure}\label{codes_on_simple_structured_MOLS}

For the remainder of the paper, we employ a special class of cyclic-structured MOLS that are ideally suited for constructing TD LDPC codes with excellent decoding performances. More precisely, the presented MOLS provide an algebraic approach for the investigation and elimination of harmful absorbing sets, leading to codes with low error-floors and thus to near-optimal performances over the AWGN channel via SPA decoding. In \cite{GrunHub2013}, we have already used the same class of MOLS to produce powerful TD LDPC codes with improved stopping set distributions over the BEC. 

\subsection{Construction of Latin Squares with Cyclic Structure}

We first need the following straightforward lemma, giving mutually orthogonal Latin squares isomorphic to Cayley addition tables (cf.~\cite{crc_handbook}):

\begin{lemma}\label{simple_structured_MOLS}
Let $\mathbb{F}_q$ be the Galois field of any prime power order $q$. We obtain a Latin square $\CMcal{L}^{(\alpha,\beta)}_q$ of order $q$ and \emph{scale factors} $\alpha,\beta \in \mathbb{F}_{q}^{*}= \mathbb{F}_q\setminus\{0\}$ by
\[
\CMcal{L}^{(\alpha,\beta)}_q [x,y] = \alpha x + \beta y,\ x,y \in\mathbb{F}_q,
\]
with row set, column set and symbol set $X=Y=S=\mathbb{F}_q$, respectively.
If $\beta=1$, we simply write $\CMcal{L}^{(\alpha)}_q$ instead of $\CMcal{L}^{(\alpha,1)}_q$.  
The cyclic nature of the Latin square can be comprehended by considering that each column is obtained from the previous one by adding the same difference to each element of the previous column modulo $q$. The same holds for the rows.
In the remainder of the paper we consequently speak of cyclic Latin squares and of cyclic MOLS.

\subsection{Cyclic MOLS}
\label{cyclic_MOLS}

Now, we describe under which conditions sets of cyclic Latin squares are MOLS. We first need the following Lemma:

\begin{lemma} \label{lemma:orthogonality_latin_squares}
Two cyclic Latin squares $\CMcal{L}^{(\alpha_1,\beta_1)}_q$ and $\CMcal{L}^{(\alpha_2,\beta_2)}_q$ are orthogonal if and only if $\alpha_1 \beta_2 \neq \alpha_2 \beta_1$ over $\mathbb{F}_q$.
\end{lemma}
\begin{proof}
See Appendix~\ref{proof:orthogonality_of_latin_squares}.
\end{proof}
\end{lemma}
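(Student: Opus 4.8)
The plan is to characterize orthogonality of the two cyclic Latin squares directly from the definition and reduce it to a linear-algebraic condition on the scale factors. Recall that $\CMcal{L}^{(\alpha_1,\beta_1)}_q$ and $\CMcal{L}^{(\alpha_2,\beta_2)}_q$ are orthogonal iff the map sending a cell $(x,y)\in\mathbb{F}_q\times\mathbb{F}_q$ to the ordered pair of symbols $\big(\CMcal{L}^{(\alpha_1,\beta_1)}_q[x,y],\,\CMcal{L}^{(\alpha_2,\beta_2)}_q[x,y]\big)=(\alpha_1 x+\beta_1 y,\ \alpha_2 x+\beta_2 y)$ is injective (equivalently, bijective, since domain and codomain are both finite of size $q^2$). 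So the whole statement amounts to deciding when this map is injective.

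First I would write the pair-valued map as a linear map $T:\mathbb{F}_q^2\to\mathbb{F}_q^2$ given by the matrix
\[
M=\begin{pmatrix}\alpha_1 & \beta_1\\ \alpha_2 & \beta_2\end{pmatrix},
\]
so that $T(x,y)^{\mathsf T}=M\,(x,y)^{\mathsf T}$. Injectivity of a linear map over a field is equivalent to having trivial kernel, which in turn is equivalent to $\det M\neq 0$. Since $\det M=\alpha_1\beta_2-\alpha_2\beta_1$, this is exactly the condition $\alpha_1\beta_2\neq\alpha_2\beta_1$ claimed in the lemma. The forward and reverse directions then both fall out of the single equivalence ``$T$ injective $\iff\det M\neq 0$'': if $\det M\neq 0$ the map is injective so the pairs are distinct and the squares are orthogonal, while if $\det M=0$ there is a nonzero $(u,v)$ in the kernel, and then for any cell $(x,y)$ the distinct cell $(x+u,y+v)$ produces the identical pair of symbols, violating orthogonality.

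For completeness I would also confirm at the outset that each $\CMcal{L}^{(\alpha,\beta)}_q$ really is a Latin square, which is needed implicitly for the notion of orthogonality to apply: with $\alpha,\beta\in\mathbb{F}_q^{*}$, fixing a row $x$ makes $y\mapsto\alpha x+\beta y$ a bijection (since $\beta\neq 0$), and symmetrically for columns, so every symbol appears exactly once per row and per column. This matches the earlier assertion in Lemma~\ref{simple_structured_MOLS} and lets me treat orthogonality purely as the injectivity of the combined symbol map without further caveats.

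The only genuine subtlety, and the step I would treat most carefully, is the translation between the combinatorial definition of orthogonality given in Subsection on Latin Squares (the ordered pairs $(L_1[x,y],L_2[x,y])$ are distinct over all cells) and the linear-algebra statement; once that bridge is made explicit, the determinant computation is immediate and the equivalence is transparent. In particular I would be explicit that the kernel vector $(u,v)$ must be nonzero and that adding it to a cell index yields a genuinely different cell, so that the failure of injectivity produces a bona fide pair of coincident symbol-pairs rather than a degenerate repetition. No part of this requires the field to have prime (rather than prime-power) order; the argument uses only that $\mathbb{F}_q$ is a field, so it applies to all prime powers $q$ as stated.
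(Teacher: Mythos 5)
Your proof is correct. It differs in packaging from the paper's argument (Appendix A), which proves both directions separately by contraposition: assuming non-orthogonality, the paper writes down the two coincidence equations $\alpha_i(x_1-x_2)+\beta_i(y_1-y_2)=0$ and eliminates by hand (multiplying by $-\alpha_2$ and $\alpha_1$ and adding) to reach $(y_1-y_2)(\alpha_1\beta_2-\alpha_2\beta_1)=0$; for the converse it assumes $\alpha_1\beta_2=\alpha_2\beta_1$ and shows any symbol coincidence in the first square propagates to the second. Your version subsumes both directions in the single equivalence ``the pair map $(x,y)\mapsto(\alpha_1x+\beta_1y,\ \alpha_2x+\beta_2y)$ is injective iff $\det\bigl(\begin{smallmatrix}\alpha_1&\beta_1\\ \alpha_2&\beta_2\end{smallmatrix}\bigr)\neq0$,'' which is the same elimination recognized as a determinant. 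What your route buys is economy and the avoidance of a small case the paper has to dispatch (arguing that $y_1\neq y_2$, which tacitly uses the Latin-square property of the individual squares); the kernel-vector construction makes the failure of orthogonality explicit without distinguishing coordinates. What the paper's route buys is that it stays entirely within the combinatorial definition and elementary field arithmetic, with no appeal to linear-map machinery. Your added check that each $\CMcal{L}^{(\alpha,\beta)}_q$ is genuinely a Latin square (since $\alpha,\beta\in\mathbb{F}_q^*$) and your remark that only the field axioms are used are both accurate and harmless additions.
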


Clearly, every Latin square $\CMcal{L}^{(\alpha,\beta)}_q$ can be associated with a pair of scale factors $(\alpha,\beta)\in (\mathbb{F}_q^*)^2$. Now, we define an equivalence relation $\sim$ on $(\mathbb{F}_q^*)^2$, where $(u,v)\sim (u',v')$ if and only if $uv' = u'v$ over $\mathbb{F}_q$.
Let $\mathbb{F}_q^*=\{\phi_1,\phi_2,\hdots,\phi_{q-1}\}$. The equivalence classes of $\sim$ are given by
$\CMcal{U}_i:= \{(u,v)\in (\mathbb{F}_q^*)^2 : (u,v)\sim (\phi_i,1) \}=\{(x\phi_i, x) : x\in \mathbb{F}_q^* \}$ for $1\leq i\leq q-1$, where $(\phi_i,1)$ is a representative of the $i$-th class.
The classes $\CMcal{U}_i$ partition the set $(\mathbb{F}_q^*)^2$. 

\begin{theorem}
Let $(u_1,v_1),\hdots,(u_{q-1},v_{q-1})$ be any representative system, i.e., $(u_i,v_i)\in \CMcal{U}_i$. Then, the associated Latin squares $\{\CMcal{L}_q^{(u_i,v_i)} : 1\leq i\leq q-1\}$ are $q-1$ MOLS. We can use any $m$-subset of these MOLS ($1\leq m \leq q-1$) to build up a TD$(m+2,q)$ and thus to construct a TD LDPC code.
\end{theorem}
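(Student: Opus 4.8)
The plan is to prove the mutual orthogonality of the $q-1$ squares by reducing it, pair by pair, to the algebraic criterion of Lemma~\ref{lemma:orthogonality_latin_squares}, and then to read off the transversal design and the code directly from Theorem~\ref{equiv_MOLS_TDs}.

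First I would recall that a family of Latin squares is mutually orthogonal exactly when every pair of its members is orthogonal, so it suffices to treat a fixed pair $\CMcal{L}_q^{(u_i,v_i)}$, $\CMcal{L}_q^{(u_j,v_j)}$ with $i\neq j$. By Lemma~\ref{lemma:orthogonality_latin_squares} these two squares are orthogonal if and only if $u_i v_j \neq u_j v_i$ over $\mathbb{F}_q$, which is precisely the assertion that $(u_i,v_i)\not\sim(u_j,v_j)$ for the relation $\sim$ introduced above. Thus the whole orthogonality statement is equivalent to showing that representatives of distinct equivalence classes are never $\sim$-related.

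The key step is to analyse $\sim$ itself. Since all coordinates are nonzero, $(u,v)\sim(u',v')$ is equivalent to the equality of ratios $u/v = u'/v'$ in $\mathbb{F}_q^*$, which is manifestly an equivalence relation whose classes are indexed by the common ratio; there are exactly $|\mathbb{F}_q^*| = q-1$ of them, matching the index range and confirming that each $\CMcal{U}_i$ collects all pairs of ratio $\phi_i$. For $i\neq j$ the representatives satisfy $u_i/v_i=\phi_i\neq\phi_j=u_j/v_j$, hence $(u_i,v_i)\not\sim(u_j,v_j)$, giving the required orthogonality for every pair. Consequently the $q-1$ squares form a set of MOLS, which is in fact the maximum possible number of MOLS of order $q$.

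For the final assertions I would note that any $m$-subset of a set of MOLS is again mutually orthogonal, so applying the forward direction of Theorem~\ref{equiv_MOLS_TDs} with $k=m+2\geq 3$ (valid because $m\geq 1$) yields a TD$(m+2,q)$; its incidence matrix then serves as the parity-check matrix of a TD LDPC code exactly as described in Subsection~\ref{codes_from_TDs}. I do not anticipate a real obstacle here: once Lemma~\ref{lemma:orthogonality_latin_squares} is granted, the argument is essentially bookkeeping about the partition of $(\mathbb{F}_q^*)^2$ by $\sim$. The only points demanding care are verifying that $\sim$ is genuinely an equivalence relation (so that distinct classes are disjoint) and confirming that the representative system has exactly $q-1$ members, so that the construction attains the optimal count of orthogonal squares of order $q$.
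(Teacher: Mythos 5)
Your proposal is correct and follows essentially the same route as the paper: reduce mutual orthogonality to pairwise orthogonality, invoke Lemma~\ref{lemma:orthogonality_latin_squares} to translate this into $u_i v_j \neq u_j v_i$, and observe that this holds precisely because the representatives lie in distinct $\sim$-classes, after which Theorem~\ref{equiv_MOLS_TDs} yields the transversal design and the code. Your additional checks (that $\sim$ is an equivalence relation via ratios and that there are exactly $q-1$ classes) are sound elaborations of what the paper leaves implicit.
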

\begin{proof}
Since $(u_i,v_i)$ and $(u_j,v_j)$ for $i\neq j$ are in different equivalence classes, it follows that $u_i v_j \neq u_j v_i$. Hence, the associated Latin squares are orthogonal by Lemma~\ref{lemma:orthogonality_latin_squares}.
\end{proof}

\begin{figure}[!t]
\renewcommand{\arraystretch}{1}
\setlength{\arraycolsep}{3.5pt}
\centering
\begin{tikzpicture}[xscale=2.2,yscale=2.3]
\path [draw] (0.55,0.45) rectangle (1.45,-3.55);
\path [draw] (1.55,0.45) rectangle (2.45,-3.55);
\path [draw] (2.55,0.45) rectangle (3.45,-3.55);
\path [draw] (3.55,0.45) rectangle (4.45,-3.55);

\draw[line width=0.5pt, dashed] (0.52,0.48) -- ++(3.96,0) -- ++(0,-1.05) -- ++(-3.96,0) -- ++(0,1.05);

\node[scale=0.8] at (1,0) {
		$\begin{array}[b]{|ccccc|}
		\hline
		0&1&2&3&4\\
		1&2&3&4&0\\
		2&3&4&0&1\\
		3&4&0&1&2\\
		4&0&1&2&3\\
		\hline
		\end{array}$
};
\node[scale=0.8] at (1,-0.46) {$\CMcal{L}^{(1,1)}_5$};

\node[scale=0.8] at (2,0) {
		$\begin{array}[b]{|ccccc|}
		\hline
		0&1&2&3&4\\
		2&3&4&0&1\\
		4&0&1&2&3\\
		1&2&3&4&0\\
		3&4&0&1&2\\
		\hline
		\end{array}$
};
\node[scale=0.8] at (2,-0.46) {$\CMcal{L}^{(2,1)}_5$};

\node[scale=0.8] at (3,0) {
		$\begin{array}[b]{|ccccc|}
		\hline
		0&1&2&3&4\\
		3&4&0&1&2\\
		1&2&3&4&0\\
		4&0&1&2&3\\
		2&3&4&0&1\\
		\hline
		\end{array}$
};
\node[scale=0.8] at (3,-0.46) {$\CMcal{L}^{(3,1)}_5$};

\node[scale=0.8] at (4,0) {
		$\begin{array}[b]{|ccccc|}
		\hline
		0&1&2&3&4\\
		4&0&1&2&3\\
		3&4&0&1&2\\
		2&3&4&0&1\\
		1&2&3&4&0\\
		\hline
		\end{array}$
};
\node[scale=0.8] at (4,-0.46) {$\CMcal{L}^{(4,1)}_5$};

\node[scale=0.8] at (1,-1) {
		$\begin{array}[b]{|ccccc|}
		\hline
		0&2&4&1&3\\
		2&4&1&3&0\\
		4&1&3&0&2\\
		1&3&0&2&4\\
		3&0&2&4&1\\
		\hline
		\end{array}$
};
\node[scale=0.8] at (1,-1.46) {$\CMcal{L}^{(2,2)}_5$};

\node[scale=0.8] at (2,-1) {
		$\begin{array}[b]{|ccccc|}
		\hline
		0&2&4&1&3\\
		4&1&3&0&2\\
		3&0&2&4&1\\
		2&4&1&3&0\\
		1&3&0&2&4\\
		\hline
		\end{array}$
};
\node[scale=0.8] at (2,-1.46) {$\CMcal{L}^{(4,2)}_5$};

\node[scale=0.8] at (3,-1) {
		$\begin{array}[b]{|ccccc|}
		\hline
		0&2&4&1&3\\
		1&3&0&2&4\\
		2&4&1&3&0\\
		3&0&2&4&1\\
		4&1&3&0&2\\
		\hline
		\end{array}$
};
\node[scale=0.8] at (3,-1.46) {$\CMcal{L}^{(1,2)}_5$};

\node[scale=0.8] at (4,-1) {
		$\begin{array}[b]{|ccccc|}
		\hline
		0&2&4&1&3\\
		3&0&2&4&1\\
		1&3&0&2&4\\
		4&1&3&0&2\\
		2&4&1&3&0\\
		\hline
		\end{array}$
};
\node[scale=0.8] at (4,-1.46) {$\CMcal{L}^{(3,2)}_5$};

\node[scale=0.8] at (1,-2) {
		$\begin{array}[b]{|ccccc|}
		\hline
		0&3&1&4&2\\
		3&1&4&2&0\\
		1&4&2&0&3\\
		4&2&0&3&1\\
		2&0&3&1&4\\
		\hline
		\end{array}$
};
\node[scale=0.8] at (1,-2.46) {$\CMcal{L}^{(3,3)}_5$};

\node[scale=0.8] at (2,-2) {
		$\begin{array}[b]{|ccccc|}
		\hline
		0&3&1&4&2\\
		1&4&2&0&3\\
		2&0&3&1&4\\
		3&1&4&2&0\\
		4&2&0&3&1\\
		\hline
		\end{array}$
};
\node[scale=0.8] at (2,-2.46) {$\CMcal{L}^{(1,3)}_5$};

\node[scale=0.8] at (3,-2) {
		$\begin{array}[b]{|ccccc|}
		\hline
		0&3&1&4&2\\
		4&2&0&3&1\\
		3&1&4&2&0\\
		2&0&3&1&4\\
		1&4&2&0&3\\
		\hline
		\end{array}$
};
\node[scale=0.8] at (3,-2.46) {$\CMcal{L}^{(4,3)}_5$};

\node[scale=0.8] at (4,-2) {
		$\begin{array}[b]{|ccccc|}
		\hline
		0&3&1&4&2\\
		2&0&3&1&4\\
		4&2&0&3&1\\
		1&4&2&0&3\\
		3&1&4&2&0\\
		\hline
		\end{array}$
};
\node[scale=0.8] at (4,-2.46) {$\CMcal{L}^{(2,3)}_5$};

\node[scale=0.8] at (1,-3) {
		$\begin{array}[b]{|ccccc|}
		\hline
		0&4&3&2&1\\
		4&3&2&1&0\\
		3&2&1&0&4\\
		2&1&0&4&3\\
		1&0&4&3&2\\
		\hline
		\end{array}$
};
\node[scale=0.8] at (1,-3.46) {$\CMcal{L}^{(4,4)}_5$};

\node[scale=0.8] at (2,-3) {
		$\begin{array}[b]{|ccccc|}
		\hline
		0&4&3&2&1\\
		3&2&1&0&4\\
		1&0&4&3&2\\
		4&3&2&1&0\\
		2&1&0&4&3\\
		\hline
		\end{array}$
};
\node[scale=0.8] at (2,-3.46) {$\CMcal{L}^{(3,4)}_5$};

\node[scale=0.8] at (3,-3) {
		$\begin{array}[b]{|ccccc|}
		\hline
		0&4&3&2&1\\
		2&1&0&4&3\\
		4&3&2&1&0\\
		1&0&4&3&2\\
		3&2&1&0&4\\
		\hline
		\end{array}$
};
\node[scale=0.8] at (3,-3.46) {$\CMcal{L}^{(2,4)}_5$};

\node[scale=0.8] at (4,-3) {
		$\begin{array}[b]{|ccccc|}
		\hline
		0&4&3&2&1\\
		1&0&4&3&2\\
		2&1&0&4&3\\
		3&2&1&0&4\\
		4&3&2&1&0\\
		\hline
		\end{array}$
};
\node[scale=0.8] at (4,-3.46) {$\CMcal{L}^{(1,4)}_5$};
\end{tikzpicture}
\caption{The solid-line boxes represent the Latin squares associated with the equivalence classes $\CMcal{U}_1,\hdots,\CMcal{U}_4$ over $(\mathbb{F}_5^*)^2$. The Latin squares of the first row, bordered by a dashed line, constitute a possible representative system of these classes and thus are MOLS.}
\label{equivclasses}
\end{figure}

\begin{example}
Fig. \ref{equivclasses} depicts the four equivalence classes over $(\mathbb{F}_5^*)^2$, where the members of the equivalence classes are represented by the associated Latin squares. The Latin squares of the first row, bordered by a dashed line, constitute a possible representative system of these classes and thus are MOLS. 
\end{example}

\subsection{Transversal Designs Based on Cyclic MOLS}
\label{transversal_design_based_on_cyclic_MOLS}

Let $\CMcal{L}^{(\alpha_1,\beta_1)}_q,\hdots,\CMcal{L}^{(\alpha_m,\beta_m)}_q$ be $m$ cyclic MOLS with row, column and symbol sets $X=Y=S_i=\mathbb{F}_q$ for $1\leq i\leq m$. First, define $G_1:= X\times \{1\}$, $G_2:= Y\times \{2\}$ and $G_{i+2}:= S_i\times \{i+2\}$.
By applying the process given in the proof of Theorem~\ref{equiv_MOLS_TDs}, we obtain a transversal design $(\CMcal{P},\CMcal{G},\CMcal{B})$ with point set $\CMcal{P}=\{G_1 \cup\hdots\cup G_{m+2}\}$, groups $\CMcal{G}=\{G_1,\hdots, G_{m+2}\}$ and block set
$\CMcal{B}=\big\{\{(x,1),(y,2),(s_1,3), \hdots, (s_m,m+2)\}: x,y \in \mathbb{F}_q,\ s_1:= \alpha_1 x + \beta_1 y,\ s_m:= \alpha_m x + \beta_m y\big\}$.

\begin{example}
Let $\CMcal{L}^{(1,1)}_q,\CMcal{L}^{(2,1)}_q$ be two MOLS with $X=Y=S_1=S_2=\mathbb{F}_q$. We obtain a transversal design with 
$\CMcal{P}=\mathbb{F}_q \times \{1,2,3,4\}$, groups $\CMcal{G}=\big\{\mathbb{F}_q\times \{i\}: i=1,2,3,4 \big\}$ and blocks $\CMcal{B}=\big\{\{(x,1),(y,2), (x + y,3), (2x + y,3)\}:x,y\in \mathbb{F}_q\big\}$ with all computations over $\mathbb{F}_q$. Both MOLS and the arising transversal design are visualized in Fig.~\ref{fig:MOLS}.
\end{example}

\subsection{Properties of $\mathscr{L}^m_q$-TD LDPC Codes}

Define $\mathscr{L}^m_q=\big\{
\{ \CMcal{L}^{(\alpha_1,\beta_1)}_q, \hdots, \CMcal{L}^{(\alpha_m,\beta_m)}_q \}:
\alpha_i,\beta_i \in \mathbb{F}_q^*,\allowbreak \alpha_i\beta_j \neq \alpha_j\beta_i \text{ for } 1\leq i,j \leq m \text{ and } i\neq j
\big\}$
as the family of all $m$-sets of cyclic MOLS of order $q$. Each transversal design that is based on a set of $\mathscr{L}_q^m$ is referred to as an \emph{$\mathscr{L}^m_q$-TD}. By taking the incidence matrix of such a transversal design as the parity-check matrix of a code, we obtain an \emph{$\mathscr{L}^m_q$-TD LDPC code}. Note that the order of the Latin squares within a set of MOLS is irrelevant, since every order leads to the same code.

An $\mathscr{L}^m_q$-TD LDPC code has block length $N=q^2$, rate $R\geq (q-m-2)/q$, and the code's parity-check matrix is regular with column weight $m+2$ and row weight $q$. 
We show in \cite{GrunHub2013} that these codes have quasi-cyclic structure, leading to a low encoding complexity linear with the block length. The parity-check matrix of a  quasi-cyclic $\mathscr{L}^m_q$-TD LDPC code consists of $(m+2)\times q$ circulant submatrices (called circulants) of size $q\times q$. For a more flexible code design, it is possible to use any grid of circulants to generate a wide spectrum of block lengths and code rates for any prime power order $q$ \cite{GrunHub2013}.

\subsection{Simplification: Reduced Form of MOLS from $\mathscr{L}^m_q$}

Let $\CMcal{M}:=\big\{\CMcal{L}^{(\alpha_1,\beta_1)}_q,\hdots,\CMcal{L}^{(\alpha_m,\beta_m)}_q\big\}\in \mathscr{L}^m_q$ be any set of cyclic MOLS, $\CMcal{D}_{\CMcal{M}}$ be the corresponding $\mathscr{L}^m_q$-TD and $\CMcal{C}(\CMcal{D}_{\CMcal{M}})$ be the $\mathscr{L}^m_q$-TD LDPC code based on $\CMcal{D}_{\CMcal{M}}$.

\begin{lemma} \label{lemma:replacing_latin_squares}
Let $\ell_1,\hdots,\ell_m\in \mathbb{F}_q^*$ and $\CMcal{M}':=\big\{\CMcal{L}^{(\alpha'_1, \beta'_1)}_q,\hdots,\allowbreak \CMcal{L}^{(\alpha'_m,\beta'_m)}_q\big\}$ with $(\alpha_i',\beta_i'):=(\ell_i\alpha_i, \ell_i\beta_i)$ for $1\leq i\leq m$. Then, it follows that $\CMcal{C}(\CMcal{D}_{\CMcal{M}'})=\CMcal{C}(\CMcal{D}_{\CMcal{M}})$.
\end{lemma}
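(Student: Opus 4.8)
The plan is to show that passing from $\CMcal{M}$ to $\CMcal{M}'$ merely relabels the points inside the symbol groups of the underlying transversal design, which amounts to a permutation of the rows of the parity-check matrix and therefore leaves the code unchanged.

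First I would record the elementary effect of scaling on a single cyclic Latin square. Since
\[
\CMcal{L}^{(\ell_i\alpha_i,\ell_i\beta_i)}_q[x,y]=\ell_i\alpha_i x+\ell_i\beta_i y=\ell_i\big(\alpha_i x+\beta_i y\big)=\ell_i\cdot\CMcal{L}^{(\alpha_i,\beta_i)}_q[x,y],
\]
the square $\CMcal{L}^{(\alpha'_i,\beta'_i)}_q$ is obtained from $\CMcal{L}^{(\alpha_i,\beta_i)}_q$ by applying the symbol permutation $s\mapsto \ell_i s$ of $\mathbb{F}_q$, which is a bijection because $\ell_i\in\mathbb{F}_q^*$. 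I would also note in passing that $\CMcal{M}'\in\mathscr{L}^m_q$, so that $\CMcal{D}_{\CMcal{M}'}$ and its code are well defined: the new scale factors lie in $\mathbb{F}_q^*$, and $\alpha'_i\beta'_j-\alpha'_j\beta'_i=\ell_i\ell_j(\alpha_i\beta_j-\alpha_j\beta_i)\neq 0$ for $i\neq j$ by Lemma~\ref{lemma:orthogonality_latin_squares} applied to $\CMcal{M}$.

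Next I would trace this relabeling through the construction of Subsection~\ref{transversal_design_based_on_cyclic_MOLS}. Both designs share the point groups $G_1=X\times\{1\}$ and $G_2=Y\times\{2\}$ and index their blocks by the same pairs $(x,y)\in\mathbb{F}_q\times\mathbb{F}_q$; the block attached to $(x,y)$ in $\CMcal{D}_{\CMcal{M}'}$ is $\{(x,1),(y,2),(\ell_1 s_1,3),\hdots,(\ell_m s_m,m+2)\}$, where $s_i=\alpha_i x+\beta_i y$ is the corresponding symbol in $\CMcal{D}_{\CMcal{M}}$. Consequently, in $\CMcal{N}(\CMcal{D}_{\CMcal{M}'})$ the point $(s,i+2)$ is incident to block $(x,y)$ precisely when $\ell_i^{-1}s=\alpha_i x+\beta_i y$, i.e.\ exactly when $(\ell_i^{-1}s,i+2)$ is incident to $(x,y)$ in $\CMcal{N}(\CMcal{D}_{\CMcal{M}})$. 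Letting $\rho$ be the permutation of $\CMcal{P}$ that fixes $G_1\cup G_2$ and sends $(s,i+2)\mapsto(\ell_i^{-1}s,i+2)$ for $1\le i\le m$, the matrix $\CMcal{N}(\CMcal{D}_{\CMcal{M}'})$ is thus obtained from $\CMcal{N}(\CMcal{D}_{\CMcal{M}})$ by permuting its rows according to $\rho$, with the column (block) ordering untouched. I would finish by invoking the standard fact that a parity-check matrix $H$ and any row-permuted version $PH$ (with $P$ a permutation matrix) have the same null space, since $PHc^{\mathsf T}=0\iff Hc^{\mathsf T}=0$; equivalently, the two codes differ only by a reordering of their parity-check equations. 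Hence $\CMcal{C}(\CMcal{D}_{\CMcal{M}'})=\CMcal{C}(\CMcal{D}_{\CMcal{M}})$.

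The main thing to get right is bookkeeping rather than any genuine difficulty: one must verify that only the symbol-group points are relabeled while the groups $G_1,G_2$ and the entire block (column) indexing are preserved, so that the two incidence matrices coincide up to a pure row permutation and the resulting codes are literally equal, not merely equivalent.
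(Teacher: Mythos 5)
Your proposal is correct and follows essentially the same route as the paper's proof: the scaling $(\alpha_i,\beta_i)\mapsto(\ell_i\alpha_i,\ell_i\beta_i)$ is a renaming of the symbols of each Latin square, hence a renaming of the points of the transversal design, hence a row permutation of the parity-check matrix, which leaves the code unchanged. Your version simply makes explicit the bookkeeping (the permutation $\rho$ fixing $G_1\cup G_2$ and the preserved block indexing) that the paper leaves implicit.
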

\begin{proof}
Observe that $\CMcal{L}^{(\alpha'_i,\beta'_i)}_q[x,y]=\ell_i \CMcal{L}^{(\alpha_i,\beta_i)}_q[x,y]$ over $\mathbb{F}_q$, i.e., there is a bijection between the symbols of $\CMcal{L}^{(\alpha_i,\beta_i)}_q$ and $\CMcal{L}^{(\alpha'_1,\beta'_1)}_q$ that preserves the structure of the Latin squares (renaming of the symbols). 
This implies a renaming of the points of $\CMcal{D}_{\CMcal{M}}$ and, equivalently, a reordering of the rows of the parity-check matrix of 
$\CMcal{C}(\CMcal{D}_{\CMcal{M}})$. It is obvious that any reordering of the rows of a code's parity-check matrix does not change the code.
\end{proof}

\begin{theorem}\label{reduced_form}
The set of cyclic MOLS $\CMcal{M}$ can be reduced to the form $\CMcal{M}':=\big\{\CMcal{L}^{(\alpha'_1,1)}_q,\hdots,\CMcal{L}^{(\alpha'_m,1)}_q\big\}$ with $\alpha'_i:= \alpha_i\beta_i^{-1}$ such that $\CMcal{C}(\CMcal{D}_{\CMcal{M}'})=\CMcal{C}(\CMcal{D}_{\CMcal{M}})$. Then, $\CMcal{M}'$ is called the \emph{reduced form} of $\CMcal{M}$.
\end{theorem}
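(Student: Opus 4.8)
The plan is to apply Lemma~\ref{lemma:replacing_latin_squares} with a carefully chosen tuple of scale factors $(\ell_1,\hdots,\ell_m)$ that normalizes each Latin square to have second scale factor equal to $1$. Since each $\CMcal{L}^{(\alpha_i,\beta_i)}_q$ carries $\beta_i\in\mathbb{F}_q^*$, the natural choice is $\ell_i:=\beta_i^{-1}$, which is well-defined because $\mathbb{F}_q^*$ is a group under multiplication. First I would set $\ell_i:=\beta_i^{-1}$ and compute the resulting scale factors via the substitution $(\alpha_i',\beta_i')=(\ell_i\alpha_i,\ell_i\beta_i)$ prescribed in Lemma~\ref{lemma:replacing_latin_squares}, obtaining $\beta_i'=\beta_i^{-1}\beta_i=1$ and $\alpha_i'=\alpha_i\beta_i^{-1}$, exactly the form claimed in the statement.

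Next I would invoke Lemma~\ref{lemma:replacing_latin_squares} directly to conclude $\CMcal{C}(\CMcal{D}_{\CMcal{M}'})=\CMcal{C}(\CMcal{D}_{\CMcal{M}})$ for the set $\CMcal{M}'=\big\{\CMcal{L}^{(\alpha_1\beta_1^{-1},1)}_q,\hdots,\CMcal{L}^{(\alpha_m\beta_m^{-1},1)}_q\big\}$. This is the heart of the argument, and it is essentially immediate once the $\ell_i$ have been specified: the lemma guarantees that scaling each Latin square by a nonzero field element merely renames its symbols, hence permutes the rows of the parity-check matrix and leaves the code unchanged. There is, however, one subtle point I would verify before applying the lemma, namely that $\CMcal{M}'$ is again a legitimate set of cyclic MOLS, i.e.\ that the orthogonality condition $\alpha_i'\beta_j'\neq\alpha_j'\beta_i'$ still holds. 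Since $\beta_i'=\beta_j'=1$, this reduces to $\alpha_i'\neq\alpha_j'$, that is $\alpha_i\beta_i^{-1}\neq\alpha_j\beta_j^{-1}$, which is equivalent to the original hypothesis $\alpha_i\beta_j\neq\alpha_j\beta_i$ after multiplying through by $\beta_i\beta_j$. Thus $\CMcal{M}'\in\mathscr{L}^m_q$ and the reduction preserves membership in the MOLS family.

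The main obstacle, if any, is conceptual rather than computational: one must be careful that Lemma~\ref{lemma:replacing_latin_squares} is stated for a \emph{simultaneous} replacement of all $m$ squares by independently chosen scalars $\ell_1,\hdots,\ell_m$, so that applying distinct $\ell_i=\beta_i^{-1}$ to the individual squares is permitted in a single step. Since the lemma's proof treats each Latin square's symbol relabeling independently and the resulting row permutation of the parity-check matrix is a composition of independent permutations on each point-group $G_{i+2}$, there is no interaction between the squares and the simultaneous reduction is valid. I would therefore close the argument by noting that the equality of codes follows at once, and that $\CMcal{M}'$ is by construction the reduced form of $\CMcal{M}$, completing the proof.
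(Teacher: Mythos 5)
Your proposal is correct and follows essentially the same route as the paper: both reduce the statement to a direct application of Lemma~\ref{lemma:replacing_latin_squares}, the paper writing $(\alpha_i,\beta_i)=\beta_i(\alpha'_i,1)$ while you equivalently choose $\ell_i=\beta_i^{-1}$. Your additional check that $\CMcal{M}'$ remains a valid set of MOLS (orthogonality preserved) is a sensible extra verification the paper leaves implicit, but it does not change the argument.
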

\begin{proof}
It holds that $(\alpha_i,\beta_i)=\beta_i (\alpha'_i,1)$ and thus $\CMcal{L}^{(\alpha_i,\beta_i)}_q$ can be replaced by $\CMcal{L}^{(\alpha_i\beta_i^{-1},1)}_q$ according to Lemma~\ref{lemma:replacing_latin_squares} without changing the code.
\end{proof}

For the rest of the paper, we consider only MOLS in reduced form as a consequence of Theorem \ref{reduced_form}. This simplification is reasonable in order to investigate the structural properties of the arising codes. However, it is possible to construct a quasi-cyclic LDPC code by the appropriate choice of another representative system as described in \cite{GrunHub2013}, which allows encoding with low complexity.

\subsection{Equivalence of $\mathscr{L}^m_q$-TD LDPC Codes}

\begin{theorem}\label{theorem:equivalence_of_TD_LDPC_codes}
Let $\CMcal{M}:=\big\{\CMcal{L}^{(\alpha_1)}_q,\hdots,\CMcal{L}^{(\alpha_m)}_q\big\}\in \mathscr{L}^m_q$ be a set of MOLS in reduced form. 
For $\CMcal{M}_\ell:=\big\{\CMcal{L}^{(\ell \alpha_1)}_q,\hdots,\CMcal{L}^{(\ell \alpha_m)}_q\big\}$ with $\ell\in \mathbb{F}_q^*$ it follows that $\CMcal{C}(\CMcal{D}_{\CMcal{M}_\ell})=\CMcal{C}(\CMcal{D}_{\CMcal{M}})$.
\end{theorem}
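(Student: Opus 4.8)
The plan is to show that passing from $\CMcal{M}$ to $\CMcal{M}_\ell$ is nothing more than a relabeling of the underlying transversal design, so that the two incidence matrices, and hence the two codes, coincide under the canonical reindexing of their points and blocks. The engine of the argument is one elementary identity. Since the squares are in reduced form, $\CMcal{L}^{(\alpha_i)}_q[x,y]=\alpha_i x+y$, and therefore
\[
\CMcal{L}^{(\ell\alpha_i)}_q[x,y]=\ell\alpha_i x+y=\alpha_i(\ell x)+y=\CMcal{L}^{(\alpha_i)}_q[\ell x,y]
\]
over $\mathbb{F}_q$. Thus replacing every scale factor $\alpha_i$ by $\ell\alpha_i$ is exactly the simultaneous relabeling $x\mapsto \ell x$ of the common row index of all $m$ Latin squares; the single substitution acts identically on every member of the MOLS set and leaves the column index $y$ and all symbol sets untouched.

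Next I would translate this into the design built in Subsection~\ref{transversal_design_based_on_cyclic_MOLS}. Writing $B_{x,y}$ for the block of $\CMcal{D}_{\CMcal{M}}$ attached to the cell $(x,y)$ and $B^{(\ell)}_{x,y}$ for the corresponding block of $\CMcal{D}_{\CMcal{M}_\ell}$, I would define the point permutation $\sigma$ of $\CMcal{P}=\mathbb{F}_q\times\{1,\dots,m+2\}$ that fixes each group $G_2,\dots,G_{m+2}$ pointwise and acts on the first group by $\sigma\big((x,1)\big)=(\ell^{-1}x,1)$. A one-line computation using the identity above then yields
\[
\sigma\big(B_{x,y}\big)=\big\{(\ell^{-1}x,1),(y,2),(\alpha_1 x+y,3),\dots,(\alpha_m x+y,m+2)\big\}=B^{(\ell)}_{\ell^{-1}x,\,y},
\]
because $\ell\alpha_i(\ell^{-1}x)+y=\alpha_i x+y$. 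Hence $\sigma$ maps the block set $\CMcal{B}$ of $\CMcal{D}_{\CMcal{M}}$ bijectively onto the block set $\CMcal{B}_\ell$ of $\CMcal{D}_{\CMcal{M}_\ell}$, so $\sigma$ is an isomorphism of transversal designs $\CMcal{D}_{\CMcal{M}}\to\CMcal{D}_{\CMcal{M}_\ell}$ that moves points only inside the single group $G_1$.

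Finally, the code is the binary null space of the incidence matrix, whose rows are indexed by points and whose columns are indexed by blocks (Subsection~\ref{codes_from_TDs}). Since $\sigma$ relabels the points while simultaneously carrying $\CMcal{B}$ onto $\CMcal{B}_\ell$, the matrix $\CMcal{N}(\CMcal{D}_{\CMcal{M}_\ell})$ is obtained from $\CMcal{N}(\CMcal{D}_{\CMcal{M}})$ by permuting the rows via $\sigma$ together with the induced permutation of columns $(x,y)\mapsto(\ell^{-1}x,y)$; under this canonical reindexing the two parity-check matrices agree, whence $\CMcal{C}(\CMcal{D}_{\CMcal{M}_\ell})=\CMcal{C}(\CMcal{D}_{\CMcal{M}})$.

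The step I expect to be the main obstacle, and the one to treat most carefully, is precisely this last bookkeeping. Unlike the pure row (point) relabelings used in Lemma~\ref{lemma:replacing_latin_squares} and Theorem~\ref{reduced_form} --- where the blocks, and hence the code coordinates, are fixed --- the permutation $\sigma$ here also permutes the blocks, so it drags the code bits along with it. One must verify that the accompanying coordinate permutation is exactly $(x,y)\mapsto(\ell^{-1}x,y)$ and that the equality is to be read modulo this canonical relabeling of points \emph{and} code bits (equivalently, as a code equivalence induced by the design isomorphism $\sigma$). The relabeling leaves invariant every structural, graph-based quantity --- the girth $g=6$ and, in particular, the entire absorbing-set spectrum classified in Section~\ref{section:classification} --- that the remainder of the paper relies on, which is why it is harmless for the intended use.
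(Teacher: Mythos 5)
Your argument is correct and follows the same route as the paper's own proof: the passage from $\alpha_i$ to $\ell\alpha_i$ is recognized as the simultaneous row reordering $x\mapsto\ell x$ of all $m$ Latin squares, hence as a relabeling of the points of the transversal design, and the code is unchanged up to the induced reindexing. Where you go beyond the paper is in the final bookkeeping, and you are right to flag it: the paper's proof asserts that this amounts to ``reordering the corresponding rows of the parity-check matrix,'' but the point permutation $\sigma$ does \emph{not} fix the block set ($\CMcal{B}\neq\CMcal{B}_\ell$ as sets of subsets of $\CMcal{P}$ whenever $\ell\neq 1$), so under the canonical cell-indexing of code bits the two incidence matrices differ by the column permutation $(x,y)\mapsto(\ell^{-1}x,y)$ in addition to a row permutation; your identity $\sigma(B_{x,y})=B^{(\ell)}_{\ell^{-1}x,y}$ makes this precise. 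This is genuinely different from Lemma~\ref{lemma:replacing_latin_squares} and Theorem~\ref{reduced_form}, where the relabeling acts only on symbol groups and each block keeps its cell, so only rows of $H$ move. Consequently the theorem's ``$=$'' is most honestly read as the code equivalence induced by the design isomorphism $\sigma$ (or as literal equality of codes indexed abstractly by blocks), which, as you note, is all the paper ever needs: girth, the absorbing-set spectrum, and decoding performance are invariant under coordinate permutations. In short, same approach as the paper, with a gap in the paper's own write-up correctly identified and repaired.
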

\begin{proof}
It can easily be seen that the Latin square $\CMcal{L}^{(\ell \alpha_i)}_q$ can be obtained by reordering certain rows of $\CMcal{L}^{(\alpha_i)}_q$ and that this reordering is the same for each $1\leq i\leq m$. 
The simultaneous reordering of the rows of the given MOLS is equivalent to reorder the corresponding rows of the parity-check matrix of $\CMcal{C}(\CMcal{D}_{\CMcal{M}})$. Obviously, any reordering of the rows of a code's parity-check matrix does not change the code.
\end{proof}

\begin{example}
Let $\CMcal{M}=\big\{ \CMcal{L}^{(1)}_5,\CMcal{L}^{(2)}_5 \big\}$ be a set of two cyclic MOLS of order $5$. By Theorem~\ref{theorem:equivalence_of_TD_LDPC_codes}, the sets $\CMcal{M}_2=\big\{ \CMcal{L}^{(2)}_5,\CMcal{L}^{(4)}_5 \big\}$ ($\ell=2$), $\CMcal{M}_3=\big\{ \CMcal{L}^{(3)}_5,\CMcal{L}^{(1)}_5 \big\}$ ($\ell=3$) and $\CMcal{M}_4=\big\{ \CMcal{L}^{(4)}_5,\CMcal{L}^{(3)}_5 \big\}$ ($\ell=4$) can be constructed such that $\CMcal{C}(\CMcal{D}_{\CMcal{M}})=\CMcal{C}(\CMcal{D}_{\CMcal{M}_\ell})$.
\end{example}

\section{Eliminating Absorbing Sets \\ in $\mathscr{L}_q^m$-TD LDPC Codes}
\label{section:elimination}

The main target of the present paper is to eliminate harmful absorbing sets in $\mathscr{L}_q^m$-TD LDPC codes in order to improve the error-floor performance of these codes. 
Due to the intricate nature of iterative decoders over the AWGN channel, the current understanding about the exact failure mechanism is far from complete. However, the harmfulness of absorbing sets can be based on several conjectures that are in accordance with simulative results and with the current level of knowlege. An $(a,b)$ absorbing set is supposed to be harmful if
\begin{itemize}[leftmargin=15pt]
\item the size $a$ is small,
\item the syndrome $b$ is small compared to $a$,
\item the absorbing set is fully, and
\item the degrees of the variable nodes are small, in particular, the absorbing set is elementary.
\end{itemize}

The absorbing set candidates obtained by the classification in Section~\ref{section:classification} represent potentially harmful absorbing sets since they satisfy at least one of the conjectures listed above. 
We now demonstrate that some of these candidates can be avoided in $\mathscr{L}_q^m$-TD LDPC codes with well-chosen code parameters, more precisely, by a proper choice of the scale factors $\alpha_1,\hdots,\alpha_m$ of the underlying MOLS and of the prime power order $q$ of the Galois field $\mathbb{F}_q$.

For the rest of the section, let $\CMcal{M}:=\big\{\CMcal{L}_q^{(\alpha_1)},\hdots,\CMcal{L}_q^{(\alpha_m)}\big\}\in \mathscr{L}_q^m$ be a set of $m$ cyclic MOLS in reduced form, $\CMcal{D}_{\CMcal{M}}$ be the corresponding $\mathscr{L}_q^m$-TD of block size $k=m+2$ and $\CMcal{C}(\CMcal{D}_{\CMcal{M}})$ be the corresponding LDPC code of column weight $k$. 
We denote the groups of $\CMcal{D}_{\CMcal{M}}$ by $\CMcal{G}=\{G_1,\hdots,G_k\}$ such that the points of $G_1$ and $G_2$ correspond to the coordinates of the common row and column sets of $\CMcal{M}$, respectively, and the points of $G_{i+2}$ correspond to the symbols of $\CMcal{L}_q^{(\alpha_i)}$ for $1\leq i\leq m$.

\subsection{Describing Absorbing Sets by Linear Equation Systems}

Let $(\mathcal{P},\mathcal{B})$ be an absorbing set candidate from the classification of Section~\ref{section:classification}
and let $\varphi_1,\hdots,\varphi_\tau$ be the non-isomorphic $k$-colourings of $(\mathcal{P},\mathcal{B})$.
For each $k$-colouring $\varphi_j$, there exist $(k!)$ possibilities to associate the $k$ colours of $Q=\{1,\hdots,k\}$ with the $k$ groups of $\CMcal{G}=\{G_1,\hdots,G_k\}$.
More precisely, let $\pi_1,\hdots,\pi_{k!}$ be all possible bijections $Q\rightarrow\CMcal{G}$, called the \emph{colour-to-group mappings}.
As a short notation, we use $\hat{\pi}_\ell = (\pi_\ell^{-1}(G_1),\allowbreak\pi_\ell^{-1}(G_2),\hdots,\allowbreak\pi_\ell^{-1}(G_k))$. In Table~\ref{cps_k4}, we explicitely list the short notations of all $24$ colour-to-group mappings which will be important for the presentation of our results. For instance, from $\hat{\pi}_5=(1,4,3,2)$ we obtain $\pi_5(1)=G_1$, $\pi_5(4)=G_2$, $\pi_5(3)=G_3$ and $\pi_5(2)=G_4$.

\begin{theorem}\label{theorem:representing_an_absorbing_set}
An absorbing set of type $(\mathcal{P},\mathcal{B},\varphi_j)$ occurs in $\CMcal{C}(\CMcal{D}_{\CMcal{M}})$ if and only if there is an assignment of values from $\mathbb{F}_q$ to the variables $p_1,\hdots,p_\upsilon$ with $\upsilon:=|\mathcal{P}|$ and a colour-to-group mapping $\pi_\ell$ such that
\begin{enumerate}[label=(\arabic*),leftmargin=20pt]
\item the point $x\in\mathcal{P}$ is associated with variable $p_x$,
\item all elements of $\big\{\big(p_x,\varphi_j(x)\big):x\in\mathcal{P}\big\}$ are unique, 
\item for every block $\{x_1,\hdots,x_k\} \in \mathcal{B}$ with
$(\pi_\ell ( \varphi_j(x_1))=G_1,\hdots, \pi_\ell (\varphi_j(x_k))=G_k$
the linear equations $\alpha_i p_{x_1} + p_{x_2} - p_{x_{i+2}}=0$ are satisfied over $\mathbb{F}_q$ for $i=1,\hdots,m$.
\end{enumerate}
\end{theorem}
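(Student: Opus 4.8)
The plan is to translate the geometric statement ``an absorbing set of type $(\mathcal{P},\mathcal{B},\varphi_j)$ occurs in $\CMcal{C}(\CMcal{D}_{\CMcal{M}})$'' into the purely combinatorial statement ``the coloured set system $(\mathcal{P},\mathcal{B},\varphi_j)$ embeds as a configuration of $\CMcal{D}_{\CMcal{M}}$'', and then to observe that such an embedding is exactly the data $(p_1,\dots,p_\upsilon,\pi_\ell)$ described by (1)--(3). The starting point is Lemma~\ref{lemma:set_system_representation}: every absorbing set $\CMcal{A}$ is recorded by its set-system representation $(\mathcal{P}^*,\mathcal{B}^*)$, a configuration of $\CMcal{D}_{\CMcal{M}}$. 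For a configuration, constraints (A)--(C) of Lemma~\ref{lemma:combinatorial_constraints_of_absorbing_sets} hold automatically (two by the TD axioms, one by Lemma~\ref{lemma:induced_colouring}), and (D) is literally the defining inequality of an absorbing set restricted to the bit nodes of $\CMcal{A}$. Hence, since $(\mathcal{P},\mathcal{B})$ is assumed to be an absorbing set candidate, occurrence of type $(\mathcal{P},\mathcal{B},\varphi_j)$ is equivalent to the existence of an injective $\sigma\colon\mathcal{P}\to\CMcal{P}$ with $\sigma(\mathcal{B})\subseteq\CMcal{B}$ that realizes the isomorphism of coloured set systems from Subsection~\ref{set_systems}; I would take $\varphi^*$ to be the induced colouring assigning colour $r$ to the group $G_r$, so that the colour permutation $\pi$ of that isomorphism is identified, via $G_r\leftrightarrow r$, with a colour-to-group mapping $\pi_\ell$.

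Next I would make the embedding explicit using the coordinate description of $\CMcal{D}_{\CMcal{M}}$ from Subsection~\ref{transversal_design_based_on_cyclic_MOLS}. Each point lies in a unique group $G_r=\mathbb{F}_q\times\{r\}$ and is thus determined by $r$ together with a coordinate in $\mathbb{F}_q$; a choice of $\pi_\ell$ fixes the target group $\pi_\ell(\varphi_j(x))$ of each colour class, so giving $\sigma$ amounts to choosing, for every $x\in\mathcal{P}$, a value $p_x\in\mathbb{F}_q$ with $\sigma(x)=(p_x,r)$ where $G_r=\pi_\ell(\varphi_j(x))$. This is exactly the association in (1). Injectivity of $\sigma$ then unravels to (2): since points of different colours land in different groups, two points coincide under $\sigma$ iff they share both a colour and a coordinate, i.e.\ iff the pairs $(p_x,\varphi_j(x))$ agree; thus $\sigma$ is injective precisely when all these pairs are distinct.

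The remaining step characterizes when $\sigma$ sends blocks to blocks, giving (3). Fixing $B\in\mathcal{B}$, its $k$ points carry $k$ distinct colours, which $\pi_\ell$ sends bijectively onto $\{G_1,\dots,G_k\}$; hence the points may be uniquely ordered as $x_1,\dots,x_k$ with $\pi_\ell(\varphi_j(x_r))=G_r$, and $\sigma(B)=\{(p_{x_1},1),(p_{x_2},2),(p_{x_3},3),\dots,(p_{x_k},k)\}$. By the reduced-form block set of an $\mathscr{L}^m_q$-TD, a $k$-set of this shape is a block of $\CMcal{B}$ if and only if the symbol coordinates satisfy $p_{x_{i+2}}=\alpha_i p_{x_1}+p_{x_2}$ for $1\le i\le m$, that is $\alpha_i p_{x_1}+p_{x_2}-p_{x_{i+2}}=0$ over $\mathbb{F}_q$. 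Ranging over all $B\in\mathcal{B}$ yields (3), and conjoining the three equivalences shows that an admissible embedding $\sigma$ exists iff values $p_1,\dots,p_\upsilon$ and a mapping $\pi_\ell$ meeting (1)--(3) exist, which proves both directions.

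I expect the only genuinely delicate point to be the equivalence established in the first paragraph: one must argue that occurrence of the \emph{coloured} type is the same as the existence of the embedding $\sigma$, in particular that the candidate's constraint (D) transfers verbatim along the isomorphism to the embedded image (so the configuration really is an absorbing set, not merely a configuration), and that the colour permutation of the set-system isomorphism is correctly matched with the colour-to-group mapping $\pi_\ell$. Once this dictionary between isomorphisms of coloured set systems and pairs $(\sigma,\pi_\ell)$ is in place, conditions (2) and (3) are direct readings of injectivity and block-preservation, the latter relying only on the explicit linear form $s_i=\alpha_i x+y$ of the blocks in reduced form.
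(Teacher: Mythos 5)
Your proof is correct and follows essentially the same route as the paper: it reads the pairs $(p_x,\pi_\ell(\varphi_j(x)))$ as points of $\CMcal{D}_{\CMcal{M}}$ and the linear equations $\alpha_i p_{x_1}+p_{x_2}-p_{x_{i+2}}=0$ as exactly the condition that the image $k$-sets are blocks of the reduced-form design. The paper's own proof is only a one-sentence sketch of the ``if'' direction, so your explicit treatment of injectivity (condition (2)), block preservation (condition (3)), and the transfer of constraint (D) along the set-system isomorphism supplies details the paper leaves implicit rather than taking a different approach.
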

\begin{proof}
If there are values $p_1,\hdots,p_\upsilon \in \mathbb{F}_q$ that satisfy (1)-(3), then the elements of $\big\{\big(p_x,\varphi(x)\big):x\in\mathcal{P}\big\}$ can be considered as points of the transversal design $\CMcal{D}_\CMcal{M}$ and the elements of $\big\{\{(p_{x_1},1),\hdots,(p_{x_k},k)\}:\{x_1,\hdots,x_k\}\in\mathcal{B}\big\}$ as blocks of $\CMcal{D}_\CMcal{M}$.
\end{proof}

The linear equations obtained by Theorem~\ref{theorem:representing_an_absorbing_set} lead to a homogeneous linear
system over $\mathbb{F}_q$ with unknown variables $p_1,\hdots,p_\upsilon$ and coefficients depending on the scale factors $\alpha_i$. Let $\boldsymbol{E}$ be the coefficient matrix and $\boldsymbol{p}=(p_1,\hdots,p_\upsilon)^T$ be the column vector of the unknown variables.
Every solution of $\boldsymbol{Ep}=\boldsymbol{0}$ corresponds to an absorbing set of type $(\mathcal{P},\mathcal{B},\varphi_j)$ with colour-to-group mapping $\pi_\ell$ if
the pairs $\big\{\big(p_x,\varphi(x)\big):x\in\CMcal{P}\big\}$ are unique. Conversely, if such a solution does not exist, there can not be any absorbing set of this type.

\subsection{Elimination Process}

\emph{Approach}: 
The existence of any absorbing sets of type $(\mathcal{P},\mathcal{B},\varphi_j)$ with colour-to-group mapping $\pi_\ell$ in a code $\CMcal{C}(\CMcal{D}_{\CMcal{M}})$ is equivalent to the existence of certain solutions of the linear equation system $\boldsymbol{Ep}=\boldsymbol{0}$. 
We solve this system symbolically by a modified Gaussian elimination algorithm  in dependence of the scale factors $\alpha_i$ and the Galois field $\mathbb{F}_q$.

\noindent\rule[2pt]{254pt}{0.5pt}\\
\noindent $(\mathfrak{C}_1, \mathfrak{C}_2,\hdots, \mathfrak{C}_\tau)$ = \textbf{EliminationProcess}$(\mathcal{P},\mathcal{B})$\\
\noindent\rule[4pt]{254pt}{0.5pt}

\noindent\textsc{Input}: 

\begin{itemize}[leftmargin=18pt,labelindent=5pt]
\item $(\mathcal{P},\mathcal{B})$: absorbing set candidate of block size $k$
\end{itemize}

\noindent\textsc{Output}: 

\begin{itemize}[leftmargin=18pt,labelindent=5pt]
\item $\mathfrak{C}_j$: The process outputs the elimination constraints $\mathfrak{C}_j$ grouped by their $k$-colourings $\varphi_j$, i.e., all absorbing sets of type $(\mathcal{P},\mathcal{B},\varphi_j)$ can be eliminated if and only if $\mathfrak{C}_j=1$. Conversely, there exist absorbing sets of type $(\mathcal{P},\mathcal{B},\varphi_j)$ if and only if $\mathfrak{C}_j=0$.
\end{itemize}

\noindent\textsc{Algorithm}: 

\begin{enumerate}[label=(\arabic*),leftmargin=18pt]

\item For every $k$-colouring $\varphi_j$ of $(\mathcal{P},\mathcal{B})$, $1\leq j \leq \tau$, and every colour-to-group mapping $\pi_\ell$, $1\leq \ell \leq (k!)$, do:
\[
\mathfrak{C}_{j,\ell} \text{ =  \textbf{FindEliminationConstraint}} (\mathcal{P},\mathcal{B},\varphi_j,\pi_\ell).
\]

\item Output $\mathfrak{C}_j:=\bigwedge_{\ell=1}^{k!} \mathfrak{C}_{j,\ell}$ for $1\leq j \leq \tau$.

\end{enumerate}

\noindent\rule[2pt]{254pt}{0.5pt}\\
\noindent $\mathfrak{C}_{j,\ell}$ = \textbf{FindEliminationConstraint}$(\mathcal{P},\mathcal{B},\varphi_j,\pi_\ell)$\\
\noindent\rule[4pt]{254pt}{0.5pt}

\noindent\textsc{Input}: 

\begin{itemize}[leftmargin=18pt,labelindent=5pt]
\item $(\mathcal{P},\mathcal{B})$: absorbing set candidate of block size $k$
\item $\varphi_j$: the $j$-th $k$-colouring of $(\mathcal{P},\mathcal{B})$
\item $\pi_\ell$: the $\ell$-th colour-to-group mapping
\end{itemize}

\noindent\textsc{Output}: 
\begin{itemize}[leftmargin=18pt]
\item $\mathfrak{C}_{j,\ell}$: The output $\mathfrak{C}_{j,\ell}$ is an elimination constraint such that all absorbing sets of type $(\mathcal{P},\mathcal{B},\varphi_j)$ with colour-to-group mapping $\pi_\ell$ can be eliminated if and only if $\mathfrak{C}_{j,\ell}=1$. 
\end{itemize}

\noindent\textsc{Notations and Invariants}: 
\begin{itemize}[leftmargin=18pt]
\item We denote the type $(\mathcal{P},\mathcal{B},\varphi_j)$ with colour-to-group mapping $\pi_\ell$ from now on as $\mathscr{T}$.
\item The set $\Lambda^0$ contains all polynomials that are definitely zero, initialized by $\Lambda^0=\{0\}$.
\item The set $\Lambda^+$ contains all polynomials that are definitely non-zero. Initially, we have\footnote{Note that the scale factors $\alpha_i\in \mathbb{F}_q^*$ of any MOLS are non-zero and unique by definition such that $\alpha_i$ and any difference $\alpha_i-\alpha_j$ with $i\neq j$ are definitely non-zero. Furthermore, an entry that is equal to a prime power $p^i$ with $i\geq 1$ can be zero if the underlying Galois field has characteristic $p$. Conversely, an entry that is not a prime power must be definitely non-zero.  As an example, the term $6\alpha_1^2 \alpha_2^3(\alpha_1-\alpha_2)^2$ is definitely non-zero.}

\begin{itemize}[leftmargin=15pt]
\item $1,-1\in \Lambda^+$,
\item $\alpha_i\in \Lambda^+$ for $1\leq i\leq m$,
\item $(\alpha_i-\alpha_j)\in \Lambda^+$ for $1\leq i,j\leq m$ and $i\neq j$,
\item $u\in \Lambda^+$ if $u$ is not a prime power, and
\item $\lambda_1\lambda_2 \in \Lambda^+$ if $\lambda_1,\lambda_2\in\Lambda^+$.
\end{itemize}

\item The set $\Lambda^\text{\normalsize\textasteriskcentered}$ contains all polynomials that may be zero or non-zero depending on the choice of the scale factors $\alpha_i$ and $\mathbb{F}_q$. Consequently, it consists of all polynomials that are not in $\Lambda^0 \cup \Lambda^+$.

\item 
Let $\mathcal{T}$ be a rooted tree that consists of \emph{case nodes} of the form $(\lambda,v)^\text{C}$ and \emph{elimination nodes} of the form $(\lambda,v)^\text{E}$, where $\lambda\in\Lambda^\text{\normalsize\textasteriskcentered}$, $v\in \{0,1\}$  and where `C' and `E' stands for \emph{case} and \emph{elimination}, respectively. The elimination nodes must be leafs.
We say that a node is \emph{satisfied} if $\varrho(\lambda,v)=1$, where
\[
\varrho(\lambda,v)=
\begin{cases}
1, & \mbox{if } (\lambda\neq 0) \mbox{ and } (v=1)\\
   & \mbox{or } (\lambda = 0) \mbox{ and } (v=0),\\
0, & \mbox{otherwise.}
\end{cases}
\]
Note that it depends on the choice of the scale factors $\alpha_i$, $1\leq i\leq m$ and the choice of $\mathbb{F}_q$ if a node is satisfied. 
\end{itemize}

\noindent\textsc{Algorithm}: 

\begin{enumerate}[label=(\arabic*),leftmargin=18pt]

\item 
Build up an equation system $\boldsymbol{Ep}=\boldsymbol{0}$ with the equations obtained by Theorem~\ref{theorem:representing_an_absorbing_set} for type $\mathscr{T}$.
Let $\boldsymbol{E}_\eta$ be the upper right submatrix of the coefficient matrix $\boldsymbol{E}$ including the $\eta$-th row and column. We start with $\eta=1$.
Initialize the ouput tree $\mathcal{T}$ with an empty root node and and assume that this root node is trivially satisfied. 

\item
Find a column of $\boldsymbol{E}_\eta$ with all entries from $\Lambda^0\cup\Lambda^+$ and with at least one entry from $\Lambda^+$. 
If such a column does not exist, continue with step (3). Otherwise swap the rows and columns\footnote{If we swap rows of $\boldsymbol{E}$, we must also swap the corresponding entries in $\boldsymbol{p}$.} of $\boldsymbol{E}$ in such a way that there is an entry of $\Lambda^+$ in the left top corner of $\boldsymbol{E}_\eta$. Then, apply row eliminations to $\boldsymbol{E}$ such that all other entries of the first column of $\boldsymbol{E}_\eta$ become zero (except the first) and continue with (4). 

\item Find the column of $\boldsymbol{E}_\eta$ with the smallest positive number of entries in $\Lambda^\text{\normalsize\textasteriskcentered}$. If such a column does not exist, i.e., if all entries are from $\Lambda^0$, continue with step (6).
Otherwise, choose an entry $\lambda\in\Lambda^\text{\normalsize\textasteriskcentered}$ of this column and do a case differentation. For this, split the current process into two subprocesses. For the first subprocess,
\begin{itemize}[leftmargin=15pt]
\item assume that $\lambda=0$,
\item append the node $(\lambda,0)^\text{C}$ to the current node of $\mathcal{T}$,\footnote{By the current node of $\mathcal{T}$ we mean the case node that has been inserted by the parent process. Initially, the current node is the root of $\mathcal{T}$.}
\item move $\lambda$ to $\Lambda^0$, and
\item continue with step (5).
\end{itemize}
For the second subprocess,
\begin{itemize}[leftmargin=15pt]
\item assume that $\lambda\neq 0$,
\item append the node $(\lambda,1)^\text{C}$ to the current node of $\mathcal{T}$,
\item move $\lambda$ to $\Lambda^+$, and
\item continue with step (5).
\end{itemize}

Notice that the sets $\Lambda^0,\Lambda^+$ and $\Lambda^\text{\normalsize\textasteriskcentered}$ are process invariants that are stored for each subprocess separately, whereas $\mathcal{T}$ is stored as a global entity.

\item

After every step search for linear equations of the form $\lambda (p_{x_1}-p_{x_2})=0$ with $\varphi({x_1})=\varphi({x_2})$ and $\lambda\in\Lambda^\text{\normalsize\textasteriskcentered}$. 
By choosing the scale factors $\alpha_i$ and $\mathbb{F}_q$ in such a way that $\lambda\neq 0$ over $\mathbb{F}_q$, it follows that $(p_{x_1},\varphi(x_1))=(p_{x_2},\varphi(x_2))$ which violates the second condition of Theorem~\ref{theorem:representing_an_absorbing_set} for~all possible solutions. Hence, we can eliminate all absorbing sets of type $\mathscr{T}$.
Consequently, we append the elimination node $(\lambda,1)^\text{E}$ to the current node of $\mathcal{T}$.
For further processing, we start a new process and
\begin{itemize}[leftmargin=15pt]
\item assume that $\lambda=0$,
\item remove the row of $\boldsymbol{E}$ that corresponds to the equation (which is trivially satisfied),
\item move $\lambda$ to $\Lambda^0$, and
\item continue with step (5).
\end{itemize}

\item
Repeat step (2) with $\eta:=\eta+1$ until the matrix is in row echelon form. If the matrix is in row echelon form, we solve the system symbolically by back substitution such that all unknown variables depend on the scale factors $\alpha_i$ and on some free variables if the system is underdetermined. We obtain symbolic expressions for $p_1,\hdots,p_\upsilon$. 

\item Compute the symbolic differences $p_{x_1}-p_{x_2}$ for all $x_1,\allowbreak x_2\in \mathcal{P}$ with $\varphi(x_1)=\varphi(x_2)$ and evaluate under which conditions these differences are zero. 
More precisely, search for differences of the form $p_{x_1}-p_{x_2}=\lambda(p_{x_3}-p_{x_4})=0$ with $\lambda\in\Lambda^\text{\normalsize\textasteriskcentered}$ and $x_3,x_4\in\mathcal{P}\setminus\{x_1,x_2\}$. For the case of $\lambda=0$, the points $(p_{x_1},\varphi(x_1))$ and $(p_{x_2},\varphi(x_2))$ coincide for all possible solutions and thus, the obtained set system can not be of type $\mathscr{T}$. Consequently, append the elimination node $(\lambda,0)^\text{E}$ to the current node of $\mathcal{T}$.

\item 

Let $\mathfrak{p}=\big(\text{root}, (\lambda_1,v_1)^\text{C}, (\lambda_2,v_2)^\text{C}, \hdots, (\lambda_{n-1},v_{n-1})^\text{C},\allowbreak (\lambda_n,v_n)^\text{E}\big)$ be a path of $\mathcal{T}$ that ends up in an elimination node. Define $\varrho(\mathfrak{p}):=\bigwedge_{i=1}^n \varrho(\lambda_i,v_i)$.
Then, the absorbing sets of type $\mathscr{T}$ can be eliminated if 
$\varrho(\mathfrak{p}) = 1$. 
Let $\mathfrak{p}_1,\hdots,\mathfrak{p}_\xi$ be all paths of $\mathcal{T}$ that ends up in an elimination node. Define $\mathfrak{C}_{j,\ell}:=\bigvee_{i=1}^\xi \varrho(\mathfrak{p}_i)$.
Then, the absorbing sets of type $\mathscr{T}$ can be eliminated if and only if $\mathfrak{C}_{j,\ell} = 1$. 
Finally, we simplify the elimination constraint $\mathfrak{C}_{j,\ell}$ by using the following rules:
\begin{itemize}[leftmargin=15pt]
\item $\varrho(\varepsilon\lambda,v)=\varrho(\lambda,v)$,
\item $\varrho(\lambda^i,v)=\varrho(\lambda,v)$,
\item if $\varrho(\lambda,v)\Rightarrow\varrho(\lambda',v')$, $\varrho(\lambda,v)\vee \varrho(\lambda',v')=\varrho(\lambda',v')$,
\item if $\varrho(\lambda,v)\Rightarrow\varrho(\lambda',v')$, $\varrho(\lambda,v)\wedge \varrho(\lambda',v')=\varrho(\lambda,v)$,
\item $\varrho(\lambda,0)\vee \varrho(\lambda',0)=\varrho(\lambda \lambda',0)$
\item $\varrho(\lambda\lambda',1)\vee \varrho(\lambda,1)=\varrho(\lambda,1)$
\item $\varrho(\lambda,1) \vee \varrho(\lambda',0)=\varrho(\lambda,1)\vee\varrho(\varepsilon\lambda+\varepsilon'\lambda',0)$,
\item $\varrho(\lambda,1) \vee \varrho(\lambda',1)=\varrho(\lambda,1)\vee \varrho(\varepsilon\lambda+\varepsilon'\lambda',1)$,
\item $\varrho(\lambda,0) \wedge \varrho(\lambda',0)=\varrho(\lambda,0)\wedge \varrho(\varepsilon\lambda+\varepsilon'\lambda',0)$,
\item $\varrho(\lambda,0) \wedge \varrho(\lambda',1)=\varrho(\lambda,0)\wedge \varrho(\varepsilon\lambda+\varepsilon'\lambda',1)$,
\end{itemize}
where $\lambda,\lambda'\in \Lambda^\text{\normalsize\textasteriskcentered}$, $\varepsilon,\varepsilon'\in \Lambda^+$, and $v,v'\in\{0,1\}$.
\end{enumerate}

\begin{example}\label{example:absorbing_pattern_as_subsystem}
Let $(\mathcal{P},\mathcal{B})$ be the $(4,4)$ absorbing set candidate given by the points $\mathcal{P}=\{1,\hdots,10\}$ and blocks $\mathcal{B}=\big\{\{1,2,3,4\},\allowbreak\{1,5,6,7\},\allowbreak\{2,5,8,9\},\allowbreak\{3,6,8,10\}\big\}$. 
With $4$-colouring $\hat{\varphi}_1=(\{1,8\},\allowbreak\{2,6\},\allowbreak\{3,7,9\},\{4,5,10\})$ and colour-to-group mapping $\hat{\pi}_5=(1,4,2,3)$ from Table~\ref{cps_k4}, we obtain the following system of linear equations over $\mathbb{F}_q$:
\begin{equation*}
\begin{alignedat}{8}
\alpha_1 p_1 &+ p_4 -& p_2 & = 0, & \ \ \ \ \ \ \alpha_2 p_1 &+ p_4 &- p_3 & = 0,\\
\alpha_1 p_1 &+ p_5 -& p_6 & = 0, &              \alpha_2 p_1 &+ p_5 &- p_7 & = 0,\\
\alpha_1 p_8 &+ p_5 -& p_2 & = 0, &              \alpha_2 p_8 &+ p_5 &- p_9 & = 0,\\
\alpha_1 p_8 &+ p_{10} -& p_6 & = 0, &              \alpha_2 p_8 &+ p_{10} &- p_3 & = 0.\\
\end{alignedat}
\end{equation*}
This system can also be described by the matrix equation
\[
\boldsymbol{Ep}=
\left[
\begin{smallmatrix}
 \alpha_1 &  . & 1 & . & . & -1 &  . &  . &  . &  .\\
 \alpha_2 &  . & 1 & . & . &  . &  . & -1 &  . &  .\\
 \alpha_1 &  . & . & 1 & . &  . & -1 &  . &  . &  .\\
 \alpha_2 &  . & . & 1 & . &  . &  . &  . & -1 &  .\\
  . & \alpha_1 & . & 1 & . & -1 &  . &  . &  . &  .\\
  . & \alpha_2 & . & 1 & . &  . &  . &  . &  . & -1\\
  . & \alpha_1 & . & . & 1 &  . & -1 &  . &  . &  .\\
  . & \alpha_2 & . & . & 1 &  . &  . & -1 &  . &  .\\
\end{smallmatrix}
\right]
\left[
\begin{smallmatrix}
p_1\\ p_8\\ p_4\\ p_5\\ p_{10}\\ p_2\\ p_6\\ p_3\\ p_7\\ p_9
\end{smallmatrix}
\right]
\footnotesize
=\boldsymbol{0},
\]
where the dots represent zeros. By applying step (2)-(5) of the elimination process, we obtain the coefficient matrix $\boldsymbol{E}$ in row echelon form. The matrix equation is as follows:
\begin{align*}
\left[
\begin{smallmatrix}
\alpha_1&.&1&.&.&-1&.&.&.&.\\
.&\alpha_1&.&1&.&-1&.&.&.&.\\
.&.&-1&1&.&1&.&-1&.&.\\
.&.&.&\alpha_2-\alpha_1&.&.&.&-\alpha_2&\alpha_1&.\\
.&.&.&.&\alpha_1-\alpha_2&\alpha_1-\alpha_2&.&2\alpha_2-\alpha_1&-\alpha_1&.\\
.&.&.&.&.&\alpha_2&.&-\alpha_2&\alpha_1&-\alpha_1\\
.&.&.&.&.&.&-\alpha_2&.&\alpha_2-\alpha_1&\alpha_1\\
.&.&.&.&.&.&.&.&\alpha_2-2\alpha_1&2\alpha_1-\alpha_2\\
\end{smallmatrix}
\right]
\\
\left[
\begin{smallmatrix}
p_1,\ p_8,\ p_4,\ p_5,\ p_{10},\ p_2,\ p_6,\ p_7,\ p_3,\ p_9
\end{smallmatrix}
\right]^T
\footnotesize
=\boldsymbol{0}.
\end{align*}

Step (4) of the elimination process detects the linear equation of the form $(2\alpha_1-\alpha_2)(p_9-p_3)=0$ and hence, we have $2\alpha_1-\alpha_2=0$ or $p_9-p_3=0$. By a proper choice of $\alpha_1$, $\alpha_2$ and $\mathbb{F}_q$, we may ensure that $2\alpha_1-\alpha_2\neq 0$ over $\mathbb{F}_q$ and hence, it must hold that $p_9=p_3$.
Their equality leads to a degraded set system where the two points $3$ and $9$ of $\mathcal{P}$ coincide.
Therefore, we append the elimination node $(2\alpha_1-\alpha_2,1)^{\text{E}}$ to the root of $\mathcal{T}$. 
Step (6) does not find further elimination constraints. Finally, we obtain by step (7) that all absorbing sets of type $(\mathcal{P},\mathcal{B},\varphi_1)$ with colour-to-group mapping $\pi_5$ can be avoided if and only if the elimination constraint $\mathfrak{C}_{1,5}:=\varrho(2\alpha_1-\alpha_2,1)$ is satisfied.

\end{example}

\begin{table}[t!]
\caption{Non-isomorphic 3-colourings of the absorbing set candidates of Fig.~\ref{fig:absorbing_set_classification_k3} and their elimination constraints over $\mathbb{F}_q$. An asterisk indicates that the 3-colouring leads to a fully absorbing set.}
\label{elimination_constraints_k3}
\setlength{\tabcolsep}{3pt}
\renewcommand{\arraystretch}{1.3}
\centering
	\begin{tabular}{ccc}
	\toprule
	$\boldsymbol{(a,b)}$ & \textbf{3-colourings} & \textbf{elimination}\\
	\midrule
	$(3,3)$ & $(\{1,6\},\{2,5\},\{3,4\})$ & no\\
	$(4,0)$ & $(\{1,6\},\{2,5\},\{3,4\})$ ({\large\textasteriskcentered}) & $\omega_q\neq 2$\\
	$(4,2)$ & $(\{1,6,7\},\{2,5\},\{3,4\})$ ({\large\textasteriskcentered}) & $\omega_q = 2$\\
	$(4,4)$ & $(\{1,6\},\{2,5,8\},\{3,4,7\})$ & no\\
	  & $(\{1,6\},\{2,4\},\{3,5,7,8\})$ ({\large\textasteriskcentered}) & no\\
	$(5,3)\{1\}$ & $(\{1,6,7\},\{2,5,8\},\{3,4,9\})$ & no\\
	$(5,3)\{2\}$ & $(\{1,6,9\},\{2,5,8\},\{3,4,7\})$ & no\\
	$(5,5)$ & $(\{1,7,9,10\},\{2,5,8\},\{3,4,6\})$ & no\\
	$(6,0)\{1\}$ & $(\{1,6,7\},\{2,5,8\},\{3,4,9\})$ ({\large\textasteriskcentered}) & $\omega_q\neq 2$\\
	$(6,0)\{2\}$ & $(\{1,6,9\},\{2,5,8\},\{3,4,7\})$ ({\large\textasteriskcentered}) & no\\
	$(6,2)\{1\}$ & $(\{1,8\},\{2,5,6\},\{3,4,7\})$ ({\large\textasteriskcentered}) & $\omega_q\neq 3$\\
	$(6,2)\{2\}$ & $(\{1,8,9\},\{2,5,6\},\{3,4,7\})$ ({\large\textasteriskcentered}) &  $\omega_q= 2$ or $3$\\
	$(6,2)\{3\}$ & $(\{1,6,7\},\{2,5,9,10\},\{3,4,8\})$ ({\large\textasteriskcentered}) & $\omega_q = 2$\\
	$(6,2)\{4\}$ & $(\{1,6,7\},\{2,5,8\},\{3,4,9,10\})$ ({\large\textasteriskcentered}) & $\omega_q = 2$\\
	$(6,2)\{5\}$ & $(\{1,6,7,10\},\{2,5,8\},\{3,4,9\})$ ({\large\textasteriskcentered}) & $\omega_q = 2$\\
	$(6,2)\{6\}$ & $(\{1,6,9,10\},\{2,5,8\},\{3,4,7\})$ ({\large\textasteriskcentered}) & always\\
	$(6,4)\{1\}$ & $(\{1,10\},\{2,5,7,8\},\{3,4,6,9\})$ & no\\
	$(6,4)\{2\}$ & $(\{1,8,9,10\},\{2,5,6\},\{3,4,7\})$ ({\large\textasteriskcentered}) & no\\
	$(6,4)\{3\}$ & $(\{1,6,8,10,11\},\{2,5,7\},\{3,4,9\})$ ({\large\textasteriskcentered}) & no\\
	  & $(\{1,6,8,9\},\{2,5,7\},\{3,4,10,11\})$ & no\\
	$(6,4)\{4\}$ & $(\{1,6,8,10\},\{2,5,7,11\},\{3,4,9\})$ & no\\
	$(6,4)\{5\}$ & $(\{1,6,8\},\{2,5,9,10\},\{3,4,7,11\})$ & no\\
	  & $(\{1,7,9,10,11\},\{2,4,8\},\{3,5,6\})$ ({\large\textasteriskcentered}) & no\\
	  & $(\{1,7,8,10\},\{2,4,9,11\},\{3,5,6\})$ & no\\
	  & $(\{1,6,8\},\{2,4,9,11\},\{3,5,7,10\})$ & no\\
	$(6,4)\{6\}$ & $(\{1,6,8\},\{2,5,9,10\},\{3,4,7,11\})$ & no\\
	   & $(\{1,6,9,11\},\{2,4,8\},\{3,5,7,10\})$ & no\\
	$(6,6)\{1\}$ & $(\{1,10,11\},\{2,5,7,8\},\{3,4,6,9\})$ & no\\
	$(6,6)\{2\}$ & $(\{1,7,9,10\},\{2,5,8,11\},\{3,4,6,12\})$ & no\\
	  & $(\{1,7,8,11\},\{2,5,9,10\},\{3,4,6,12\})$ & no\\
	  & $(\{1,6,8\},\{2,5,9,11,12\},\{3,4,7,10\})$ & no\\
	  & $(\{1,6,8\},\{2,4,10\},\{3,5,7,9,11,12\})$ ({\large\textasteriskcentered}) & no\\
	\bottomrule
	\end{tabular}
\end{table}

\section{Main Results}
\label{section:main_results}

This section summarizes the main results of our paper and describes how the potentially harmful absorbing sets of the classification of Section~\ref{section:classification} can be eliminated in $\mathscr{L}_q^m$-TD LDPC codes for the cases $k=3$ and $4$. When we refer to the smallest absorbing sets, we mean those with the smallest number of bit nodes and, among these, with the smallest syndrome.

\subsection{Results for $\mathscr{L}_q^1$-TD LDPC Codes of Column Weight $k=3$}

In Table~\ref{elimination_constraints_k3} we give the possible elimination constraints of the smallest absorbing sets that may occur in $\mathscr{L}_q^1$-TD LDPC codes of column weight $k=3$. The given constraints depend on the characteristic $\omega_q$ of the underlying Galois field $\mathbb{F}_q$.

\begin{itemize}[leftmargin=10pt,labelindent=5pt]

\item The smallest $(3,3)$ absorbing sets are unavoidable.

\item The smallest fully absorbing sets have size $(4,0)$ and can be eliminated by choosing $\mathbb{F}_q$ such that $\omega_q\neq 2$.

\item The absorbing sets of size $(4,2)$ can be avoided by choosing $\mathbb{F}_q$ such that $\omega_q = 2$. Hence, the $(4,0)$ and $(4,2)$ absorbing sets can not be eliminated simultaneously.

\item The absorbing sets of size $5$ can not be avoided, but they are supposed to be harmless since they are non-fully and have relatively large syndromes. 

\item The $(6,0)\{1\}$ absorbing sets can be avoided by $\omega_q\neq 2$, whereas the $(6,0)\{2\}$ absorbing sets can not be avoided.

\item The $(6,2)\{1\}$ absorbing sets can be avoided by $\omega_q\neq 3$, the $(6,2)\{2\}$ absorbing sets by $\omega_q=2$ or $3$ and the $(6,2)\{3\}$, $(6,2)\{4\}$ and $(6,2)\{5\}$ absorbing sets by $\omega_q=2$. The $(6,2)\{6\}$ absorbing sets do never occur. 

\item The $(6,4)\{i\}$ and $(6,6)\{i\}$ absorbing sets can not be avoided, but they are supposed to be harmless due to their large syndrome. 

\end{itemize}

We conjecture that we obtain an excellent $\mathscr{L}_q^1$-TD LDPC code by choosing $\mathbb{F}_q$ in such a way that $\omega_q\neq 2$, since the most harmful $(4,0)$ absorbing sets can be eliminated and the absorbing sets of size $(6,0)$ can partially be avoided.

\subsection{Results for $\mathscr{L}_q^2$-TD LDPC Codes of Column Weight $k=4$}

In Fig.~\ref{classification} we present the exact elimination constraints for the smallest absorbing sets that may occur in an $\mathscr{L}_q^2$-TD LDPC code of column weight $k=4$. The given constraints depend on the scale factors $\alpha_1,\alpha_2$ and on the underlying Galois field $\mathbb{F}_q$ with characteristic $\omega_q$ and can be satisfied by a proper choice of these parameters.

\begin{itemize}[leftmargin=10pt,labelindent=5pt]

	\item The smallest possible absorbing sets have size $(4,4)$ and can be eliminated if and only if the constraints C1-C4 of Table~\ref{constraints_k4} are satisfied. The $(6,2)\{1\}$ and $(6,2)\{3\}$ absorbing sets can be simultaneously avoided since they contain a $(4,4)$ absorbing set.

	\item The smallest possible fully absorbing sets have size $(4,4)$ and can be eliminated if and only if C4 is satisfied.
	
	\item The smallest (fully) absorbing sets with syndrome 0 are of size $(6,0)$ and can be avoided if and only if the constraint $\text{C4}\vee\text{C16}$ is satisfied. Note, that $(6,0)$ absorbing sets correspond to codewords of minimum weight 6 and also define stopping sets of size $6$. Hence, by avoiding these entities, we also raise the minimum and stopping distance of the code.
	
	\item The $(5,4)$ absorbing sets can be avoided if and only if the constraints C5-C7 are satisfied. The $(6,0)$ absorbing sets can be simultaneously avoided since they contain a $(5,4)$ absorbing set.

	\item The fully $(6,2)\{2\}$ and $(6,2)\{4\}$ absorbing sets do never occur. 

	\item The fully $(6,4)\{1\}$ absorbing sets can be avoided if and only if the constraint $\text{C8} \vee \overline{\text{C1}}$ is satisfied.

	\item The $(6,4)\{2\}$ absorbing sets can be eliminated if and only if the constraints C4 and C8-C15 are satisfied. 

	\item The $(6,4)\{3\}$ absorbing sets can be avoided if the constraints C16 and C18-C25 are satisfied.

	\item The $(6,4)\{4\}$ absorbing sets can be avoided if and only if the constraint $\overline{\text{C1}}\vee\overline{\text{C2}}\vee\overline{\text{C3}}\vee\overline{\text{C8}}$ is satisfied.

	\item The $(6,6)\{1\}$ absorbing sets can partially be eliminated if the constraints C1-C4 and C8 are satisfied. In particular, the fully $(6,6)\{1\}$ absorbing sets can be avoided if and only if the constraint C4 is satisfied. 

	\item The $(6,6)\{2\}$ absorbing sets can partially be eliminated if the conditions C1-C4 are satisfied, but the fully $(6,6)\{2\}$ absorbing sets can not be avoided.

\end{itemize}

We conjecture that we obtain an excellent $\mathscr{L}_q^2$-TD LDPC code by choosing $\alpha_1$, $\alpha_2$ and $\mathbb{F}_q$ in such a way that the constraints C1-16 and C18-C25 are satisfied. In this case, the most harmful absorbing sets of size  $(4,4)$, $(5,4)$, $(6,0)$ and $(6,2)$ are eliminated and absorbing sets of size $(6,4)$ and $(6,6)$ are partially avoided.

\begin{table}[t!]
\setlength{\tabcolsep}{3pt}
\renewcommand{\arraystretch}{1.3}
\caption{All colour-permutations $\hat{\pi}_\ell$ of $4$ colours (in short notation)}
\centering
	\begin{tabular}{*{6}{c}}
	\toprule
	$\hat{\pi}_1$ & $\hat{\pi}_2$ & $\hat{\pi}_3$ & $\hat{\pi}_4$ & $\hat{\pi}_5$ & $\hat{\pi}_6$\\
	$(1,2,3,4)$ & $(1,2,4,3)$  & $(1,3,2,4)$  & $(1,3,4,2)$  & $(1,4,2,3)$  & $(1,4,3,2)$\\
	\midrule
	$\hat{\pi}_7$ & $\hat{\pi}_8$ & $\hat{\pi}_9$ & $\hat{\pi}_{10}$ & $\hat{\pi}_{11}$ & $\hat{\pi}_{12}$\\
	$(2,1,3,4)$  & $(2,1,4,3)$ & $(2,3,1,4)$ & $(2,3,4,1)$ & $(2,4,1,3)$ & $(2,4,3,1)$\\
	\midrule
	$\hat{\pi}_{13}$ & $\hat{\pi}_{14}$ & $\hat{\pi}_{15}$ & $\hat{\pi}_{16}$ & $\hat{\pi}_{17}$ & $\hat{\pi}_{18}$\\
	$(3,1,2,4)$  & $(3,1,4,2)$ & $(3,2,1,4)$ & $(3,2,4,1)$ & $(3,4,1,2)$ & $(3,4,2,1)$\\
	\midrule
	$\hat{\pi}_{19}$ & $\hat{\pi}_{20}$ & $\hat{\pi}_{21}$ & $\hat{\pi}_{22}$ & $\hat{\pi}_{23}$ & $\hat{\pi}_{24}$\\
	$(4,1,2,3)$  & $(4,1,3,2)$ & $(4,2,1,3)$ & $(4,2,3,1)$ & $(4,3,1,2)$ & $(4,3,2,1)$\\
	\bottomrule
	\end{tabular}
\label{cps_k4}
\end{table}

\begin{table}[t!]
\caption{List of Constraints over $\mathbb{F}_q$}
\label{constraints_k4}
\setlength{\tabcolsep}{3pt}
\renewcommand{\arraystretch}{1.3}
\centering
	\begin{tabular}{cc;{1pt/3pt}cc}
	\hline
	\textbf{Label} & \textbf{Constraint} & \textbf{Label} & \textbf{Constraint}\\
	\hline
	C1 & $\alpha_1 + \alpha_2 \neq 0$ & C15 & $3\alpha_1 - \alpha_2 \neq 0$\\
	C2 & $2\alpha_1 - \alpha_2 \neq 0$ & C16 & $\alpha_1^2 + \alpha_1\alpha_2 + \alpha_2^2 \neq 0$\\
	C3 & $\alpha_1 - 2\alpha_2 \neq 0$ & C17 & $\omega_q \neq 5$\\
	C4 & $\omega_q \neq 2$ & C18 & $3\alpha_1^2 - 3\alpha_1\alpha_2 + \alpha_2^2 \neq 0$\\
	C5 & $\alpha_1^2 + \alpha_1\alpha_2 - \alpha_2^2 \neq 0$ & C19 & $\alpha_1^2 - 3\alpha_1\alpha_2 + 3\alpha_2^2 \neq 0$\\
	C6 & $\alpha_2^2 + \alpha_1\alpha_2 - \alpha_1^2 \neq 0$ & C20 & $3\alpha_1 - 4\alpha_2 \neq 0$\\
	C7 & $\alpha_1^2 - 3\alpha_1\alpha_2 + \alpha_2^2 \neq 0$ & C21 & $4\alpha_1 - 3\alpha_2 \neq 0$\\
	C8 & $\alpha_1^2 - \alpha_1\alpha_2 + \alpha_2^2 \neq 0$ & C22 & $\alpha_1 - 4\alpha_2 \neq 0$\\
	C9 & $\omega_q \neq 3$ & C23 & $4\alpha_1 - \alpha_2 \neq 0$\\
	C10 & $3\alpha_1 - 2\alpha_2 \neq 0$ & C24 & $\alpha_1 + 3\alpha_2 \neq 0$\\
	C11 & $2\alpha_1 - 3\alpha_2 \neq 0$ & C25 & $3\alpha_1 + \alpha_2 \neq 0$\\
	C12 & $\alpha_1 + 2\alpha_2 \neq 0$ & C26 & $\alpha_1^2 + \alpha_2^2 \neq 0$\\
	C13 & $2\alpha_1 + \alpha_2 \neq 0$ & C27 & $2\alpha_1^2 - 2\alpha_1\alpha_2 +\alpha_2^2 \neq 0$\\
	C14 & $\alpha_1 - 3\alpha_2 \neq 0$ & C28 & $\alpha_1^2 - 2\alpha_1\alpha_2 +2\alpha_2^2 \neq 0$\\
	\hline
	\end{tabular}
\end{table}

\begin{figure*}[!t]

\vspace{0.5cm}

\def\scaleboxes {0.65}

\begin{minipage}[t]{0.32\textwidth}
\mbox{}\\[-\baselineskip]

\renewcommand{\arraystretch}{1.3}
\newcolumntype{L}[1]{>{\raggedright\let\newline\\\arraybackslash\hspace{0pt}}m{#1}}
\newcolumntype{C}[1]{>{\centering\let\newline\\\arraybackslash\hspace{0pt}}m{#1}}
\newcolumntype{R}[1]{>{\raggedleft\let\newline\\\arraybackslash\hspace{0pt}}m{#1}}

	\scalebox{\scaleboxes}{
	\begin{tabular}{|C{3.7cm}C{3.9cm}|}
	\multicolumn{2}{c}{$\boldsymbol{(4,4)}$}\\
	\hline
	\multicolumn{2}{|c|}{$\hat{\varphi}_1=(\{1,8\},\{2,6\},\{3,7,9\},\{4,5,10\})$}\\
	\hdashline[1pt/3pt]
	$\ell$ of $\pi_\ell$ & elimination \\
	$1,2,7,8,17,18,23,24$ & $\text{C1}$ \\
	$3,5,9,11,14,16,20,22$ & $\text{C2}$ \\
	$4,6,10,12,13,15,19,21$ & $\text{C3}$ \\
	\hline
	\multicolumn{2}{|c|}{$\hat{\varphi}_2=(\{1,8\},\{2,6\},\{3,5\},\{4,7,9,10\})$ ({\Large\textasteriskcentered})}\\
	\hdashline[1pt/3pt]
	$\ell$ of $\pi_\ell$ & elimination \\
	$1-24$ & $\text{C4}$\\
	\hline
	\end{tabular}
	}
\\
\vspace{0.37cm}
\\
	\scalebox{\scaleboxes}{
	\begin{tabular}{|C{3.7cm}C{3.9cm}|}
	\multicolumn{2}{c}{$\boldsymbol{(5,4)}$}\\
	\hline
	\multicolumn{2}{|c|}{$\hat{\varphi}_1=(\{1,9,10\},\{2,6,12\},\{3,7,8\},\{4,5,11\})$}\\
	\hdashline[1pt/3pt]
	$\ell$ of $\pi_\ell$ & elimination \\
	$1,3,8,11,14,17,22,24$ & $\text{C5}$\\
	$2,4,7,12,13,18,21,23$ & $\text{C6}$\\
	$5,6,9,10,15,16,19,20$ & $\text{C7}$\\
	\hline
	\end{tabular}
	}
\\
\vspace{0.37cm}
\\
	\scalebox{\scaleboxes}{
	\begin{tabular}{|C{3.7cm}C{3.9cm}|}
	\multicolumn{2}{c}{$\boldsymbol{(6,0)}\succ(5,4)$}\\
	\hline
	\multicolumn{2}{|c|}{$\hat{\varphi}_1=(\{1,9,10\},\{2,6,12\},\{3,7,8\},\{4,5,11\})$ ({\Large\textasteriskcentered})}\\
	\hdashline[1pt/3pt]
	$\ell$ of $\pi_\ell$ & elimination \\
	$1-24$ & $\text{C4} \vee \text{C16}$\\
	\hline
	\end{tabular}
	}
\\
\vspace{0.37cm}
\\
	\scalebox{\scaleboxes}{
	\begin{tabular}{|C{3.7cm}C{3.9cm}|}
	\multicolumn{2}{c}{$\boldsymbol{(6,2)\{1\}}\succ(4,4)$}\\
	\hline
	\multicolumn{2}{|c|}{$\hat{\varphi}_1=(\{1,11\},\{2,6,10\},\{3,7,8\},\{4,5,9\})$ ({\Large\textasteriskcentered})}\\
	\hdashline[1pt/3pt]
	$\ell$ of $\pi_\ell$ & elimination \\
	$1-24$ & $\text{C}1 \vee \text{C9}$\\
	\hline
	\end{tabular}
	}
\\
\vspace{0.37cm}
\\
	\scalebox{\scaleboxes}{
	\begin{tabular}{|C{3.7cm}C{3.9cm}|}
	\multicolumn{2}{c}{$\boldsymbol{(6,2)\{2\}}\succ(4,4)$}\\
	\hline
	\multicolumn{2}{|c|}{$\hat{\varphi}_1=(\{1,11,12\},\{2,6,10\},\{3,7,8\},\{4,5,9\})$ ({\Large\textasteriskcentered})}\\
	\hdashline[1pt/3pt]
	$\ell$ of $\pi_\ell$ & elimination \\
	$1-24$ & always\\
	\hline
	\end{tabular}
	}
\\
\vspace{0.37cm}
\\
	\scalebox{\scaleboxes}{
	\begin{tabular}{|C{3.7cm}C{3.9cm}|}
	\multicolumn{2}{c}{$\boldsymbol{(6,2)\{3\}}\succ(4,4)$}\\
	\hline
	\multicolumn{2}{|c|}{$\hat{\varphi}_1=(\{1,8,12,13\},\{2,6,11\},$}\\
	\multicolumn{2}{|c|}{$\{3,7,9\},\{4,5,10\})$ ({\Large\textasteriskcentered})}\\
	\hdashline[1pt/3pt]
	$\ell$ of $\pi_\ell$ & elimination \\
	$1,2,7,8,17,18,23,24$ & $\text{C1} \vee \overline{\text{C9}}$\\
	$3,5,9,11,14,16,20,22$ & $\text{C2} \vee \overline{\text{C9}}$\\
	$4,6,10,12,13,15,19,21$ & $\text{C3} \vee \overline{\text{C9}}$\\
	\hline
	\multicolumn{2}{|c|}{$\hat{\varphi}_1=(\{1,8,11\},\{2,6,12,13\},$}\\
	\multicolumn{2}{|c|}{$\{3,7,9\},\{4,5,10\})$ ({\Large\textasteriskcentered})}\\
	\hdashline[1pt/3pt]
	$\ell$ of $\pi_\ell$ & elimination \\
	$1-24$ & always\\
	\hline
	\end{tabular}
	}
\\
\vspace{0.37cm}
\\
	\scalebox{\scaleboxes}{
	\begin{tabular}{|C{3.7cm}C{3.9cm}|}
	\multicolumn{2}{c}{$\boldsymbol{(6,2)\{4\}}\succ(5,4)$}\\
	\hline
	\multicolumn{2}{|c|}{$\hat{\varphi}_1=(\{1,9,10\},\{2,6,12,13\},$}\\
	\multicolumn{2}{|c|}{$\{3,7,8\},\{4,5,11\})$ ({\Large\textasteriskcentered})}\\
	\hdashline[1pt/3pt]
	$\ell$ of $\pi_\ell$ & elimination \\
	$1-24$ & always\\
	\hline
	\end{tabular}
	}
\\
\vspace{0.37cm}
\\
	\scalebox{\scaleboxes}{
	\begin{tabular}{|C{3.7cm}C{3.9cm}|}
	\multicolumn{2}{c}{$\boldsymbol{(6,4)\{1\}}$}\\
	\hline
	\multicolumn{2}{|c|}{$\hat{\varphi}_1=(\{1,11,12,13\},\{2,6,10\},$}\\
	\multicolumn{2}{|c|}{$\{3,7,8\},\{4,5,9\})$ ({\Large\textasteriskcentered})}\\
	\hdashline[1pt/3pt]
	$\ell$ of $\pi_\ell$ & elimination \\
	$1-24$ & $\text{C}8 \vee \overline{\text{C}1}$ \\
	\hline
	\end{tabular}
	}
\end{minipage}
\begin{minipage}[t]{0.337\textwidth}
\mbox{}\\[-\baselineskip]

\renewcommand{\arraystretch}{1.3}
\newcolumntype{L}[1]{>{\raggedright\let\newline\\\arraybackslash\hspace{0pt}}m{#1}}
\newcolumntype{C}[1]{>{\centering\let\newline\\\arraybackslash\hspace{0pt}}m{#1}}
\newcolumntype{R}[1]{>{\raggedleft\let\newline\\\arraybackslash\hspace{0pt}}m{#1}}

	\scalebox{\scaleboxes}{
	\begin{tabular}{|C{3.7cm}C{4.4cm}|}
	\multicolumn{2}{c}{$\boldsymbol{(6,4)\{2\}}$}\\
	\hline
	\multicolumn{2}{|c|}{$\hat{\varphi}_1=(\{1,9,11,13,14\},\{2,6,12\},\{3,7,8\},\{4,5,10\})$ ({\Large\textasteriskcentered})}\\
	\hdashline[1pt/3pt]
	$\ell$ of $\pi_\ell$ & elimination \\
	$1-24$ & $\text{C4} \vee \overline{\text{C16}}$\\
	\hline
	\multicolumn{2}{|c|}{$\hat{\varphi}_2=(\{1,9,11,12\},\{2,6,13,14\},\{3,7,8\},\{4,5,10\})$}\\
	\hdashline[1pt/3pt]
	$\ell$ of $\pi_\ell$ & elimination \\
	$1-24$ & always\\
	\hline
	\multicolumn{2}{|c|}{$\hat{\varphi}_3=(\{1,9,10,13\},\{2,6,12\},\{3,7,8\},\{4,5,11,14\})$}\\
	\hdashline[1pt/3pt]
	$\ell$ of $\pi_\ell$ & elimination \\
	$1,8,17,24$ & $\text{C10} \vee \overline{\text{C17}}$\\
	$2,7,18,23$ & $\text{C11} \vee \overline{\text{C17}}$\\
	$3,11,14,22$ & $\text{C12} \vee \overline{\text{C17}}$\\
	$4,12,13,21$ & $\text{C13} \vee \overline{\text{C17}}$\\
	$5,9,16,20$ & $\text{C14} \vee \overline{\text{C17}}$\\
	$6,10,15,19$ & $\text{C15} \vee \overline{\text{C17}}$\\
	\hline
	\multicolumn{2}{|c|}{$\hat{\varphi}_4=(\{1,8,10\},\{2,6,12\},\{3,7,9,13\},\{4,5,11,14\})$}\\
	\hdashline[1pt/3pt]
	$\ell$ of $\pi_\ell$ & elimination \\
	$1-24$ & $\text{C9} \vee \overline{\text{C1}}$\\
	\hline
	\multicolumn{2}{|c|}{$\hat{\varphi}_5=(\{1,8,10\},\{2,7,11,13\},\{3,5,12\},\{4,6,9,14\})$}\\
	\hdashline[1pt/3pt]
	$\ell$ of $\pi_\ell$ & elimination \\
	$1-24$ & $\text{C8} \vee \overline{\text{C1}} \vee \overline{\text{C4}}$\\
	\hline
	\multicolumn{2}{|c|}{$\hat{\varphi}_6=(\{1,8,10\},\{2,6,13,14\},\{3,5,12\},\{4,7,9,11\})$}\\
	\hdashline[1pt/3pt]
	$\ell$ of $\pi_\ell$ & elimination \\
	$1-24$ & always\\
	\hline
	\multicolumn{2}{|c|}{$\hat{\varphi}_7=(\{1,8,10\},\{2,6,12\},\{3,5,13,14\},\{4,7,9,11\})$}\\
	\hdashline[1pt/3pt]
	$\ell$ of $\pi_\ell$ & elimination \\
	$1,8,17,24$ & $\text{C12}$\\
	$2,7,18,23$ & $\text{C13}$\\
	$3,11,14,22$ & $\text{C10}$\\
	$4,12,13,21$ & $\text{C11}$\\
	$5,9,16,20$ & $\text{C15}$\\
	$6,10,15,19$ & $\text{C14}$\\
	\hline
	\end{tabular}
	}
\\
\vspace{0.32cm}
\\
	\scalebox{\scaleboxes}{
	\begin{tabular}{|C{3.7cm}C{4.4cm}|}
	\multicolumn{2}{c}{$\boldsymbol{(6,4)\{3\}}$}\\
	\hline
	\multicolumn{2}{|c|}{$\hat{\varphi}_1=(\{1,9,10\},\{2,6,12\},\{3,7,8,14\},\{4,5,11,13\})$}\\
	\hdashline[1pt/3pt]
	$\ell$ of $\pi_\ell$ & elimination \\
	$1,2,7,8,17,18,23,24$ & $\text{C16} \vee \overline{\text{C4}}$\\
	$3,5,9,11,14,16,20,22$ & $\text{C18} \vee \overline{\text{C4}}$\\
	$4,6,10,12,13,15,19,21$ & $\text{C19} \vee \overline{\text{C4}}$\\
	\hline
	\multicolumn{2}{|c|}{$\hat{\varphi}_2=(\{1,9,10\},\{2,6,13,14\},\{3,5,12\},\{4,7,8,11\})$}\\
	\hdashline[1pt/3pt]
	$\ell$ of $\pi_\ell$ & elimination \\
	$1,8,17,24$ & $\text{C20}$\\
	$2,7,18,23$ & $\text{C21}$\\
	$3,11,14,22$ & $\text{C22}$\\
	$4,12,13,21$ & $\text{C23}$\\
	$5,9,16,20$ & $\text{C24}$\\
	$6,10,15,19$ & $\text{C25}$\\
	\hline
	\end{tabular}
	}
\\
\vspace{0.32cm}
\\
	\scalebox{\scaleboxes}{
	\begin{tabular}{|C{3.7cm}C{4.4cm}|}
	\multicolumn{2}{c}{$\boldsymbol{(6,4)\{4\}}$}\\
	\hline
	\multicolumn{2}{|c|}{$\hat{\varphi}_1=(\{1,9,11,13,14\},\{2,6,12\},\{3,7,8\},\{4,5,10\})$ ({\Large\textasteriskcentered})}\\
	\hdashline[1pt/3pt]
	$\ell$ of $\pi_\ell$ & elimination \\
	$1-24$ & $\overline{\text{C}1} \vee \overline{\text{C2}} \vee \overline{\text{C3}} \vee \overline{\text{C8}}$ \\
	\hline
	\multicolumn{2}{|c|}{$\hat{\varphi}_2=(\{1,9,11,12\},\{2,6,13,14\},\{3,7,8\},\{4,5,10\})$}\\
	\hdashline[1pt/3pt]
	$\ell$ of $\pi_\ell$ & elimination \\
	$1-24$ & always\\
	\hline
	\end{tabular}
	}
\end{minipage}
\begin{minipage}[t]{0.31\textwidth}
\mbox{}\\[-\baselineskip]

\renewcommand{\arraystretch}{1.3}
\newcolumntype{L}[1]{>{\raggedright\let\newline\\\arraybackslash\hspace{0pt}}m{#1}}
\newcolumntype{C}[1]{>{\centering\let\newline\\\arraybackslash\hspace{0pt}}m{#1}}
\newcolumntype{R}[1]{>{\raggedleft\let\newline\\\arraybackslash\hspace{0pt}}m{#1}}

	\scalebox{\scaleboxes}{
	\begin{tabular}{|C{3.7cm}C{4.4cm}|}
	\multicolumn{2}{c}{$\boldsymbol{(6,6)\{1\}}$}\\
	\hline
	\multicolumn{2}{|c|}{$\hat{\varphi}_1=(\{1,9,12,13\},\{2,7,11,14\},\{3,6,8\},\{4,5,10,15\})$}\\
	\hdashline[1pt/3pt]
	$\ell$ of $\pi_\ell$ & elimination \\
	$1,3,8,11,14,17,22,24$ & $\text{C3}$ \\
	$2,4,7,12,13,18,21,23$ & $\text{C2}$ \\
	$5,6,9,10,15,16,19,20$ & $\text{C1}$ \\
	\hline
	\multicolumn{2}{|c|}{$\hat{\varphi}_2=(\{1,9,11,14\},\{2,7,12,13\},\{3,6,8\},\{4,5,10,15\})$}\\
	\hdashline[1pt/3pt]
	$\ell$ of $\pi_\ell$ & elimination \\
	$1-24$ & $\text{C8}$ \\
	\hline
	\multicolumn{2}{|c|}{$\hat{\varphi}_3=(\{1,9,10,15\},\{2,7,11,14\},\{3,6,8\},\{4,5,12,13\})$}\\
	\hdashline[1pt/3pt]
	$\ell$ of $\pi_\ell$ & elimination \\
	$1-24$ & no \\
	\hline
	\multicolumn{2}{|c|}{$\hat{\varphi}_4=(\{1,8,10\},\{2,7,11,14\},\{3,6,9,15\},\{4,5,12,13\})$}\\
	\hdashline[1pt/3pt]
	$\ell$ of $\pi_\ell$ & elimination \\
	$1-24$ & always \\
	\hline
	\multicolumn{2}{|c|}{$\hat{\varphi}_5=(\{1,8,10\},\{2,7,12,13\},\{3,6,9,15\},\{4,5,11,14\})$}\\
	\hdashline[1pt/3pt]
	$\ell$ of $\pi_\ell$ & elimination \\
	$1,6,8,10,15,17,19,24$ & $\overline{\text{C2}} \vee \overline{\text{C4}}\vee \overline{\text{C11}}\vee \overline{\text{C13}}$\\
	$2,5,7,9,16,18,20,23$ & $\overline{\text{C3}} \vee \overline{\text{C4}}\vee \overline{\text{C10}}\vee \overline{\text{C12}}$\\
	$3,4,11,12,13,14,21,22$ & $\overline{\text{C1}} \vee \overline{\text{C4}}\vee \overline{\text{C14}}$\\
	\hline
	\multicolumn{2}{|c|}{$\hat{\varphi}_6=(\{1,8,12,14\},\{2,6,11\},\{3,7,9,13\},\{4,5,10,15\})$}\\
	\hdashline[1pt/3pt]
	$\ell$ of $\pi_\ell$ & elimination \\
	$1-24$ & $\text{C1} \wedge (\overline{\text{C2}} \vee \overline{\text{C3}} \vee \overline{\text{C8}} \vee \overline{\text{C9}}) \vee (\overline{\text{C1}} \wedge \text{C9})$ \\
	\hline
	\multicolumn{2}{|c|}{$\hat{\varphi}_7=(\{1,8,10\},\{2,6,11\},\{3,7,9,14,15\},\{4,5,12,13\})$}\\
	\hdashline[1pt/3pt]
	$\ell$ of $\pi_\ell$ & elimination \\
	$1,2,7,8,17,18,23,24$ & $\text{C1}$\\
	$3,5,9,11,14,16,20,22$ & $\text{C2}$\\
	$4,6,10,12,13,15,19,21$ & $\text{C3}$\\
	\hline
	\multicolumn{2}{|c|}{$\hat{\varphi}_8=(\{1,8,10\},\{2,6,11\},\{3,7,9,13\},\{4,5,12,14,15\})$}\\
	\hdashline[1pt/3pt]
	$\ell$ of $\pi_\ell$ & elimination \\
	$1,2,7,8,17,18,23,24$ & $\overline{\text{C1}} \vee \overline{\text{C4}} \vee \overline{\text{C14}} \vee \overline{\text{C26}}$\\
	$3,5,9,11,14,16,20,22$ & $\overline{\text{C2}} \vee \overline{\text{C4}} \vee \overline{\text{C11}} \vee \overline{\text{C13}} \vee \overline{\text{C27}}$\\
	$4,6,10,12,13,15,19,21$ & $\overline{\text{C3}} \vee \overline{\text{C4}} \vee \overline{\text{C10}} \vee \overline{\text{C12}} \vee \overline{\text{C28}}$\\
	\hline
	\multicolumn{2}{|c|}{$\hat{\varphi}_9=(\{1,8,10\},\{2,6,11\},\{3,5,13\},$}\\
	\multicolumn{2}{|c|}{$\{4,7,9,12,14,15\})$ ({\Large\textasteriskcentered})}\\
	\hdashline[1pt/3pt]
	CP's & elimination \\
	$1-24$ & $\text{C4}$\\
	\hline
	\end{tabular}
	}
\\
\vspace{0.39cm}
\\
	\scalebox{\scaleboxes}{
	\begin{tabular}{|C{3.7cm}C{4.4cm}|}
	\multicolumn{2}{c}{$\boldsymbol{(6,6)\{2\}}$}\\
	\hline
	\multicolumn{2}{|c|}{$\hat{\varphi}_1=(\{1,10,12,14\},\{2,7,11,15\},\{3,6,8\},\{4,5,9,13\})$}\\
	\hdashline[1pt/3pt]
	$\ell$ of $\pi_\ell$ & elimination \\
	$1-24$ & $\text{C4}$ \\
	\hline
	\multicolumn{2}{|c|}{$\hat{\varphi}_2=(\{1,10,11,15\},\{2,7,12,14\},\{3,6,8\},\{4,5,9,13\})$}\\
	\hdashline[1pt/3pt]
	$\ell$ of $\pi_\ell$ & elimination \\
	$1,2,7,8,17,18,23,24$ & $\text{C1}$ \\
	$3,6,9,11,14,16,20,22$ & $\text{C2}$ \\
	$4,5,10,12,13,15,19,21$ & $\text{C3}$ \\
	\hline
	\multicolumn{2}{|c|}{$\hat{\varphi}_3=(\{1,9,11\},\{2,7,12,14\},\{3,6,8\},\{4,5,10,13,15\})$}\\
	\hdashline[1pt/3pt]
	$\ell$ of $\pi_\ell$ & elimination \\
	$1-24$ & always \\
	\hline
	\multicolumn{2}{|c|}{$\hat{\varphi}_4=(\{1,8,12\},\{2,6,11\},\{3,5,9\},$}\\
	\multicolumn{2}{|c|}{$\{4,7,10,13,14,15\})$ ({\Large\textasteriskcentered})}\\
	\hdashline[1pt/3pt]
	$\ell$ of $\pi_\ell$ & elimination \\
	$1-24$ & no \\
	\hline
	\end{tabular}
	}
\vspace{0.5cm}
\end{minipage}

\renewcommand{\baselinestretch}{1.2}
\caption{Each table deals with an $(a,b)\{i\}$ absorbing set candidate $(\mathcal{P},\mathcal{B})$ of block size $k=4$ listed in Fig.~\ref{fig:absorbing_set_classification}. Each table is divided into subtables presenting all non-isomorphic $4$-colourings $\hat{\varphi}_j$ of the candidate (in short notation). For each 4-colouring $\varphi_j$, the subtable gives the exact constraints in order to eliminate the absorbings sets of type $(\mathcal{P},\mathcal{B},\varphi_j)$ with colour-to-group mapping $\pi_\ell$ in an $\mathscr{L}_q^{2}$-TD LDPC code of column weight $4$. The colour-to-group mappings are explicitely given in Table~\ref{cps_k4} and the referred constraints C1-C28 are listed in Table~\ref{constraints_k4}. If the $4$-colouring $\varphi_j$ is marked by an asterisk ({\normalsize\textasteriskcentered}), then the absorbing sets of type $(\mathcal{P},\mathcal{B},\varphi_j)$ are fully. The symbol `$\succ$' means that the left-hand candidate is an extension of the right-hand candidate such that the elimination of the smaller candidate leads to the elimination of the extension.
}
\label{classification}
\end{figure*}

\subsection{Strategy for $\mathscr{L}_q^m$-TD Codes of Higher Column Weights}
\label{strategy_for_higher_column_weighs}

Let $\CMcal{D}$ be an $\mathscr{L}_q^m$-TD of block size $k=m+2$.
By removing the points of any $t$ groups of $\CMcal{D}$, we obtain an $\mathscr{L}_q^{(m-t)}$-TD $\CMcal{D}'$ of block size $k-t$. 
Since $\CMcal{D}'$ is embedded in $\CMcal{D}$, it is highly possible that an absorbing set $\CMcal{A}$ of $\CMcal{C}(\CMcal{D})$ also represents an absorbing set of $\CMcal{C}(\CMcal{D}')$ or contains a smaller absorbing set $\CMcal{A}'$ of $\CMcal{C}(\CMcal{D}')$.
Hence, the elimination of absorbing sets in $\CMcal{C}(\CMcal{D}')$ of column weight $k-t$ leads to the avoidance of absorbing sets in $\CMcal{C}(\CMcal{D})$ of column weight $k$.
Therefore, we conjecture that we obtain beneficial $\mathscr{L}_q^m$-TD LDPC codes of higher column weights if the scale factors $\alpha_1,\hdots,\alpha_m$ pairwise satisfy the constraints C1-C16 and C18-C25 of Table~\ref{constraints_k4} which is the design strategy for the case $k=4$.

\begin{table*}[!t]
\renewcommand{\arraystretch}{1.25}
\newcolumntype{C}{>{\centering\arraybackslash $}m{0.49cm}<{$}}
\newcommand{\nm}[1]{\textnormal{#1}}
\caption{Absorbing Set Detections for all $\mathscr{L}_{13}^2$-TD LDPC Codes over the AWGN channel with $E_b/N_0=5 \textnormal{ dB}$ and $10^8$ Codewords.}
\label{table_absorbingsets}
\begin{center}
\scalebox{0.85}{
\begin{tabular}{|@{}>{\centering\ \ \,$}p{1.3cm}<{$}|*{5}{CCC;{1pt/3pt}}CCC|}
\hline
&
\multicolumn{18}{c|}{Each pair $[1,\alpha]$ represents an $\mathscr{L}_{13}^2$-TD LDPC code based on the MOLS $\{\CMcal{L}_{13}^{(1)},\CMcal{L}_{13}^{(\alpha)}\}\in \mathscr{L}_{13}^2$}
\\
\cline{2-19}
&
\multicolumn{3}{c;{1pt/3pt}}{$\boldsymbol{[1,2],[1,7]}$}&
\multicolumn{3}{c;{1pt/3pt}}{$\boldsymbol{[1,3],[1,9]}$}&
\multicolumn{3}{c;{1pt/3pt}}{$\boldsymbol{[1,4],[1,10]}$}&
\multicolumn{3}{c;{1pt/3pt}}{$\boldsymbol{[1,5],[1,8]}$}&
\multicolumn{3}{c;{1pt/3pt}}{$\boldsymbol{[1,6],[1,11]}$}&
\multicolumn{3}{c|}{$\boldsymbol{[1,12]}$}
\\
\boldsymbol{(a,b)} &
\nm{total}&\nm{fully}&\nm{elem.}&
\nm{total}&\nm{fully}&\nm{elem.}&
\nm{total}&\nm{fully}&\nm{elem.}&
\nm{total}&\nm{fully}&\nm{elem.}&
\nm{total}&\nm{fully}&\nm{elem.}&
\nm{total}&\nm{fully}&\nm{elem.}
\\
\hline
(4,4)&
1 & 0 & 1 &
0 & 0 & 0 &
0 & 0 & 0 &
0 & 0 & 0 &
0 & 0 & 0 &
3 & 0 & 3
\\
(6,2)&
4129 & 4129 & 4129 &
0 & 0 & 0 &
0 & 0 & 0 &
0 & 0 & 0 &
0 & 0 & 0 &
3977 & 3977 & 3977
\\
(6,4)&
0 & 0 & 0 &
0 & 0 & 0 &
3 & 0 & 3 &
1 & 0 & 1 &
5 & 3 & 5 &
0 & 0 & 0
\\
(6,6)&
0 & 0 & 0 &
0 & 0 & 0 &
0 & 0 & 0 &
7 & 0 & 7 &
0 & 0 & 0 &
0 & 0 & 0
\\
(7,4)&
0 & 0 & 0 &
0 & 0 & 0 &
0 & 0 & 0 &
0 & 0 & 0 &
1 & 0 & 1 &
0 & 0 & 0
\\
(7,6)&
0 & 0 & 0 &
0 & 0 & 0 &
0 & 0 & 0 &
0 & 0 & 0 &
2 & 0 & 2 &
0 & 0 & 0
\\
(8,0)&
551 & 551 & 551 &
0 & 0 & 0 &
0 & 0 & 0 &
0 & 0 & 0 &
0 & 0 & 0 &
517 & 517 & 517
\\
(8,2)&
54 & 54 & 54 &
257 & 257 & 220 &
134 & 134 & 134 &
241 & 241 & 217 &
225 & 225 & 205 &
47 & 47 & 47
\\
(8,4)&
0 & 0 & 0 &
4 & 0 & 4 &
0 & 0 & 0 &
0 & 0 & 0 &
3 & 0 & 3 &
0 & 0 & 0
\\
(9,4)&
0 & 0 & 0 &
0 & 0 & 0 &
0 & 0 & 0 &
0 & 0 & 0 &
5 & 0 & 5 &
1 & 0 & 1
\\
(9,6)&
0 & 0 & 0 &
0 & 0 & 0 &
0 & 0 & 0 &
1 & 0 & 1 &
0 & 0 & 0 &
0 & 0 & 0
\\
(10,0)&
0 & 0 & 0 &
19 & 19 & 19 &
42 & 42 & 28 &
21 & 21 & 21 &
20 & 20 & 20 &
0 & 0 & 0
\\
(10,2)&
51 & 51 & 31 &
54 & 54 & 24 &
70 & 70 & 42 &
61 & 61 & 36 &
63 & 63 & 43 &
77 & 77 & 29
\\
(10,4)&
0 & 0 & 0 &
0 & 0 & 0 &
1 & 0 & 1 &
5 & 3 & 1 &
0 & 0 & 0 &
0 & 0 & 0
\\
(11,4)&
1 & 0 & 0 &
0 & 0 & 0 &
0 & 0 & 0 &
0 & 0 & 0 &
0 & 0 & 0 &
0 & 0 & 0
\\
(11,6)&
0 & 0 & 0 &
0 & 0 & 0 &
2 & 0 & 0 &
0 & 0 & 0 &
0 & 0 & 0 &
0 & 0 & 0
\\
(12,0)&
4 & 4 & 2 &
7 & 7 & 3 &
4 & 4 & 3 &
2 & 2 & 0 &
0 & 0 & 0 &
9 & 9 & 1
\\
(12,2)&
8 & 8 & 1 &
26 & 26 & 14 &
5 & 5 & 2 &
12 & 12 & 5 &
19 & 19 & 3 &
16 & 16 & 3
\\
(14,0)&
2 & 2 & 0 &
2 & 2 & 1 &
0 & 0 & 0 &
1 & 1 & 1 &
0 & 0 & 0 &
0 & 0 & 0
\\
(14,2)&
2 & 2 & 2 &
9 & 9 & 0 &
4 & 4 & 1 &
6 & 6 & 1 &
2 & 2 & 0 &
5 & 5 & 2
\\
(16,0)&
2 & 2 & 0 &
0 & 0 & 0 &
0 & 0 & 0 &
1 & 1 & 0 &
0 & 0 & 0 &
0 & 0 & 0
\\
(16,2)&
0 & 0 & 0 &
0 & 0 & 0 &
0 & 0 & 0 &
2 & 2 & 0 &
0 & 0 & 0 &
4 & 4 & 0
\\
(18,0)&
0 & 0 & 0 &
0 & 0 & 0 &
0 & 0 & 0 &
1 & 1 & 0 &
0 & 0 & 0 &
0 & 0 & 0
\\
(18,2)&
0 & 0 & 0 &
0 & 0 & 0 &
1 & 1 & 0 &
0 & 0 & 0 &
0 & 0 & 0 &
0 & 0 & 0
\\
\hdashline[1pt/3pt]
\textbf{BER}&
\multicolumn{3}{c;{1pt/3pt}}{$20.624\cdot 10^{-5}$}&
\multicolumn{3}{c;{1pt/3pt}}{$5.058\cdot 10^{-5}$}&
\multicolumn{3}{c;{1pt/3pt}}{$4.560\cdot 10^{-5}$}&
\multicolumn{3}{c;{1pt/3pt}}{$4.965\cdot 10^{-5}$}&
\multicolumn{3}{c;{1pt/3pt}}{$4.968\cdot 10^{-5}$}&
\multicolumn{3}{c|}{$20.217 \cdot 10^{-5}$}
\\
\textbf{FER}&
\multicolumn{3}{c;{1pt/3pt}}{$51.96\cdot 10^{-4}$}&
\multicolumn{3}{c;{1pt/3pt}}{$8.39\cdot 10^{-4}$}&
\multicolumn{3}{c;{1pt/3pt}}{$7.30\cdot 10^{-4}$}&
\multicolumn{3}{c;{1pt/3pt}}{$8.07\cdot 10^{-4}$}&
\multicolumn{3}{c;{1pt/3pt}}{$8.16\cdot 10^{-4}$}&
\multicolumn{3}{c|}{$50.4 \cdot 10^{-4}$}
\\
\hdashline[1pt/3pt]
\multirow{4}{*}{\!\!\!\textbf{Violations}}&
\multicolumn{3}{c;{1pt/3pt}}{$\boldsymbol{[1,2]}$:}&
\multicolumn{3}{c;{1pt/3pt}}{$\boldsymbol{[1,3]}$:}&
\multicolumn{3}{c;{1pt/3pt}}{$\boldsymbol{[1,4]}$:}&
\multicolumn{3}{c;{1pt/3pt}}{$\boldsymbol{[1,5]}$:}&
\multicolumn{3}{c;{1pt/3pt}}{$\boldsymbol{[1,6]}$:}&
\multicolumn{3}{c|}{$\boldsymbol{[1,12]}$:}
\\
&
\multicolumn{3}{c;{1pt/3pt}}{C2}&
\multicolumn{3}{c;{1pt/3pt}}{C15, C16, C28}&
\multicolumn{3}{c;{1pt/3pt}}{C8, C20, C23, C24}&
\multicolumn{3}{c;{1pt/3pt}}{C11, C18, C26}&
\multicolumn{3}{c;{1pt/3pt}}{C12, C19, C27}&
\multicolumn{3}{c|}{C1}
\\
&
\multicolumn{3}{c;{1pt/3pt}}{$\boldsymbol{[1,7]}$:}&
\multicolumn{3}{c;{1pt/3pt}}{$\boldsymbol{[1,9]}$:}&
\multicolumn{3}{c;{1pt/3pt}}{$\boldsymbol{[1,10]}$:}&
\multicolumn{3}{c;{1pt/3pt}}{$\boldsymbol{[1,8]}$:}&
\multicolumn{3}{c;{1pt/3pt}}{$\boldsymbol{[1,11]}$:}&
\multicolumn{3}{c|}{}
\\
&
\multicolumn{3}{c;{1pt/3pt}}{C3}&
\multicolumn{3}{c;{1pt/3pt}}{C14, C16, C27}&
\multicolumn{3}{c;{1pt/3pt}}{C8, C21, C22, C25}&
\multicolumn{3}{c;{1pt/3pt}}{C10, C19, C26}&
\multicolumn{3}{c;{1pt/3pt}}{C13, C18, C28}&
\multicolumn{3}{c|}{}
\\
\hline
\end{tabular}
}
\end{center}
\end{table*}

\section{Simulation Results}\label{simulations}

\subsection{Absorbing Set Detections for $\mathscr{L}_{13}^2$-TD LDPC Codes}

In Table~\ref{table_absorbingsets}, we simulated the transmission and decoding of $10^8$ codewords for all possible $\mathscr{L}_{13}^2$-TD LDPC codes over the AWGN channel with $E_b/N_0=5$~dB and under the standard sum-product algorithm with a maximum of 2000 iterations per codeword. 
For every code, we counted the number of absorbing set detections grouped by their sizes $(a,b)$ in three categories: The column ``total'' counts the total number of absorbing set detections throughout the simulations and the columns ``fully'' and ``elem.'' count the number of fully and elementary absorbing set detections, respectively. 
We also list the bit error rate (BER), the frame error rate (FER) and any violations of the constraints of Table~\ref{constraints_k4} for each code. 
Note that there are some codes that are equal according to Theorem~\ref{theorem:equivalence_of_TD_LDPC_codes} and thus are listed in the same column. 
For example, let $\CMcal{M}_1=\big\{ \CMcal{L}_{13}^{(1)}, \CMcal{L}_{13}^{(2)} \big\}$. For $\lambda=7$, we obtain that $\CMcal{C}(\CMcal{D}_{\CMcal{M}_1})=\CMcal{C}(\CMcal{D}_{\CMcal{M}_2})$ with $\CMcal{M}_2=\big\{ \CMcal{L}_{13}^{(7)}, \CMcal{L}_{13}^{(1)} \big\}$ by Theorem~\ref{theorem:equivalence_of_TD_LDPC_codes}.
These codes clearly show the same decoding behaviour. Notice that the Latin squares of $\CMcal{M}_2$ can be swapped without changing the code $\CMcal{C}(\CMcal{D}_{\CMcal{M}_2})$,
but since we have fixed the order of the MOLS by assigning the scale factors $\alpha_1$ and $\alpha_2$, the swapping of the Latin squares has the effect of interchanging the roles of $\alpha_1$ and $\alpha_2$. 
Hence, the codes $\CMcal{C}(\CMcal{D}_{\CMcal{M}_1})$ and $\CMcal{C}(\CMcal{D}_{\CMcal{M}_2})$ violate different elimination constraints although they are equivalent. In particular, the sets of constraints can be converted into each other by interchanging the roles of $\alpha_1$ and $\alpha_2$. 
Now, by considering the simulation results in Table~\ref{table_absorbingsets}, we can make the following observations:
\begin{itemize}[leftmargin=10pt,labelindent=5pt]
\item If an $(a,b)$ absorbing set has been detected frequently (say, more that 10 times), then it is fully.
\item If an $(a,b)$ absorbing set has been detected frequently, then a large fraction of the detected absorbing sets are elementary. 
\item The most harmful absorbing sets in this simulation are of size $(6,2)$ and $(8,0)$ which are fully and elementary. These absorbing sets can be avoided by satisfying C1-C3.
\item The conditions C4-C7 are not violated by any listed code such that there are no $(5,4)$ and $(6,0)$ absorbing sets. 
\item The codes represented by $[1, 4]$ and $[1, 10]$ seem to be the best choice among the codes with $q=13$, since they avoid the most harmful $(6,2)$ and $(8,0)$ absorbing sets and reduce the number of $(8,2)$ absorbing sets.
\item Some types of absorbing sets can not be avoided simultaneously. For instance, by avoiding the $(6,2)$ and $(8,0)$ absorbing sets, some $(10,0)$ absorbing sets occur and vice versa. A general explanation is that the count of all non-isomorphic configurations of size $t$ must sum up to $N\choose t$, where $N$ is the block length of the code, such that the avoidance of any configuration of size $t$ automatically increases the number other configurations of this size.
\end{itemize}

\subsection{Decoding Performance of Various $\mathscr{L}_{q}^m$-TD LDPC Codes}

In Fig.~\ref{Simu}, we again employed the standard SPA decoder with a maximum of 50 iterations per codeword to demonstrate the performance of our optimized TD LDPC codes compared to similar codes known in the literature. 
Firstly, we can observe that our methods of eliminating absorbing sets achieve an enormous performance gain between the TD LDPC codes with well and badly chosen scale factors. Secondly, the optimized TD LDPC codes outperform some known LDPC codes based on the random PEG algorithm \cite{Hu64}, on a construction by Zhang et al. \cite{Zhang2010} and on the Lattice construction by Vasic and Milenkovic \cite{Vasic04}.

\begin{figure*}[t!]
	\renewcommand{\baselinestretch}{1.2}
	\centering
	\subfloat[Performance of two $\mathscr{L}_{29}^2$-TD LDPC codes based on the MOLS $\{\CMcal{L}_{29}^{(1)}, \CMcal{L}_{29}^{(2)}\}$ (bad choice) and $\{\CMcal{L}_{29}^{(1)}, \CMcal{L}_{29}^{(12)}\}$ (good choice), respectively, compared to a random LDPC code constructed by the PEG algorithm \cite{Hu64}.]{
		\includegraphics[scale = 0.7, trim=5cm 8cm 5.5cm 8cm]{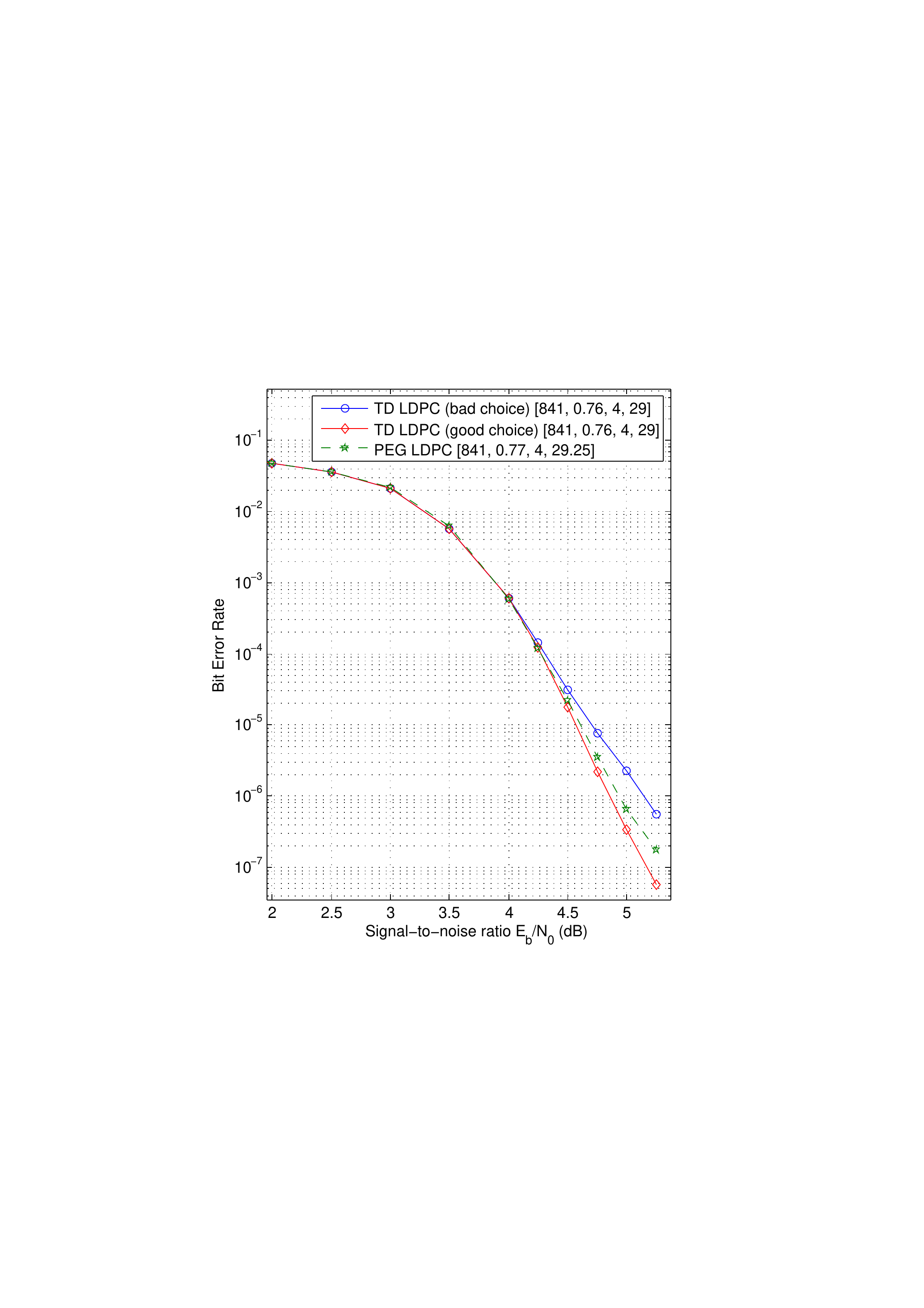}
	}
	\hspace{0.5cm}
	\subfloat[Performance of two $\mathscr{L}_{41}^2$-TD LDPC codes based on the MOLS $\{\CMcal{L}_{41}^{(1)}, \CMcal{L}_{41}^{(2)}\}$ (bad choice) and $\{\CMcal{L}_{41}^{(1)}, \CMcal{L}_{41}^{(9)}\}$ (good choice), respectively, compared to a random LDPC code constructed by the PEG algorithm and a quasi-cyclic LDPC code constructed by Zhang et al. \cite{Zhang2010}.]{
		\includegraphics[scale = 0.7, trim=5cm 8cm 5.5cm 8cm]{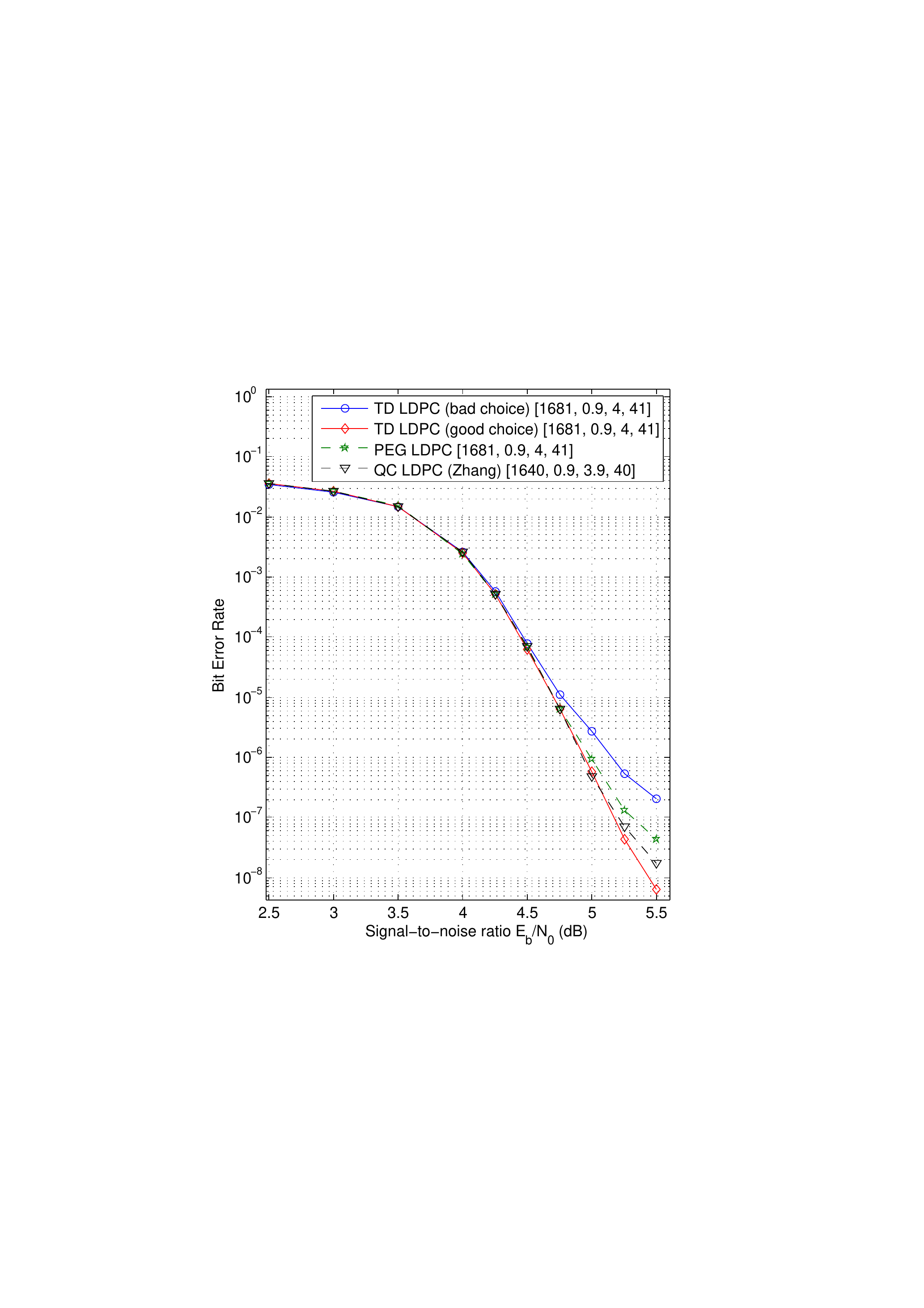}
	}
	\vspace{-0.6cm}
	\subfloat[Performance of two $\mathscr{L}_{47}^2$-TD LDPC codes based on the MOLS $\{\CMcal{L}_{47}^{(1)}, \CMcal{L}_{47}^{(2)}\}$ (bad choice) and $\{\CMcal{L}_{47}^{(1)}, \CMcal{L}_{47}^{(5)}\}$ (good choice), respectively, compared to a random LDPC code constructed by the PEG algorithm and an LDPC code based on the Lattice construction by Vasic and Milenkovic \cite{Vasic04}.]{
		\includegraphics[scale = 0.7, trim=5cm 8cm 5.5cm 8cm]{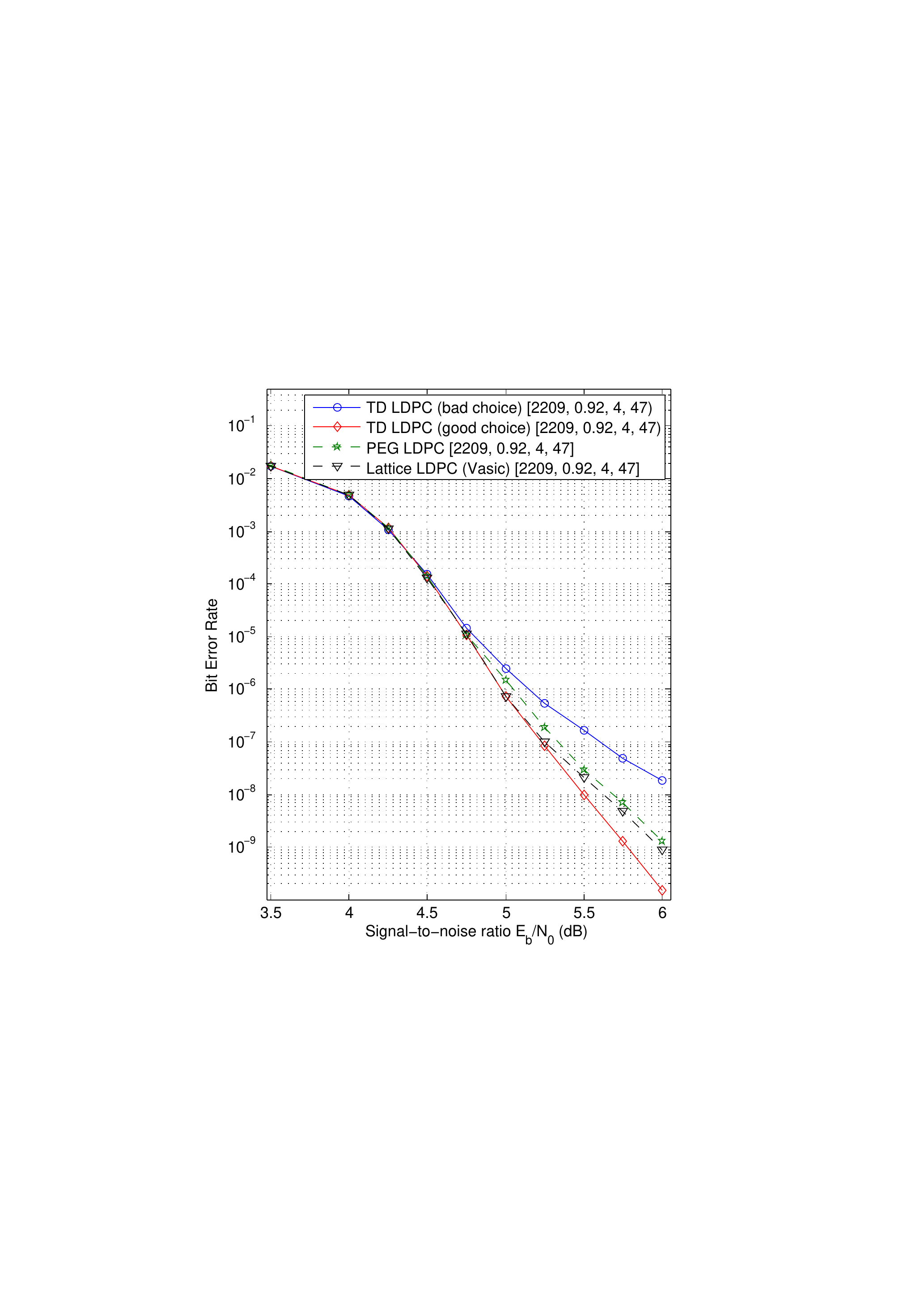}
	}
	\hspace{0.5cm}
	\subfloat[Performance of two $\mathscr{L}_{41}^3$-TD LDPC codes of column weight 5 based on the MOLS $\{\CMcal{L}_{41}^{(1)}, \CMcal{L}_{41}^{(2)} , \CMcal{L}_{41}^{(40)}\}$ (bad choice) and $\{\CMcal{L}_{41}^{(1)}, \CMcal{L}_{41}^{(5)}, \CMcal{L}_{41}^{(9)}\}$ (good choice). Note that the scale factors $\alpha_1=1,\alpha_2=5$ and $\alpha_3=9$ pairwise satisfy the constraints C1-C16 and C18-C25 as proposed in Subsection~\ref{strategy_for_higher_column_weighs}.]{
		\includegraphics[scale = 0.7, trim=5cm 8cm 5.5cm 8cm]{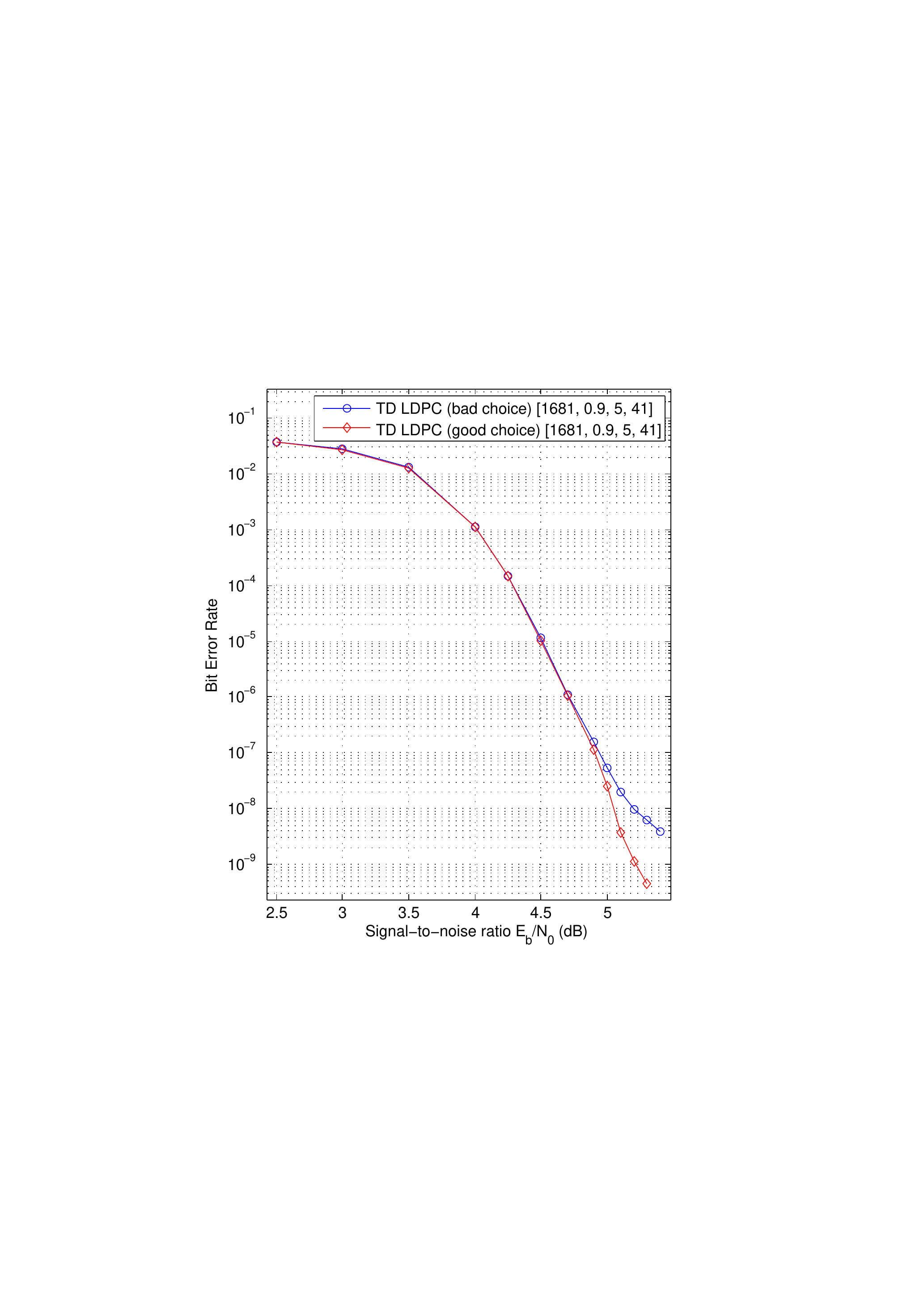}
	}
	\vspace{0.5cm}
	\caption{Decoding performance of various TD LDPC codes over the AWGN channel}
	\label{Simu}
\end{figure*}

\section{Conclusion}\label{conclusion}

In this paper we have demonstrated that the class of transversal designs based on cyclic-structured MOLS generates a wide range of quasi-cyclic LDPC codes with excellent decoding performances over the AWGN channel via SPA decoding, in particular with low error-floors. By investigating the presence or absence of absorbing sets in these codes, we have derived powerful constraints for the proper choice of code parameters in order to eliminate small and harmful absorbing sets. The presented constraints are derived for LDPC codes of column weight $3$ and $4$ but are also potentially beneficial for codes of higher column weights. Since absorbing sets are also known to be stable under bit-flipping decoders, the presented codes should also reveal excellent performances over the BSC via bit-flipping decoding.

\appendices

\section{} \label{proof:orthogonality_of_latin_squares}

\begin{proof}
We prove both directions by contraposition.

\subsubsection{If $\alpha_1 \beta_2 \neq \alpha_2 \beta_1$, then the Latin squares are orthogonal} We prove the contrapositive of this statement: Suppose that the Latin squares are \textbf{not} orthogonal, then there must be two cell positions $[x_1,y_1]$ and $[x_2,y_2]$ such that $\CMcal{L}^{(\alpha_1,\beta_1)}_q[x_1,y_1]=\CMcal{L}^{(\alpha_1,\beta_1)}_q[x_2,y_2]$ and $\CMcal{L}^{(\alpha_2,\beta_2)}_q[x_1,y_1]=\CMcal{L}^{(\alpha_2,\beta_2)}_q[x_2,y_2]$. It follows with Lemma~\ref{simple_structured_MOLS} that
(1) $\alpha_1(x_1-x_2)+\beta_1(y_1-y_2)=0$ and (2) $\alpha_2(x_1-x_2)+\beta_2(y_1-y_2)=0$. 
After multiplicating (1) with $(-\alpha_2)$, (2) with $\alpha_1$ and adding both results, we obtain $(y_1-y_2)(\alpha_1\beta_2-\alpha_2\beta_1)=0$. It follows that $y_1=y_2$ or $\alpha_1\beta_2=\alpha_2\beta_1$. The first condition can never be satisfied, since $y_1$ and $y_2$ represent two separate columns. Hence, the second condition must be satisfied. 

\subsubsection{If the Latin squares are orthogonal, then $\alpha_1 \beta_2 \neq \alpha_2 \beta_1$} Again, we prove the contrapositive of this statement: Suppose that $\alpha_1 \beta_2 = \alpha_2 \beta_1$. Let $[x_1,y_1]$ and $[x_2,y_2]$ be any two cell positions such that $\CMcal{L}^{(\alpha_1,\beta_1)}_q[x_1,y_1]=\CMcal{L}^{(\alpha_1,\beta_1)}_q[x_2,y_2]$. With Lemma~\ref{simple_structured_MOLS}, it follows that $\alpha_1(x_1-x_2)+\beta_1(y_1-y_2)=0$. By multiplicating with $\beta_2$ and replacing $\alpha_1 \beta_2$ with $\alpha_2 \beta_1$, we obtain $\beta_1(\alpha_2(x_1-x_2)+\beta_2(y_1-y_2))=0$. Since $\beta_1$ must be positive, it follows that $\alpha_2(x_1-x_2)+\beta_2(y_1-y_2)=0$. Hence, $\alpha_2 x_1 + \beta_2 y_1 = \alpha_2 x_2 + \beta_2 y_2$ and thus $\CMcal{L}^{(\alpha_2,\beta_2)}_q[x_1,y_1]=\CMcal{L}^{(\alpha_2,\beta_2)}_q[x_2,y_2]$. Consequently, the Latin squares can not be orthogonal. 
\end{proof}

\begin{IEEEbiography}[{\includegraphics[width=1in, height=1.25in,clip,keepaspectratio]{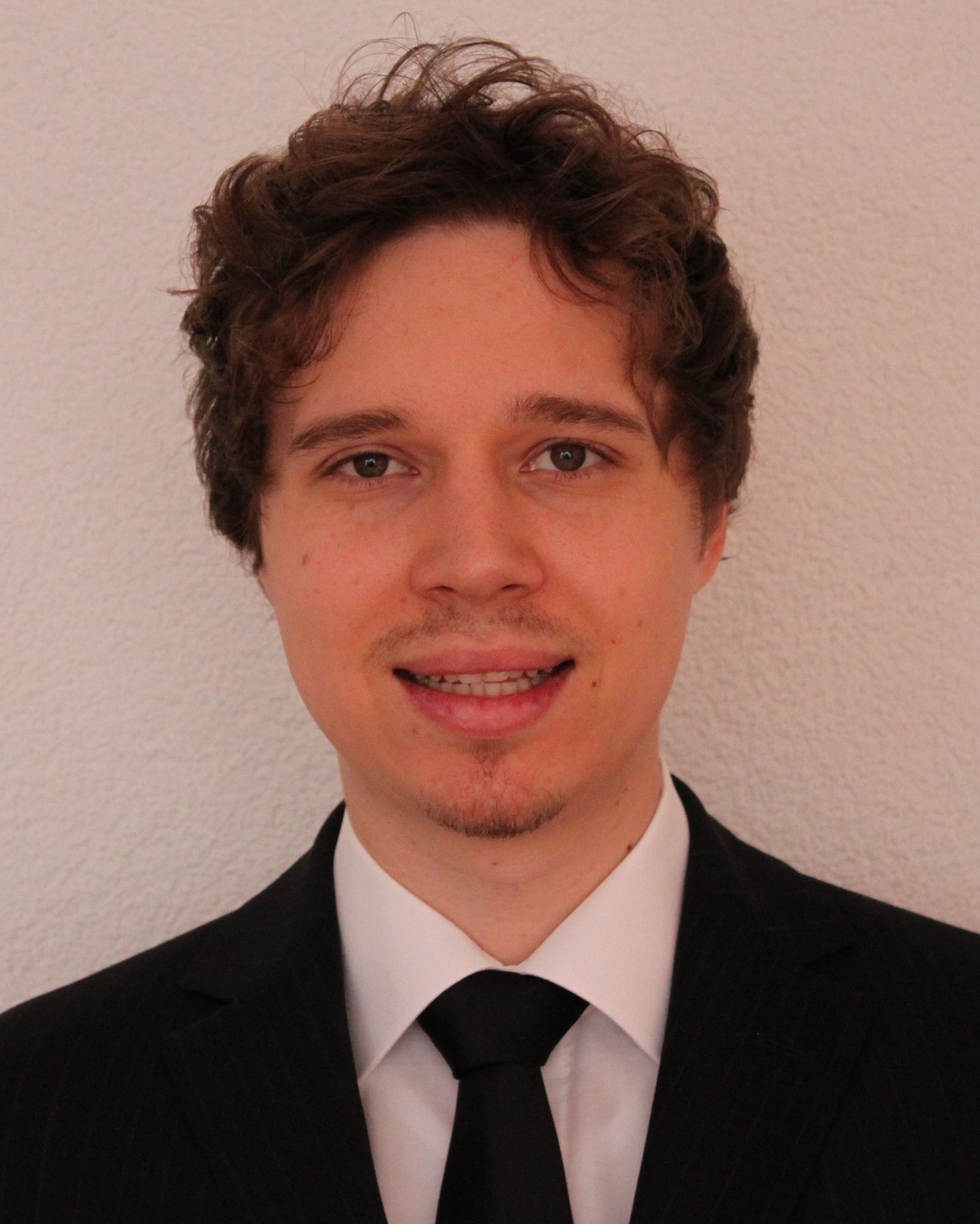}}]{Alexander Gruner} is a Ph.D. student in computer science at the Wilhelm Schickard Institute for Computer Science, University of T{\"u}bingen, Germany, where he is part of an interdisciplinary research training group in computer science and mathematics. He received the Diploma degree in computer science from the University of T{\"u}bingen in 2011. His research interests are in the field of coding and information theory with special emphasis on turbo-like codes, codes on graphs and iterative decoding.
\end{IEEEbiography}

\begin{IEEEbiography}[{\includegraphics[width=1in, height=1.25in,clip,keepaspectratio]{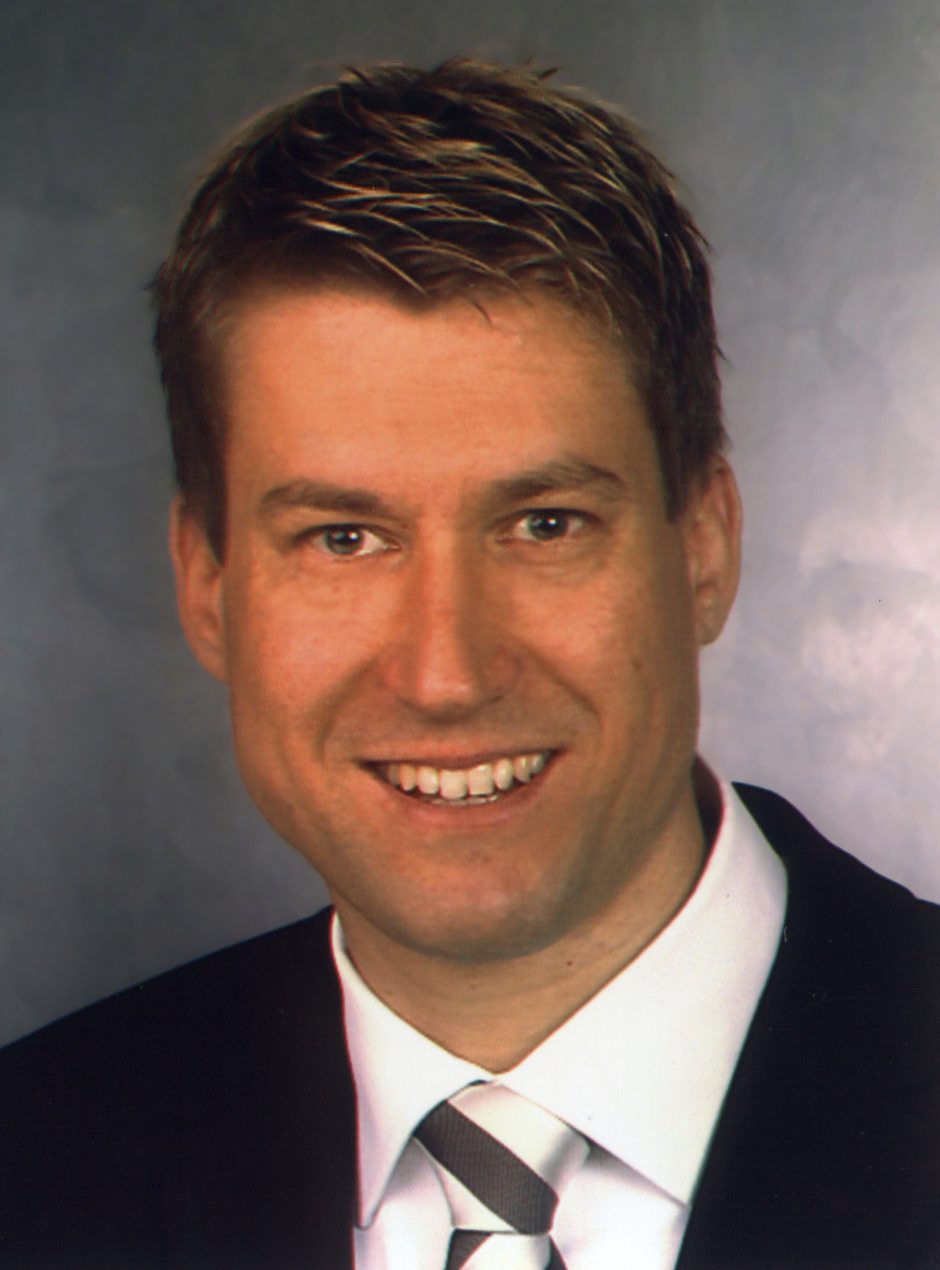}}]{Michael Huber} (M.'09) is currently a manager for data quality and business intelligence at Mercedes-Benz Bank AG, Stuttgart, and adjunct professor at the Wilhelm Schickard Institute for Computer Science, University of T\"ubingen, Germany. He was a Heisenberg fellow of the German Research Foundation (DFG) from 2008-12 and a visiting full professor in mathematics at Berlin Technical University from 2007-08. He obtained the Diploma, Ph.D. and Habilitation degrees in mathematics from the University of T\"ubingen in 1999, 2001 and 2006, respectively. He was awarded the 2008 Heinz Maier Leibnitz Prize by the DFG and the German Ministry of Education and Research (BMBF). He became a Fellow of the Institute of Combinatorics and Its Applications (ICA), Winnipeg, Canada, in 2009.

Prof. Huber's research interests are in the areas of coding and information theory, cryptography and information security, discrete mathematics and combinatorics, algorithms, and data analysis and visualization with applications to business and biological sciences. Among his publications in these areas are two books, Flag-transitive Steiner Designs (Birkh\"auser Verlag, Frontiers in Mathematics, 2009) and Combinatorial Designs for Authentication and Secrecy Codes (NOW Publishers, Foundations and Trends in Communications and Information Theory, 2010). He is a Co-Investigator of an interdisciplinary research training group in computer science and mathematics at the University of T\"ubingen.
\end{IEEEbiography}

\enlargethispage{-3in}

\end{document}